\documentclass[journal]{IEEEtran}
\ifCLASSINFOpdf
\else
\fi
\usepackage{graphicx}
\usepackage{amssymb}
\usepackage{amsbsy}
\usepackage{amsmath}
\usepackage{amsthm} 
\usepackage{comment}
\usepackage{caption}
\usepackage{subcaption}
\usepackage{color}
\usepackage{pgf}
\usepackage{float}
\hyphenation{op-tical net-works semi-conduc-tor}
\newtheorem{thm}{Theorem}
\newtheorem{cor}{Corollary}
\newtheorem{lem}{Lemma}
\newtheorem{remark}{Remark}
\newcounter{MYtempeqncnt}
\begin{document}
%
\title{Downlink and Uplink Decoupling in Two-Tier Heterogeneous Networks with Multi-Antenna Base Stations}
\author{\IEEEauthorblockN{Mudasar Bacha, Yueping Wu and Bruno Clerckx}\thanks{M. Bacha and B. Clerckx is with the Communication and Signal Processing Group, Department
of Electrical and Electronic Engineering, Imperial College London, London
SW7 2AZ, U.K. (e-mail: {m.bacha13, b.clerckx}@imperial.ac.uk). This work has been partially supported by the EPSRC of UK, under grant EP/N015312/1.} \thanks{Y. Wu was with the Department of Electrical and Electronic Engineering, Imperial College London, London SW7 2AZ, U.K., and is now with the Hong Kong Applied Science and Technology Research Institute (ASTRI) (e-mail: veronicawu@astri.org). } }
\maketitle
\begin{abstract}
In order to improve the uplink performance of future cellular networks, the idea to decouple the downlink (DL) and uplink (UL) association has recently been shown to provide significant gain in terms of both coverage and rate performance. However, all the work is limited to SISO network. Therefore, to study the gain provided by the DL and UL decoupling in multi-antenna base stations (BSs) setup, we study a two tier heterogeneous network consisting of multi-antenna BSs, and single antenna user equipments (UEs). We use maximal ratio combining (MRC) as a linear receiver at the BSs and using tools from stochastic geometry, we derive tractable expressions for both signal to interference ratio (SIR) coverage probability and rate coverage probability. We observe that as the disparity in the beamforming gain of both tiers increases, the gain in term of SIR coverage probability provided by the decoupled association over non-decoupled association decreases. We further observe that when there is asymmetry in the number of antennas of both tier, then we need further biasing towards femto-tier on the top of decoupled association to balance the load and get optimal rate coverage probability.
\end{abstract}
\IEEEpeerreviewmaketitle
\section{Introduction}
The demand for high data rates is ever-growing and it is projected that over the next decade a factor of a thousand times increase in wireless network capacity will be required \cite{qualcom}. In order to meet this challenge, a massive densification of the current wireless networks characterized by the dense deployment of low power and low cost small cell is required, which will convert the existing single-tier homogeneous networks into multi-tier heterogeneous networks (HetNets) \cite{geff_femto}. HetNets that consist of different types of base stations (Macro, Micro, Pico and Femto)  can not be operated in the same way as a single-tier homogeneous network (consisting of Macro Base stations only) and need some fundamental changes in the design and deployment to meet the high data rate demand.

Cellular networks have been designed mainly for downlink (DL) because initially the traffic was asymmetric (mostly in the downlink direction). However, with the increase in real-time applications, online social-networking, and video-calling the traffic in UL has greatly increased, which necessitates the need for the uplink (UL) optimization. In current cellular networks, cell association is based on downlink average received power, which is viable for homogeneous networks where the transmit power of all the base stations (BSs) is the same. However, in heterogeneous networks there is a big disparity in the transmit power of different BSs, and the association scheme based on downlink received power is highly inefficient, therefore, the idea of downlink and uplink decoupling (DUDe) has been proposed for 5G in \cite{geff_mag_7}-\cite{jeff_mag_dude}. 
\subsection{Related Work}
A simulation based study has been performed on two-tier live network where the UL association is based on minimum path-loss while the DL association is based on DL received power \cite{mischa1}.  This kind of association divides the users into three groups: users attached to macro base station (MBS) both in the DL and UL, users attached to femto base station (FBS) both in the DL and UL, and users attached to MBS in the DL and FBS in the UL.  The authors in \cite{mischa1} showed that the gain in UL throughput is quite high when the UL association is based on minimum path-loss. The gain comes from those users which are connected to MBS in the DL and FBS in the UL because they have better channel to the femto-cell and they create less interference to the macro-cell. A network consisting of macro-tier and femto-tier is studied using tools from stochastic geometry in \cite{katerina}, where the throughput gain due to decoupling has been shown. In \cite{mischa2}, the analytical results obtained from stochastic geometry-based model have been compared with the results obtained from simulation in \cite{mischa1}, and they found that both of them match with each other.  They also found that the association probability mainly depends on the density of the deployment and not on the process used to generate the deployment geometry. It has been shown in \cite{New_DUDe} that DUDe provides gain in term of system rate, spectrum efficiency, and energy efficiency. A joint study of DL and UL for $k$ tier SISO network has been performed in \cite{Geff_DUDE}, while considering a weighted path-loss association and UL power control.

Stochastic geometry has emerged as a powerful tool for the analysis of cellular networks after the seminal work of \cite{Geff_DL}. It has been shown that stochastic geometry-based models are equally accurate as  grid based models. In addition, they provide more tractability and their accuracy becomes better as the heterogeneity of the network increases. Most of the work, which considered stochastic geometry-based model mainly studied the DL performance of the HetNets. For instance, single input single output (SISO) HetNets have been studied in \cite{Han_Geff}, and \cite{offloading}, MIMO HetNets in \cite{yueping}, \cite{yueping2}, \cite{MIMO_hetnets1}, \cite{MIMO_hetnets_LB}, and \cite{MIMO_hetnets_assoc}. A complete survey can be found in \cite{hesham_survey} and the references therein.  However, only limited work has been carried out in UL because it is more involved due to UL power control and correlation among interferers. The UL power control is required because an interfering user may be closer to the BS than the scheduled user, which creates an additional source of randomness in the UL modeling. The correlation among interfering users comes due to the orthogonal channels assignment within a cell, which prohibits the use of the same channel in a cell. In other words, there is only one UE randomly located  within the coverage region of the BS, which transmits in a given resource block. Therefore, the interference does not originate from the the PPP distributed UEs but instead from Voronoi perturbed lattice process \cite{Geff_DUDE}. The exact interference characterization for which is not available \cite{Geff_DUDE}, and thus makes the UL analysis even more involved. 

An uplink model for the single tier network has been derived in \cite{UL_tier1}, which uses fractional power control (FPC) in the UL. A multi-tier UL performance has been studied in \cite{hisham} and \cite{Geff_DUDE}, where each tier differs only in terms of density, cutoff threshold, and transmit power.  In \cite{hisham} a truncated channel power inversion is used due to which mobile users suffer from truncation outage in addition to SINR outage. The performance gain of DUDe is only studied for SISO network and there is no work which studies the decoupled association in the MIMO network\footnote{\cite{marco} studies the UL performance in multi antennas BSs network, which was not available online at the initial submission of this paper. However, our analysis approach is significantly different than \cite{marco}. We explicitly take into account the beamforming gain in the cell association and use Fa$\grave{\text{a}}$ di Bruno's formula \cite{Faa_de_brunos} to find the high order derivative of the Laplace transform of the interference, whereas \cite{marco} does not consider beamforming gain in the cell association and  use Gil-Pelaez inversion theorem to avoid finding the higher order derivative.}. Therefore, in this work we consider multi-antenna BSs and we also consider UL biasing with the DUDe.    
\subsection{Contributions and Outcomes}
The main challenge in modeling the UL multi-antennas HetNets, in addition to the generic challenges discussed above, is to select an analytically tractable technique from the number of possible multi-antenna techniques. 
We consider maximal ratio combining (MRC) at the BSs and assume that the channel is perfectly known at the receiver. A receiver has knowledge about the channel between the transmitter and itself, but it does not have any knowledge about the interfering channel. Furthermore, we consider power control in the UL, which partially compensates for the path-loss  \cite{Geff_DUDE}, \cite{UL_tier1}. We consider Rayleigh fading in addition to path-loss\footnote{For the sake of simplicity, we do not consider shadowing in this work. Shadowing in similar setup can be found in \cite{Geff_DUDE} and \cite{marco}.}. 

We use a cell association technique with biasing, which can be used in any MIMO HetNets. This association completely decouples the DL and UL association, and is generic and simple. Cell biasing in the UL can be used to balance the load across the tiers. This association scheme is motivated by the technique used in \cite{katerina} for SISO HetNets. Due to the DUDe, users are divided into three disjoint groups as shown in Fig. \ref{system_model}; (I) users attached to the MBS both in the DL and UL, (II) users attached to the FBS both in the DL and UL, and (III) users attached to the MBS in the DL and FBS in the UL. The gain in the UL performance comes from the last kind of users because they have strong connection to the FBS (low path-loss) and they create less interference to the MBS (due to larger distance). 
\begin{figure}
	\centering
		\includegraphics[scale=0.6]{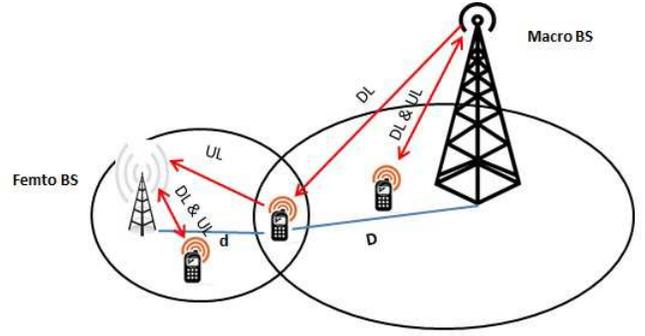}
		\caption{System Model}
	\label{system_model}
	\vspace{-1.7em}
\end{figure}

In this paper, we study both the SIR and rate coverage probability of a two tier network where the association is based on DL and UL decoupling. 
The novel and insightful findings of this paper are as follows:
\begin{itemize}
\item{} The gain in term of SIR coverage probability provided by the DUDe association over a no-DUDe association (association based on DL maximum received power averaged over fading) decreases as the difference in the number BS's antennas in the femto and macro-tier increases.  When the number of MBS antennas is larger than that of FBS, the association region of a MBS is enlarged due to the larger beamforming gain provided by the MBS. As a result of which UEs closer to the FBSs become associated with MBSs. These boundary UEs, which are connected to macro-tier, create strong interference at nearby FBSs when they transmit to their serving MBSs. On the other hand, when both tiers have the same beamforming gain, the coverage region of both tiers are the same and the interference created by the boundary UEs is not that strong. Thus, the DUDe gain over No-DUDe is high when both tier have the same beamforming.
\item{} It has been shown in \cite{mischa1, katerina, Geff_DUDE, New_DUDe} that DUDe association improves the load balance and provides fairness in the UL performance of different UEs. In \cite{Geff_DUDE} it is shown that in the UL the optimal rate coverage is provided by the minimum path-loss association. However, we observe that in the SIMO network DUDe association does not completely solve the load imbalance problem and the optimal rate coverage is not provided by the minimum path-loss association. In the SIMO network, this load imbalance problem comes from the different beamforming gain of the femto and macro-tier, therefore, we still need biasing towards femto-tier to balance the load. We show that when the beamforming gain of the macro-tier is high as compared to the femto-tier then biasing towards femto-tier improves the rate coverage probability. 
\end{itemize}
\begin{figure*}[!t]
\normalsize
\setcounter{equation}{0}
\begin{equation}
{\bf{Y}}_{K_0}= \sqrt{P_{0}X_{{K}}^{{\alpha_K\left(\eta-1\right)}}}{\bf{h}}_{{K}_0} s_{K_0} + \underbrace{\sum_{i\in\Phi{_{K}'}\backslash u_{0}} \sqrt{P_{0}X_{{K}_i}^{{\alpha}_K\eta}D_{K_i}^{-{\alpha}_K}}{\bf{h}}_{{K}_i}s_{K_i}}_{\text{interference from $K$th tier scheduled UEs}} + \underbrace{\sum_{q\in\Phi{_{J}'}} \sqrt{P_{0}X_{J_q}^{\alpha_J \eta}D_{J_q}^{-\alpha_K}}{{\bf{h}}_{{J}_q}}s_{J_q}}_{\text{interference from $J$th tier scheduled UEs}}  + {\bf{n}}	
\label{sig_ulm1}
\end{equation}
\begin{multline}
{Z}_{K_0}={\bf{h}}_{K_0}^H{\bf{Y}}_{K_0}= \sqrt{P_{0}X_{{K}}^{{\alpha_K\left(\eta-1\right)}}}\left\|{\bf{h}}_{{K}_0}\right\|^2 s_{K_0} +\sum_{i\in\Phi{_{K}'}\backslash u_{0}} \sqrt{P_{0}X_{{K}_i}^{{\alpha}_K\eta}D_{K_i}^{-{\alpha}_K}}{\bf{h}}_{{K}_0}^H{\bf{h}}_{{K}_i}s_{K_i} + \\ \sum_{q\in\Phi{_{J}'}} \sqrt{P_{0}X_{J_q}^{\alpha_J \eta}D_{J_q}^{-\alpha_K}}{\bf{h}}_{{K}_0}^H{{\bf{h}}_{{J}_q}}s_{J_q}  + {\bf{h}}_{{K}_0}^H{\bf{n}}
\label{sig_ulm2}
\end{multline}
\begin{equation}
		\gamma_{K_0}= \frac{P_{0}\left\|{\bf{h}}_{{K}_0}\right\|^2X_{{K}}^{{\alpha_K\left(\eta-1\right)}}}{\sum\limits_{i\in\Phi_{K}'\backslash u_0}P_{0}\left|\frac{{\bf{h}}_{{K}_0}^H {\bf{h}}_{{K}_i}}{\left\|{\bf{h}}_{{K}_0}\right\|}\right|^2 X_{{K}_i}^{{\alpha}_K\eta}D_{K_i}^{-{\alpha}_K}  +\sum\limits_{q\in\Phi_{J}'}P_{0}\left|\frac{{\bf{h}}_{{K}_0}^H {\bf{h}}_{{J}_q}}{\left\|{\bf{h}}_{{K}_0}\right\|}\right|^2X_{J_q}^{\alpha_J \eta}D_{J_q}^{-\alpha_K} + \sigma_n^2}
	\label{sinr}
	\end{equation}
\hrulefill
\vspace*{4pt}
\end{figure*}
The rest of the paper is organized as follows, in Section II, we present our system model and assumptions. In Section III, we derive the association probabilities and the distance distribution of a user to its serving BS.  Section IV is the main technical section, where we study the SIR coverage  and the rate coverage of the network. Section V presents  simulations and numerical results, while Section VI concludes the paper and provides further research directions.

The key notations used in this paper are given in Table 1.
\begin{table*}
\caption{List of Notations}
	\centering
		\begin{tabular}{|c|l|}
		\hline
		Notation & Description \\
		\hline 	\hline	
		$\Phi_K$, $\Phi_U$ & PPP of tier $K$ BSs, PPP of UEs  \\	\hline
			$\lambda_K$, $\lambda_U$ & density of tier $K$ BSs, density of UEs  \\ \hline
		$P_K$, $P_U$ & transmit power of each BS of the $K$th tier , transmit power of a UE \\ \hline
		$X_{K}$, $\alpha_K$ & distance between the \emph{typical} UE and the \emph{tagged} BS, path-loss exponent of $K$th tier \\ \hline
		$X_{K_i}$, $X_{J_q}$ & distance between an interfering UE of $K$th and $J$th tier, and their serving BSs respectively \\ 	\hline
		$D_{K_i}$, $D_{J_q}$ & distance between an interfering UE of $K$th and $J$th tier, and the tagged BS respectively\\ \hline
		$\mathcal{A}_K$, $N_K$ & association probability of a \emph{typical} UE to $K$th tier, number of antennas at a $K$th tier BS \\ \hline
		$B$ & bias factor, $B=\frac{B_F}{B_M}$, where $B_F$ and $B_M$ is biasing towards femto-tier and macro-tier respectively\\ \hline
		$\tau_K$, $\rho_K, \eta$ & SIR threshold and rate threshold of $K$th tier, UL power control fraction \\ \hline
		$\mathcal{C}$, $\mathcal{C}_K$ & SIR coverage probability of the network, SIR coverage probability of the $K$th tier\\ \hline
		$\mathcal{R}$, $\mathcal{R}_K$ & rate coverage probability of the network, rate coverage probability of the $K$th tier \\ \hline
		$\Omega_K, \bar{\Omega}_K$, $W$ &  load on a $K$th tier BS, average load on $K$th tier BS, bandwidth in Hz \\ \hline
		$\textbf{h}_{K_0}, \textbf{h}_{K_i}, \textbf{h}_{J_q}$ & complex channel gain between the tagged BS and typical UE, an interfering UE of $K$th and $J$th tier respectively \\ \hline
		\end{tabular}
	\label{table1}
\end{table*}
\section{System Model}
\subsection{Network Model}
We consider a heterogeneous network that consists of macro base stations (MBSs), femto base stations (FBSs) and user equipments (UEs). The location of MBSs, FBSs and UEs are modeled as 2-D independent homogeneous Poisson Point Processes (PPPs). Let $\Phi_M, \Phi_F$, and  $\Phi_U$ represent the PPPs for MBSs, FBSs and UEs respectively. Furthermore, let $\lambda_{M}, \lambda_{F}$, and $\lambda_U$ be the density of $\Phi_M, \Phi_F$, and  $\Phi_U$ respectively.  The transmit power of a MBS and FBS are represented by $P_M$ and  $P_F$ respectively, where $P_M>P_F$. We consider that MBSs have $N_M$ and FBSs have $N_F$ antennas and $N_M \geq N_F$, while UEs have single antenna. Throughout the system model, we only consider inter-cell interference  i.e., a BS schedules a single UE in a given resource block. The analysis is performed for a \textit{typical} user located at the origin and the BS serving this typical user is referred to as the\textit{ tagged} BS \cite{martin_book}.
\subsection{Uplink Power Control}
We consider a fractional power control in the uplink  \cite{power_control}, which partially compensates for path-loss. Let $X_K$ be the distance between a UE and its serving $K$th-tier BS. The UE transmits with $P_U=P_0 X_K^{\eta \alpha_K}$, where $\alpha_K$ is the path-loss exponent of the $K$th-tier, $P_0$ is the transmit power of the UE before applying the UL power control, and $0\leq\eta\leq1$ is the power control fraction. If $\eta=1$, the path-loss is completely inverted by the power control, and  if $\eta=0$, no channel inversion is applied and all UEs transmit with the same power. We do not consider maximum transmit power constraint for tractability of the analysis. However, the analysis can be extended to include the maximum power constraint similar to \cite{hisham} and \cite{marco}.
\subsection{Signal Model}
The received signal vector ${\bf{Y}}_{K_0}$ at a tagged BS when a typical UE $u_0$ is served by a $K$th tier BS having $N_K$ antennas is given by \eqref{sig_ulm1} (at the top of this page), 
where $\alpha_K$ is the path-loss exponent of $K$th tier $(\alpha_K>2)$; ${\bf{h}}_{{K}_i}=\left[h_{K_1}, h_{K_2}, \ldots h_{K_{N_K}}\right]^T$ is the complex channel gain and the magnitude of each $h_i$ follows Rayleigh distribution (we assume Rayleigh fading channel); $X_{J_q}$ represents the Euclidean distance between the $q$th  UE of the $J$th tier and its serving BS; $D_{J_q}$ is the Euclidean distance between the $q$th interfering UE of the $J$th tier to the tagged BS; $s_{J_q}$ is the signal transmitted by the $q$th UE of the $J$th tier having unit power; ${\bf{n}}=\left[n_1,n_2,\cdots,n_{N_K}\right]^T$ is the vector of complex additive white Gaussian noise at the tagged BS; $\Phi{_{K}'}$ and $\Phi{_{J}'}$ represent the point processes formed by the thinned PPP of the scheduled UEs of the $K$th and $J$th tier respectively.   Since, we assume multiple antennas' BS, we apply a receiver combiner $\bf{g}_0$ to $s_{K_0}$ of a typical UE. By using maximal ratio combining (MRC), ${\bf{g}}_0={\bf{h}}_{{M}_0}^H$, \eqref{sig_ulm1} can be written as in \eqref{sig_ulm2} (at the top of this page).
	Similarly, the SINR $\gamma_{K_0}$ at the tagged BS $K_0$ can be written as in \eqref{sinr}, available at the top of this page,
	where $\left\|{\bf{h}}_{{K}_0}\right\|^2 \sim \mathrm{Gamma}\left(N_K,1\right)$, whereas $\left|\frac{{\bf{h}}_{{K}_0}^H {\bf{h}}_{{K}_i}}{\left\|{\bf{h}}_{{K}_0}\right\|}\right|^2$ and $\left|\frac{{\bf{h}}_{{K}_0}^H {\bf{h}}_{{J}_q}}{\left\|{\bf{h}}_{{K}_0}\right\|}\right|^2$ both follow exponential distribution \cite{geff_adhoc}.
We assume high density for UEs such that each BS has at least one UE in its association region and UEs always have  data to transmit in the UL (saturated queues). Throughout the paper the $K$th tier will always be the serving tier of the typical UE while $J$th tier will be the interfering tier. We will use the terms UE and user, and typical user and random user  interchangeably.
\subsection{Cell Association}
The long term average received power (accounting for beamforming gain) at a typical UE when a $K$th tier BS transmits is $P_K N_K X_K^{-\alpha_K}$.  Similarly, in the UL, the long term average received power at a typical $K$th tier BS is $P_0 N_K X_K^{-\alpha_K}$ (before employing UL power control).
In the DL, a UE is associated to a BS from which it receives the maximum average power, while in the uplink it is associated to a BS that receives the maximum average power. In the UL, each UE has the same transmit power, so the association is actually related to the number of antennas and the path-loss.  Due to the cell association criterion, there are three sets of UEs: 1) UEs connected to the MBSs both in the DL and the UL, 2) UEs associated to the MBSs in the DL and FBSs in the UL, and 3) UEs connected to the FBSs both in the DL and the UL as shown in Fig. \ref{system_model}. In the DL, the load imbalance problem arises due to the high transmit power and beamforming gain of the MBS as compared to the FBS, whereas in the UL it is only due to the larger number of antennas at the MBS. In order to balance the load among the macro-tier and femto-tier in the UL, we use bias factor $B=\frac{B_F}{B_M}$, where $B_F$ and $B_M$ are the bias towards femto- and macro-tier respectively. A biasing $B>1$ offloads UEs from the macro-tier to the femto-tier, $B<1$ offloads UEs form the femto-tier to the macro-tier, and $B=1$ means no biasing. The association criterion is based on long-term average biased-received power and the UEs in different region can be written as:
\begin{itemize}
\item Case1- UEs connected to MBS both in the UL and DL:
\begin{multline} 
 \left\{\underbrace{\left(P_{M}N_M X_M^{-\alpha_M}>P_{F} N_F X_F^{-\alpha_F}\right)}_{\text{DL association rule}}\bigcap \right. \\ \left. \underbrace{\left(N_M B_M X_M^{-\alpha_M}> N_F B_F  X_F^{-\alpha_F}\right)}_{\text{UL association rule}}\right\}, \nonumber
\end{multline}
\item Case2- UEs connected to MBS in the DL and FBS in the UL:
\begin{multline}
 \left\{\left(P_{M} N_M X_M^{-\alpha_M} >   P_{F} N_F X_F^{-\alpha_F} \right) \bigcap \right. \\ \left.  \left(N_M B_M X_M^{-\alpha_M}\leq N_F B_F  X_F^{-\alpha_F}\right)\right\}, \nonumber
\end{multline} 
\item Case3- UEs connected to FBS both in the UL and DL:
\begin{multline}
\left\{\left(P_{M}N_M X_M^{-\alpha_M} \leq P_{F}  N_F X_F^{-\alpha_F}\right)  \bigcap \right. \\ \left. \left(N_M B_M X_M^{-\alpha_M }\leq N_F B_F X_F^{-\alpha_F}\right)\right\}. \nonumber
\end{multline}
\end{itemize}
\section{Preliminaries}
In this section, we find the association probabilities of UEs and the distance distribution of a UE to its serving BS. These will be required in the next section to find the SIR coverage and rate coverage of the network.
\subsection{Association Probability}
In this subsection, we find the association probabilities of the UEs.
\begin{lem}
The probability that a typical UE is associated with the MBS both in the UL and the DL is given by
\begin{equation}
	\mathbb{P}\hspace{-0.1em}\left(case 1\right)\hspace{-0.3em}=\hspace{-0.3em} 2\pi\lambda_M \hspace{-0.3em}\int_0^{\infty}\hspace{-0.7em}X_M e^{-\pi\left[\lambda_F \Upsilon_1^{2/\alpha_F}\hspace{-0.1em}\left(\hspace{-0.1em}X_M^{\alpha_M/\alpha_F}\hspace{-0.1em}\right)^2\hspace{-0.2em}+\lambda_M X_M^2\hspace{-0.1em}\right]}\hspace{-0.1em}\mathrm{d_{X_M}}\hspace{-0.1em},
	\label{prob_case1}
	\end{equation}
where for  $\frac{B_F}{B_M} \geq \frac{P_F}{P_M}$, $\Upsilon_1=\frac{B_F N_F}{B_M N_M}$ and for $\frac{B_F}{B_M} < \frac{P_F}{P_M}$, $\Upsilon_1=\frac{P_F N_F}{P_M N_M}$. The association probability is independent of the density of the UEs.
\label{lem_DLUL_marco}
\end{lem}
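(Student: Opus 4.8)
The plan is to reduce the joint event defining Case~1 to a single threshold condition on the femto distance $X_F$ given the macro distance $X_M$, and then integrate against the nearest-neighbour distance distributions of the two independent PPPs. First I would rewrite each association rule as a lower bound on $X_F$. The DL rule $P_M N_M X_M^{-\alpha_M} > P_F N_F X_F^{-\alpha_F}$ is equivalent to $X_F > \left(\frac{P_F N_F}{P_M N_M}\right)^{1/\alpha_F} X_M^{\alpha_M/\alpha_F}$, and the UL rule $N_M B_M X_M^{-\alpha_M} > N_F B_F X_F^{-\alpha_F}$ is equivalent to $X_F > \left(\frac{B_F N_F}{B_M N_M}\right)^{1/\alpha_F} X_M^{\alpha_M/\alpha_F}$. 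Since Case~1 demands both rules simultaneously, the binding constraint is $X_F > \Upsilon_1^{1/\alpha_F} X_M^{\alpha_M/\alpha_F}$, where $\Upsilon_1$ is the larger of the two leading coefficients. Both coefficients share the common factor $N_F/N_M$, so the comparison collapses to $\frac{B_F}{B_M}$ versus $\frac{P_F}{P_M}$: when $\frac{B_F}{B_M} \geq \frac{P_F}{P_M}$ the UL rule dominates and $\Upsilon_1 = \frac{B_F N_F}{B_M N_M}$, whereas when $\frac{B_F}{B_M} < \frac{P_F}{P_M}$ the DL rule dominates and $\Upsilon_1 = \frac{P_F N_F}{P_M N_M}$. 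This case split reproduces exactly the piecewise definition in the statement and is the one genuinely new step; the remainder is standard.

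Next I would condition on the distance $X_M$ to the nearest MBS. Because $\Phi_M$ is a homogeneous PPP of density $\lambda_M$, the void probability gives the nearest-neighbour density $f_{X_M}(x) = 2\pi\lambda_M x\, e^{-\pi\lambda_M x^2}$. Given $X_M = x$, the surviving requirement is $X_F > \Upsilon_1^{1/\alpha_F} x^{\alpha_M/\alpha_F}$, and since $\Phi_F$ is an independent PPP of density $\lambda_F$, its nearest-neighbour complementary CDF is $\mathbb{P}\!\left(X_F > r\right) = e^{-\pi\lambda_F r^2}$. Substituting $r = \Upsilon_1^{1/\alpha_F} x^{\alpha_M/\alpha_F}$ yields the conditional probability $\exp\!\left(-\pi\lambda_F \Upsilon_1^{2/\alpha_F} x^{2\alpha_M/\alpha_F}\right)$.

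Finally, I would use the independence of $\Phi_M$ and $\Phi_F$ to integrate over $x$, obtaining
\[
\mathbb{P}(\text{case }1) = \int_0^{\infty} e^{-\pi\lambda_F \Upsilon_1^{2/\alpha_F} x^{2\alpha_M/\alpha_F}}\, 2\pi\lambda_M x\, e^{-\pi\lambda_M x^2}\, dx,
\]
which, upon collecting both exponentials and writing $x^{2\alpha_M/\alpha_F} = \left(x^{\alpha_M/\alpha_F}\right)^2$, is precisely \eqref{prob_case1}. Independence from the UE density $\lambda_U$ is immediate, as the association event depends only on the positions of the MBS and FBS relative to the typical user, not on the other UEs. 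The only real difficulty is the coefficient comparison in the first step; once the binding threshold is identified, the rest is the routine PPP null-probability integral.
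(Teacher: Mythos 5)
Your proposal is correct and follows essentially the same route as the paper's own proof: identify the binding (larger) threshold on $X_F$ from the intersection of the DL and UL rules — which reduces to comparing $\frac{B_F}{B_M}$ with $\frac{P_F}{P_M}$ since the factor $N_F/N_M$ is common — then integrate the void-probability CCDF $e^{-\pi\lambda_F r^2}$ of the nearest FBS against the nearest-MBS density $2\pi\lambda_M x\, e^{-\pi\lambda_M x^2}$ using independence of $\Phi_M$ and $\Phi_F$. The only cosmetic difference is that the paper first introduces the array gains via $\mathbb{E}\{\|{\bf h}_M\|^2\}=N_M$, $\mathbb{E}\{\|{\bf h}_F\|^2\}=N_F$ before arriving at the same deterministic comparison you start from.
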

\begin{proof}
See Appendix A.
\end{proof}
\begin{lem}
The probability that a typical UE is associated with a MBS in the DL and a FBS in the UL is
\begin{multline}
\mathbb{P}\hspace{-0.1em}\left(case 2\right)\hspace{-0.3em}=\hspace{-0.3em} 2\pi\lambda_F\hspace{-0.3em}\left[\int_0^{\infty}\hspace{-0.7em}X_F e^{-\pi\left[\lambda_M \Upsilon_1'^{2/\alpha_M}\left(X_F^{\alpha_F/\alpha_M}\right)^2+\lambda_F X_F^2\right]}\right.  \\ \left.\mathrm{d_{X_F}} -  \int_0^{\infty}\hspace{-0.7em}X_F e^{-\pi\left[\lambda_M\Upsilon_2'^{2/\alpha_M}\left(X_F^{\alpha_F/\alpha_M}\right)^2+\lambda_F X_F^2\right]}\mathrm{d_{X_F}}\right],
\label{prob_case2}
\end{multline}
where when  $\frac{B_F}{B_M} \geq \frac{P_F}{P_M}$, then $\Upsilon_1'=\frac{B_M N_M}{B_F N_F}$ and $\Upsilon_2'=\frac{P_M N_M}{P_F N_F}$ and when $\frac{B_F}{B_M} < \frac{P_F}{P_M}$ then $\Upsilon_1'=\frac{P_M N_M}{P_F N_F}$ and $\Upsilon_2'=\frac{B_M N_M}{B_F N_F}$.
	\label{lem_DL_macro_UL_femto}
\end{lem}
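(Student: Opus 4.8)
The plan is to reduce the event to a statement about only two random distances---the distance $X_M$ to the nearest MBS and $X_F$ to the nearest FBS---and then exploit the independence of $\Phi_M$ and $\Phi_F$ together with the void probability of a PPP, exactly as in the proof of Lemma \ref{lem_DLUL_marco}. First I would observe that, since the (biased) long-term received power in each tier is strictly decreasing in the link distance and the factors $P_K, N_K, B_K$ are common to all BSs of tier $K$, the strongest DL and UL candidate in a given tier is always its \emph{nearest} BS. Hence both association rules defining Case 2 depend only on $X_M$ and $X_F$, whose contact distributions are $f_{X_K}(r)=2\pi\lambda_K r\,e^{-\pi\lambda_K r^2}$ with complementary CDF $\mathbb{P}(X_K>r)=e^{-\pi\lambda_K r^2}$, and which are independent because the two PPPs are independent.

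Next I would condition on $X_F=x$ and rewrite the two inequalities as a two-sided (annular) constraint on $X_M$. The DL rule $P_M N_M X_M^{-\alpha_M}>P_F N_F X_F^{-\alpha_F}$ becomes the upper bound $X_M<\left(\tfrac{P_M N_M}{P_F N_F}\right)^{1/\alpha_M} x^{\alpha_F/\alpha_M}$, while the UL rule $N_M B_M X_M^{-\alpha_M}\le N_F B_F X_F^{-\alpha_F}$ becomes the lower bound $X_M\ge\left(\tfrac{B_M N_M}{B_F N_F}\right)^{1/\alpha_M} x^{\alpha_F/\alpha_M}$. Writing the two coefficients as $\Upsilon_1'=\tfrac{B_M N_M}{B_F N_F}$ and $\Upsilon_2'=\tfrac{P_M N_M}{P_F N_F}$, the conditional probability of Case 2 is the probability that the nearest MBS lands in the annulus $\big[\,\Upsilon_1'^{1/\alpha_M} x^{\alpha_F/\alpha_M},\ \Upsilon_2'^{1/\alpha_M} x^{\alpha_F/\alpha_M}\,\big)$, which by the void probability equals $e^{-\pi\lambda_M \Upsilon_1'^{2/\alpha_M}\left(x^{\alpha_F/\alpha_M}\right)^2}-e^{-\pi\lambda_M \Upsilon_2'^{2/\alpha_M}\left(x^{\alpha_F/\alpha_M}\right)^2}$.

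Finally I would average over $X_F$ by integrating this difference against $f_{X_F}(x)$; pulling the common factor $2\pi\lambda_F$ out front and combining the $\lambda_F x^2$ term of the contact density with each exponential reproduces precisely the two integrals in \eqref{prob_case2}. The role of the split on $\tfrac{B_F}{B_M}$ versus $\tfrac{P_F}{P_M}$ is only to fix the orientation of the annulus: when $\tfrac{B_F}{B_M}\ge\tfrac{P_F}{P_M}$ one has $\Upsilon_1'\le\Upsilon_2'$, so the UL (lower) radius is the inner one and the bracketed difference is nonnegative; when $\tfrac{B_F}{B_M}<\tfrac{P_F}{P_M}$ the two radii interchange, and relabelling $\Upsilon_1',\Upsilon_2'$ as stated keeps the inner radius tied to $\Upsilon_1'$, so the identical integral expression remains valid and nonnegative.

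The only genuinely delicate point is the step that collapses the entire point process down to the pair $(X_M,X_F)$: one must argue that no farther BS can ever be the DL or UL server, which is immediate from monotonicity of the received power in distance within each tier, and that conditioning on $X_F$ leaves the law of $X_M$ untouched, which is exactly where the independence of the two PPPs is invoked. Everything after that is the elementary PPP void-probability computation and a routine deconditioning integral; notably, no interference or fading statistics enter, since the association is based on distance-only long-term averages.
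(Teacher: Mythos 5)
Your proof is correct and follows essentially the same route as the paper's: condition on the nearest-FBS distance $X_F$, reduce Case 2 to the event that the nearest MBS lies in an annulus whose inner radius comes from the UL (bias-based) rule and outer radius from the DL (power-based) rule, apply the 2-D PPP void probability, and decondition against $f_{X_F}$. This is exactly the "same steps as Lemma 1" argument the paper invokes, including the correct handling of the $\frac{B_F}{B_M}$ versus $\frac{P_F}{P_M}$ case split that merely swaps the roles of $\Upsilon_1'$ and $\Upsilon_2'$.
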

\begin{proof}
The proof follows similar steps as Lemma 1. 
\end{proof}
\begin{lem}
The probability that a typical UE associates with the FBS both in the DL and the UL can be written as
\begin{equation}
\mathbb{P}\hspace{-0.1em}\left(case 3\right)\hspace{-0.3em}=\hspace{-0.3em}2\pi\lambda_F\hspace{-0.3em}\int_0^{\infty}\hspace{-0.7em}X_F e^{-\pi\left[\hspace{-0.1em}\lambda_M\Upsilon_2'^{2/\alpha_M}\left(\hspace{-0.1em}X_F^{\alpha_F/\alpha_M}\hspace{-0.1em}\right)^2\hspace{-0.1em}+\lambda_F X_F^2\hspace{-0.1em}\right]}\hspace{-0.1em}\mathrm{d_{X_F}}\hspace{-0.1em},
\label{prob_case3}
\end{equation}
where when  $\frac{B_F}{B_M} \geq \frac{P_F}{P_M}$, then $\Upsilon_2'=\frac{P_M N_M}{P_F N_F}$ and when \\ $\frac{B_F}{B_M} < \frac{P_F}{P_M}$ then $\Upsilon_2'=\frac{B_M N_M}{B_F N_F}$.

\label{lem_DLUL_femto}
\end{lem}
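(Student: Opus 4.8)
The plan is to mimic the argument of Lemma~\ref{lem_DLUL_marco}, exploiting the fact that the two events defining Case~3 are \emph{both} lower bounds on the macro-distance $X_M$, so that their intersection collapses to a single constraint. First I would rewrite the two association rules in terms of $X_M$. The DL rule $P_M N_M X_M^{-\alpha_M}\le P_F N_F X_F^{-\alpha_F}$ is equivalent to $X_M\ge\left(\frac{P_M N_M}{P_F N_F}\right)^{1/\alpha_M}X_F^{\alpha_F/\alpha_M}$, while the UL rule $N_M B_M X_M^{-\alpha_M}\le N_F B_F X_F^{-\alpha_F}$ is equivalent to $X_M\ge\left(\frac{B_M N_M}{B_F N_F}\right)^{1/\alpha_M}X_F^{\alpha_F/\alpha_M}$. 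Since both are of the form $X_M\ge c\,X_F^{\alpha_F/\alpha_M}$, their intersection is simply $X_M\ge\Upsilon_2'^{1/\alpha_M}X_F^{\alpha_F/\alpha_M}$ with $\Upsilon_2'=\max\left\{\frac{P_M N_M}{P_F N_F},\frac{B_M N_M}{B_F N_F}\right\}$.

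The next step is to resolve this maximum into the two stated cases, which is the only place where any care is needed. When $\frac{B_F}{B_M}\ge\frac{P_F}{P_M}$ one has $\frac{B_M N_M}{B_F N_F}\le\frac{P_M N_M}{P_F N_F}$, so the DL rule is binding and $\Upsilon_2'=\frac{P_M N_M}{P_F N_F}$; the reverse inequality makes the UL rule binding and gives $\Upsilon_2'=\frac{B_M N_M}{B_F N_F}$. This is the same bookkeeping that separates the regimes in Lemmas~\ref{lem_DLUL_marco} and~\ref{lem_DL_macro_UL_femto}, so I expect it to be entirely mechanical here.

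Finally I would compute the probability by conditioning on the distance $X_F$ to the nearest FBS and using the independence of the two PPPs. The nearest-FBS distance has the standard $2$-D PPP density $f_{X_F}(X_F)=2\pi\lambda_F X_F e^{-\pi\lambda_F X_F^2}$, and given $X_F$ the probability that no MBS lies within the radius $\Upsilon_2'^{1/\alpha_M}X_F^{\alpha_F/\alpha_M}$ is the void probability $e^{-\pi\lambda_M\Upsilon_2'^{2/\alpha_M}\left(X_F^{\alpha_F/\alpha_M}\right)^2}$. Multiplying these and integrating over $X_F\in[0,\infty)$ yields \eqref{prob_case3} directly. Unlike Case~2, where the two rules bound $X_M$ from opposite sides and the answer appears as a difference of two integrals, here both rules push $X_M$ in the same direction, so a single integral suffices and the computation is routine once the binding constraint has been identified. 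The main (and only) obstacle is thus the correct selection of $\Upsilon_2'$, which I have verified matches the two stated regimes.
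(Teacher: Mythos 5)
Your proposal is correct and follows essentially the same route as the paper: the paper proves Lemma 3 "by the same steps as Lemma 1," namely identifying which of the two association rules is binding in each regime of $\frac{B_F}{B_M}$ versus $\frac{P_F}{P_M}$, then integrating the MBS void probability $e^{-\pi\lambda_M \Upsilon_2'^{2/\alpha_M}X_F^{2\alpha_F/\alpha_M}}$ against the nearest-FBS density $2\pi\lambda_F X_F e^{-\pi\lambda_F X_F^2}$. Your reduction of the intersection to $X_M\ge\Upsilon_2'^{1/\alpha_M}X_F^{\alpha_F/\alpha_M}$ with $\Upsilon_2'=\max\left\{\frac{P_M N_M}{P_F N_F},\frac{B_M N_M}{B_F N_F}\right\}$, and its resolution into the two stated regimes, matches the paper's bookkeeping exactly.
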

\begin{proof}
It can be easily proved by following the same steps as in Lemma 1.
\end{proof}
From Lemma 1, 2 and 3, the tier-association probabilities in the UL can be easily obtained.	Thus the probability that a typical UE is associated with $K$th-tier BS is given by
\begin{equation}
\mathcal{A}_K \hspace{-0.1em}=\hspace{-0.1em} 2\pi\lambda_K\hspace{-0.2em} \int_0^{\infty}\hspace{-0.5em}X_K e^{-\pi\left[\lambda_J\Upsilon^{2/\alpha_J}\left(X_K^{\alpha_K/\alpha_J}\right)^2+\lambda_K X_K^2\right]}\mathrm{d_{X_K}}
\label{assoc_prob_gen}
\end{equation}
where $K , J \in  \left\{M,F\right\}$ and $K\neq J$ and for $\frac{B_F}{B_M} \geq \frac{P_F}{P_M}$, $\Upsilon=\frac{B_J N_J}{B_K N_K}$ and for $\frac{B_F}{B_M} < \frac{P_F}{P_M}$, $\Upsilon=\frac{P_J N_J}{P_K N_K}$. It is important to mention that the condition $\frac{B_F}{B_M} < \frac{P_F}{P_M}$ in Lemma \ref{lem_DLUL_marco}, \ref{lem_DL_macro_UL_femto}, \ref{lem_DLUL_femto} and \eqref{assoc_prob_gen} is very unlikely to be true because usually we need to offload the UEs towards femto-tier instead of macro-tier. However, we specifically mentioned it so that the expression in Lemma \ref{lem_DLUL_marco}, \ref{lem_DL_macro_UL_femto}, \ref{lem_DLUL_femto} and  \eqref{assoc_prob_gen} holds for the entire range of the bias $B$.

For $\alpha_K=\alpha_J=\alpha$, \eqref{assoc_prob_gen} simplifies to  
\begin{equation}
\mathcal{A}_K = \frac{\lambda_K}{\lambda_K+\Upsilon^{2/\alpha}\lambda_J}.
\label{assoc_prob_simp}
\end{equation}
The probability that a typical UE associates to the $K$th tier increases with increasing the density of $K$th tier BS, or biasing towards $K$th tier or placing more antennas at $K$th tier BSs. However, the increase due to biasing and beamforming gain is not the dominant factor due to the presence of the exponent $2/\alpha$ where $\alpha >2$.
\subsection{Distance Distribution to the Serving BS}
In this subsection, we find the distance distribution of the scheduled user to the serving BS. 
\begin{lem} 
The distribution of the distance $X_K$ between the typical UE and the  tagged BS is
\begin{multline}
f_{X_K}\left(X_K\right)=\frac{2\pi\lambda_K}{\mathcal{A}_K}X_K \times  \\ \exp\hspace{-0.1em}\left\{\hspace{-0.1em}-\pi \hspace{-0.1em} \left(\hspace{-0.1em}\lambda_K X_K^2\hspace{-0.1em}+\hspace{-0.1em}\lambda_J\hspace{-0.1em}\left(\hspace{-0.1em}\frac{B_J N_J}{B_K N_K}\hspace{-0.1em}\right)^{2/\alpha_J}\hspace{-0.3em}X_K^{2\left(\alpha_K/\alpha_J\right)}\hspace{-0.1em}\right)\hspace{-0.1em}\right\},
\label{dist_distribution}
\end{multline}
where $K,J\in\left\{M,F\right\}$ , $K\neq J$, and $\mathcal{A}_K$ is the tier association probability.
\end{lem}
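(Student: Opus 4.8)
The plan is to obtain $f_{X_K}$ as a \emph{conditional} density: I would build the joint likelihood of the two events ``the nearest serving-tier BS sits at distance $X_K$'' and ``the typical UE is actually associated with the $K$th tier in the uplink,'' and then normalise by the tier-association probability $\mathcal{A}_K$ already computed in \eqref{assoc_prob_gen}. The two ingredients are the contact-distance law of a planar PPP and the translation of the uplink association rule into a geometric exclusion region for the competing tier; independence of the two PPPs lets me multiply them.

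First I would record the nearest-neighbour distance distribution for the $K$th-tier process. By the void probability of a homogeneous PPP of density $\lambda_K$, the probability that no $K$th-tier BS lies within distance $x$ of the origin is $e^{-\pi\lambda_K x^2}$; hence the nearest $K$th-tier BS lies at distance $X_K$ with density $2\pi\lambda_K X_K\, e^{-\pi\lambda_K X_K^2}$. This is the unconditioned candidate-distance law, and what remains is to intersect it with the event that this candidate wins the uplink association.

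Next I would convert the uplink association rule into an exclusion condition on the interfering tier. Conditioned on the serving $K$th-tier BS being at distance $X_K$, the typical UE is served in the uplink by the $K$th tier precisely when $N_K B_K X_K^{-\alpha_K}\geq N_J B_J X_J^{-\alpha_J}$, i.e.\ when the nearest $J$th-tier BS lies beyond $\left(\frac{B_J N_J}{B_K N_K}\right)^{1/\alpha_J}X_K^{\alpha_K/\alpha_J}$. Applying the void probability of the independent $J$th-tier PPP over a disc of this radius gives
\[
\exp\left\{-\pi\lambda_J\left(\frac{B_J N_J}{B_K N_K}\right)^{2/\alpha_J}X_K^{2(\alpha_K/\alpha_J)}\right\}.
\]
Multiplying this by the $K$th-tier contact density produces the joint (unnormalised) density of ``$X_K$ is the serving distance'' and ``the UE is $K$th-tier associated.''

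Finally, dividing the joint density by $\mathcal{A}_K$ yields the claimed $f_{X_K}$ directly. As a consistency check I would verify that integrating the joint density over $X_K\in(0,\infty)$ reproduces exactly \eqref{assoc_prob_gen} with $\Upsilon=\frac{B_J N_J}{B_K N_K}$, so that $f_{X_K}$ integrates to one. I do not expect a genuine obstacle: this is a standard conditional-density argument. The only points needing care are the exponent bookkeeping $2(\alpha_K/\alpha_J)$ when the tiers carry distinct path-loss exponents, and the implicit restriction to the regime $\frac{B_F}{B_M}\geq\frac{P_F}{P_M}$ under which the uplink metric $\frac{B_J N_J}{B_K N_K}$ (rather than $\frac{P_J N_J}{P_K N_K}$) governs the serving association, consistent with the $\Upsilon$ appearing in the lemma statement.
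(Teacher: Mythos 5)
Your proposal is correct and follows essentially the same route as the paper's Appendix B: the nearest-neighbour (void-probability) law of the $K$th-tier PPP, the translation of the uplink association rule into an exclusion disc of radius $\left(\frac{B_J N_J}{B_K N_K}\right)^{1/\alpha_J}X_K^{\alpha_K/\alpha_J}$ for the independent $J$th-tier PPP, and normalisation by $\mathcal{A}_K$. The only cosmetic difference is that the paper assembles these pieces into the joint CCDF $\mathbb{P}\left[X_K>x,\,n=K\right]$ and then differentiates, whereas you construct the joint density directly; the two are equivalent.
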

\begin{proof}
We provide the proof in Appendix B.
\end{proof}
%
%

\begin{remark}
It is important to mention that the distance distribution of an interfering UE to its serving BS is different from the distribution of the typical UE and the tagged BS because the distance between an interfering UE and its serving BS is upper bounded by a function of the distance between an interfering UE and the tagged BS. Specifically, let both the typical UE $u_0$ and an interfering UE $u_i$  belong to the $K$th tier and let the distance between $u_i$ and its serving BS be $X_{K_i}$, and $D_{K_i}$ be the distance between $u_i$ and the tagged BS then $0\leq X_{K_i}\leq D_{K_i}$. Similarly, if $u_i$ belongs to the $J$th tier (interfering tier) and the distance between $u_i$ and its serving BS is $X_{J_i}$ and the distance between $u_i$  and the tagged BS is $D_{J_i}$ then $0\leq X_{J_i}\leq \left(\frac{N_J B_J D_{J_i}^{\alpha_K}}{N_K B_K}\right)^{1/\alpha_J}$. 
\label{remark1}
\end{remark}
\begin{remark}
Based on the association rule in the previous section, we define the interference boundary here. For a UE who is associated to $K$th tier and the association distance is $X_K$, the interference boundary $I_{X_J}$ for the $J$th tier is given by $I_{X_J}=X_J>\left(\frac{N_J B_J}{N_K B_K}\right)^{1/\alpha_J}X_K^{\alpha_K/\alpha_J}$. 
\label{remark2}
\end{remark}
Thus, both Remark \ref{remark1} and Remark \ref{remark2} define the regions where the interfering UEs can be located and these regions come due to the association rule defined in the previous section. 
\section{SIR and Rate Coverage Probability}
\subsection{SIR Coverage Probability}
The UL SIR coverage probability can be defined as the probability that the instantaneous UL SIR at a randomly chosen BS is greater than some predefined threshold. 
The UL SIR coverage probability $\mathcal{C}$ of our system model can be written as
\begin{equation}
\mathcal{C} =\mathcal{C}_F \mathcal{A}_F + \mathcal{C}_M \mathcal{A}_M,    
\end{equation}
where $\mathcal{C}_F$, $\mathcal{C}_M$, $\mathcal{A}_F$, and $\mathcal{A}_M$ are the coverage and association probability of femto- and macro-tier respectively. The \textit{K}th-tier coverage probability $\mathcal{C}_K$ for a target SIR $\tau_K$  can be defined as
\begin{equation}
\mathcal{C}_K \triangleq \mathbb{E}_{X_K}\left[\mathbb{P}\left[\mathrm{SIR}_{X_K}>\tau_K\right]\right].
\end{equation}

In the UL, the interfering UEs do not constitute a homogeneous PPP due to the correlation among the interfering UEs. This correlation is due to the orthogonal channel assignment within a cell and can be better modeled by a soft-core process \cite{soft}. However, soft core processes are generally analytically not tractable \cite{hesham_survey}. Therefore, in most of the UL analysis they approximate it as a single homogeneous PPP (because in the UL the transmit power of the UEs are the same and the association regions of BSs form a Voronoi tessellation) \cite{katerina,mischa2,New_DUDe, hisham}. However, in our system model we can not approximate it as a single homogeneous PPP , due to biasing and different beamforming gain for femto and macro-tier (the association regions of BSs form a weighted Voronoi tessellation). Therefore, we approximate it as  two independent PPPs, i.e., femto-tier interfering UEs constitute one homogeneous PPP while macro-tier interfering UEs constitute another homogeneous PPP. However, we do not approximate the interfering UEs as PPPs in the entire 2-D plane but the regions defined in Remark \ref{remark1} and \ref{remark2}. The constraints of Remark \ref{remark1} and \ref{remark2} are taken into consideration in the rest of the analysis.   

The channel $\textbf{h}_{K_0}$ follows  $\mathrm{Gamma}\left(N_K,1\right)$, therefore, we need to find the higher order derivative of the Laplace transform of the interference, which is a common problem in MIMO transmission in the PPP network. In the literature, different techniques have been used to simplify the $n$th derivative of the Laplace transform. A Taylor expansion-based approximation is used in \cite{taylor_approx} while \cite{special_fun} uses special  functions  to approximate $n$th derivative of the Laplace transform. However, both of these techniques are applicable to ad-hoc networks only. For cellular network, a recursive-technique is used in \cite{latif}, but their final expression is still complicated, therefore, we use Fa$\grave{\text{a}}$ di Bruno's formula \cite{Faa_de_brunos} to find the $n$th derivative of the Laplace transform of the interference.
\begin{figure*}[!t]
\normalsize
\setcounter{MYtempeqncnt}{\value{equation}}
\setcounter{equation}{12}
\begin{multline}
 \mathcal{L}_I\left(s\right)=\exp\left(\frac{-2\pi s}{\alpha_K-2} \left[\lambda_K\int_0^\infty X_{K_i}^{2-\alpha_K\left(1-\eta\right)}{}_2 \mathrm{F}{}_1\left[1,1-\frac{2}{\alpha_K},2-\frac{2}{\alpha_K}; -s X_{K_i}^{-\alpha_K\left(1-\eta\right)}\right] \times \right. \right. \\  \\ \left. \left. f_{X_{K_i}}\hspace{-0.4em}\left(X_{K_i}\right)\mathrm{d_{X_{K_i}}} \hspace{-0.1em}+\hspace{-0.1em}
  \lambda_J \zeta^{1-2/\alpha_K}\hspace{-0.5em}\int_{0}^\infty \hspace{-1em} X_{J_q}^{2\alpha_J/\alpha_K-\alpha_J\left(1-\eta\right)}{}_2 \mathrm{F}{}_1\left[1,1-\frac{2}{\alpha_K},2-\frac{2}{\alpha_K}; -s \zeta X_{J_i}^{-\alpha_J\left(1-\eta\right)} \right] {f_{X_{J_q}}\left(X_{J_q}\right)\mathrm{d_{X_{J_q}}}}\right]\right).
\label{lap_thm1}
\end{multline}
\hrulefill
\vspace*{4pt}
\end{figure*}
\begin{figure*}[!t]
\normalsize
\setcounter{MYtempeqncnt}{\value{equation}}
\setcounter{equation}{14}
\begin{multline}
\mathcal{L}_I\left(s\right)=\exp\left(\frac{-2\pi s}{\alpha_K-2} \left[\lambda_K\int_0^\infty X_{K_i}^2 {}_2\mathrm{F}{}_1\left[1,1-\frac{2}{\alpha_K},2-\frac{2}{\alpha_K}; -s\right]f_{X_{K_i}}\left(X_{K_i}\right)\mathrm{d_{X_{K_i}}} \right. \right. + \\ \left. \left. {\lambda_J \zeta^{1-2/\alpha_K}\int_{0}^\infty X_{J_q}^{2\alpha_J/\alpha_K}{}_2\mathrm{F}{}_1\left[1,1-\frac{2}{\alpha_K},2-\frac{2}{\alpha_K}; {-s\zeta} \right]{f_{X_{J_q}}\left(X_{J_q}\right)\mathrm{d_{X_{J_q}}}}}\right]\right)
\label{cor2},
\end{multline}
\hrulefill
\vspace*{4pt}
\end{figure*}

We state the coverage probability of a random user associated to a  $K$th tier BS in the following theorem.
\begin{thm}
The UL coverage probability $\mathcal{C}_K$ of a typical user when the serving BS is a $K$th tier BS and the SIR threshold is $\tau_K$ for the system model in Section II is given by 
\setcounter{equation}{11}
\begin{multline}
\mathcal{C}_K\left(\tau_K\right) = \frac{2\pi \lambda_K}{\mathcal{A}_K}\hspace{-0.2em}\int_{0}^\infty \hspace{-1em}X_K\exp\left\{-\pi  \left(\lambda_K X_K^2+\lambda_J\left(\zeta\right)^{2/\alpha_J} \times \right. \right.  \\ \left. \left.  X_K^{2\left(\alpha_K/\alpha_J\right)}\right)\right\} \hspace{-0.5em}\sum_{n=0}^{N_K-1} \frac{s^n \left(-1\right)^n}{n!}  \mathcal{L}_I^n\left(s\right) \mathrm{d_{X_K}},
\label{thm1}
\end{multline}
where $s=\tau_K X_{{K}}^{{\alpha_K\left(1-\eta\right)}}$,  $\zeta =  \frac{N_J B_J}{N_K B_K}$, $ \mathcal{L}_I\left(s\right)$ is the Laplace transform of the interference given in \eqref{lap_thm1}, available at the top of this page.  
$\mathcal{L}_I^n\left(s\right)$ represents the $n$th derivative of the $\mathcal{L}_I\left(s\right)$ and to find it we utilize Fa$\grave{\text{a}}$ di Bruno's formula \cite{Faa_de_brunos} 
\begin{multline}
\mathcal{L}_I^n\left(s\right)= \sum\frac{n!}{b_1!b_2!\cdots b_n!} \mathcal{L}_I^k\left(s\right) \left(\frac{f'\left(s\right)}{1!}\right)^{b_1} \left(\frac{f''\left(s\right)}{2!}\right)^{b_2} \\ \cdots\left(\frac{f^n\left(s\right)}{n!}\right)^{b_n},  \nonumber
\end{multline}
where $f\left(s\right)$ is the term inside the exponential of \eqref{lap_thm1} and the summation is to be performed over all different solutions in non-negative integers $b_1, \cdots ,b_n$ of $b_1+2b_2+\cdots+nb_n=n$ and $k=b_1+\cdots+b_n.$ 
\end{thm}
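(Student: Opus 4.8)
The plan is to compute $\mathcal{C}_K$ directly from its definition $\mathcal{C}_K = \mathbb{E}_{X_K}[\mathbb{P}[\mathrm{SIR}_{X_K} > \tau_K]]$ by first conditioning on the serving distance $X_K$ and then averaging against the density $f_{X_K}$ from Lemma 4. Given $X_K$, and dropping the noise term for the SIR, the event $\{\gamma_{K_0} > \tau_K\}$ in \eqref{sinr} is equivalent to $\{\|\mathbf{h}_{K_0}\|^2 > s\,I\}$, where $I$ collects the two interference sums (the common $P_0$ cancels between numerator and denominator) and $s = \tau_K X_K^{\alpha_K(1-\eta)}$, exactly as defined in the theorem. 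Since $\|\mathbf{h}_{K_0}\|^2 \sim \mathrm{Gamma}(N_K,1)$, its complementary CDF gives $\mathbb{P}[\|\mathbf{h}_{K_0}\|^2 > s I \mid I] = e^{-sI}\sum_{n=0}^{N_K-1}(sI)^n/n!$. Taking the expectation over $I$ and using $\mathbb{E}[I^n e^{-sI}] = (-1)^n \mathcal{L}_I^n(s)$ produces precisely the finite sum $\sum_{n=0}^{N_K-1}\frac{s^n(-1)^n}{n!}\mathcal{L}_I^n(s)$ appearing in \eqref{thm1}, where $\mathcal{L}_I^n(s)$ is the $n$th derivative of the Laplace transform.

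The next step is to evaluate $\mathcal{L}_I(s) = \mathbb{E}[e^{-sI}]$. Invoking the two-independent-PPP approximation discussed before the theorem, the Laplace transform factorizes over the co-tier ($K$) and cross-tier ($J$) interferers. For each tier I would apply the PGFL of a PPP; since the projected fading terms $\left|\mathbf{h}_{K_0}^H\mathbf{h}_{K_i}/\|\mathbf{h}_{K_0}\|\right|^2$ are exponential, integrating them out first replaces each contribution by $\frac{s\,a}{1+s\,a}$ with $a = X_{K_i}^{\alpha_K\eta}D_{K_i}^{-\alpha_K}$. The crucial geometric input is that the radial integral over the interferer-to-tagged-BS distance $D$ must respect the association-induced limits of Remark \ref{remark1} and Remark \ref{remark2}: for a co-tier interferer $D$ runs from $X_{K_i}$ to $\infty$, while for a cross-tier interferer it runs from $(X_{J_q}^{\alpha_J}/\zeta)^{1/\alpha_K}$ to $\infty$ with $\zeta = N_J B_J/(N_K B_K)$. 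The substitution $t = D^{-\alpha_K}$ turns each integral into the form $\int_0^{z}\frac{t^{\beta-1}}{1+ct}\,dt = \frac{z^\beta}{\beta}\,{}_2\mathrm{F}_1(1,\beta;\beta+1;-cz)$ with $\beta = 1-2/\alpha_K$, and averaging over the interferer serving-distance densities $f_{X_{K_i}}, f_{X_{J_q}}$ then yields exactly \eqref{lap_thm1}; the $\zeta^{1-2/\alpha_K}$ prefactor and the $\zeta$ inside the cross-tier hypergeometric argument are by-products of the shifted lower limit.

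The main obstacle is extracting $\mathcal{L}_I^n(s)$, since $\mathcal{L}_I(s) = \exp(f(s))$ with $f$ a sum of two hypergeometric-integral terms whose $s$-dependence is buried inside the ${}_2\mathrm{F}_1$ arguments, so differentiating the integral representation $n$ times directly is intractable (this is precisely what forces prior cellular analyses to small antenna counts). Instead I would treat the derivative of the composition $e^{f(s)}$ with Fa\`a di Bruno's formula, which expresses $\mathcal{L}_I^n(s)$ as the Bell-polynomial sum over partitions $b_1+2b_2+\cdots+nb_n = n$ stated in the theorem, with $k = b_1+\cdots+b_n$ and the derivatives $f^{(j)}(s)$ computed once from the closed form of $f$. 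Substituting this expression into the finite sum and integrating against $f_{X_K}(X_K)$ from Lemma 4 (whose prefactor $2\pi\lambda_K/\mathcal{A}_K$ and exponent reproduce the outer integrand of \eqref{thm1}, with $\zeta = N_J B_J/(N_K B_K)$) completes the derivation.
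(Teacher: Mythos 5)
Your proposal is correct and, in its main machinery, matches the paper's Appendix C proof: condition on the serving distance from Lemma 4, use the $\mathrm{Gamma}\left(N_K,1\right)$ CCDF of $\left\|\mathbf{h}_{K_0}\right\|^2$ together with the identity $\mathbb{E}\left[I^n e^{-sI}\right]=\left(-1\right)^n\mathcal{L}_I^{\left(n\right)}\left(s\right)$ to obtain the finite sum in \eqref{thm1}, evaluate $\mathcal{L}_I$ via the PGFL of two independent PPPs after integrating out the exponential projected-fading terms, reduce the radial integrals to ${}_2\mathrm{F}{}_1$ form, and invoke Fa\`a di Bruno's formula for the higher-order derivatives. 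The one substantive divergence is where the association geometry enters the PGFL step. The paper's step $\left(d\right)$ of \eqref{laplace_SIR} places the exclusion on the interferer's position relative to the \emph{typical} UE's serving distance (co-tier positions beyond $X_K$, cross-tier beyond $\left(\zeta X_K^{\alpha_K}\right)^{1/\alpha_J}$, i.e.\ Remark \ref{remark2}), and only then truncates the interferer's own serving distance at the Remark \ref{remark1} bound in step $\left(e\right)$; you instead impose only the per-interferer constraints of Remark \ref{remark1} ($D\ge X_{K_i}$ for co-tier, $D\ge \left(X_{J_q}^{\alpha_J}/\zeta\right)^{1/\alpha_K}$ for cross-tier) and average the serving distance over $\left[0,\infty\right)$. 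This is not a defect: your integration region reproduces the stated \eqref{lap_thm1} exactly, including its $\int_0^\infty$ limits, whereas the paper's own step $\left(g\right)$ still carries finite upper limits ($X_K$ and $\zeta^{\left(\alpha_J+\alpha_K\right)/\alpha_J^2}X_K^{\alpha_K^2/\alpha_J^2}$) that are extended to infinity without comment in the theorem statement, and its interchange of integration order between steps $\left(e\right)$ and $\left(f\right)$ silently exchanges which exclusion is active. Since both regions are approximations of the intractable UL interference geometry, the two routes are equally defensible; the paper's version buys a guard zone around the tagged BS (no interfering UE closer, in the biased metric, than the served UE), while yours buys internal consistency with the final stated formula.
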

\begin{proof}
See Appendix C.
\end{proof}
We see that as the number of antennas $N_K$ increases, the summation term becomes larger, and after taking the $n$th derivative, the expression becomes very lengthy. Hence, numerically computing the coverage probability is computationally very expensive. 
\subsection{Special Cases}
The SIR coverage in Theorem 1 can be simplified for the following plausible special cases.
\begin{cor}
The $K$th tier SIR coverage probability without UL power control $\left(\eta=0\right)$ is given by \eqref{thm1} while the $\mathcal{L}_I\left(s\right)$ simplifies to 
\setcounter{equation}{13}
\begin{multline}
 \mathcal{L}_I\hspace{-0.2em}\left(\hspace{-0.2em}s\hspace{-0.2em}\right)\hspace{-0.2em}=\hspace{-0.2em}\exp\hspace{-0.2em}\left(\hspace{-0.2em}\frac{-2\pi \tau_K }{\alpha_K-2} \hspace{-0.2em}\left[\hspace{-0.2em}{\lambda_K s^{2/\alpha_K} {}_2\mathrm{F}{}_1\hspace{-0.3em}\left[\hspace{-0.2em}1\hspace{-0.1em},\hspace{-0.2em}1\hspace{-0.2em}-\hspace{-0.2em}\frac{2}{\alpha_K}\hspace{-0.1em},\hspace{-0.2em}2\hspace{-0.2em}-\hspace{-0.2em}\frac{2}{\alpha_K}\hspace{-0.1em};\hspace{-0.2em} - \tau_K \hspace{-0.2em}\right]} \right. \right. \hspace{-0.5em}+ \\ 
\left. \left. {\lambda_J \zeta^{\frac{2-\alpha_K}{\alpha_J}}\hspace{-0.2em} s^{\frac{2+\alpha_J-\alpha_K}{\alpha_J}}\hspace{-0.2em} {}_2\mathrm{F}{}_1\hspace{-0.3em}\left[\hspace{-0.2em}1\hspace{-0.1em},\hspace{-0.2em}1\hspace{-0.2em}-\hspace{-0.2em}\frac{2}{\alpha_K},\hspace{-0.2em}2\hspace{-0.2em}-\hspace{-0.2em}\frac{2}{\alpha_K}; \hspace{-0.2em}\frac{-\tau_K s^{1-\alpha_K/\alpha_J}}{\zeta^{\alpha_K / \alpha_J} }\hspace{-0.2em}\right]}\hspace{-0.2em}\right]\hspace{-0.2em}\right)\hspace{-0.2em},
\label{cor1}
\end{multline}
where $s= X_{{K}}^{\alpha_K}$ and the rest of the variables have the usual meaning. 
\end{cor}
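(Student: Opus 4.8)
The plan is to obtain Corollary 1 as the $\eta=0$ specialization of Theorem 1: the coverage expression \eqref{thm1} is left untouched, and only the Laplace transform \eqref{lap_thm1} is simplified. First I would set $\eta=0$, which switches off the fractional power control, so that every interfering UE transmits at the common power $P_0$ and the power-control factors $X_{K_i}^{\alpha_K\eta}$ and $X_{J_q}^{\alpha_J\eta}$ in \eqref{lap_thm1} collapse to unity. The interference at the tagged BS is then generated by two independent marked PPPs---one from the serving tier $K$ and one from the interfering tier $J$---each carrying i.i.d. unit-mean exponential marks, namely the MRC projection magnitudes $|\mathbf{h}_{K_0}^H\mathbf{h}_{K_i}|^2/\|\mathbf{h}_{K_0}\|^2$ and $|\mathbf{h}_{K_0}^H\mathbf{h}_{J_q}|^2/\|\mathbf{h}_{K_0}\|^2$. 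Averaging over these marks turns each factor into $(1+\tau_K X_K^{\alpha_K} r^{-\alpha_K})^{-1}$, where $r$ denotes the interferer-to-tagged-BS distance and $\tau_K X_K^{\alpha_K}$ is the value taken by the variable $s$ of Theorem 1 once $\eta=0$.

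Next I would apply the PPP Laplace functional tier by tier, so that each tier contributes a factor $\exp(-2\pi\lambda\int_a^\infty(1-(1+\tau_K X_K^{\alpha_K} r^{-\alpha_K})^{-1})\,r\,\mathrm{d}r)$. The only tier-dependent quantity is the lower limit $a$, the exclusion radius fixed by the cell-association rule of Section II: $a=X_K$ for the serving tier $K$ (the tagged-BS distance itself) and $a=\zeta^{1/\alpha_J}X_K^{\alpha_K/\alpha_J}$, with $\zeta=\frac{N_JB_J}{N_KB_K}$, for the interfering tier $J$ (the interference boundary of Remark \ref{remark2}). Each radial integral is the standard one, $\int_a^\infty(1-(1+c\,r^{-\alpha_K})^{-1})\,r\,\mathrm{d}r=\frac{c\,a^{2-\alpha_K}}{\alpha_K-2}\,{}_2F_1[1,1-\frac{2}{\alpha_K},2-\frac{2}{\alpha_K};-c\,a^{-\alpha_K}]$ with $c=\tau_K X_K^{\alpha_K}$, which follows from the integral representation of the Gauss hypergeometric function. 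For the serving tier, $a=X_K$ makes $c\,a^{-\alpha_K}=\tau_K$ and $c\,a^{2-\alpha_K}=\tau_K X_K^2$, so with the corollary's notation $s=X_K^{\alpha_K}$ (hence $X_K^2=s^{2/\alpha_K}$) the term reduces to $\lambda_K s^{2/\alpha_K}\,{}_2F_1[1,1-\frac{2}{\alpha_K},2-\frac{2}{\alpha_K};-\tau_K]$, which, after the common prefactor $\frac{-2\pi\tau_K}{\alpha_K-2}$, is exactly the first bracketed term of \eqref{cor1}.

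The step I expect to be the main obstacle is the cross-tier term, where the interferer-to-tagged-BS path loss is governed by the serving-tier exponent $\alpha_K$ (it enters as $a^{2-\alpha_K}$ and $a^{-\alpha_K}$) while the guard radius $a=\zeta^{1/\alpha_J}X_K^{\alpha_K/\alpha_J}$ carries the interfering-tier exponent $\alpha_J$ together with the beamforming/bias ratio $\zeta$. Threading $\alpha_K$ through this radius is what produces the fractional exponents of \eqref{cor1}: $c\,a^{2-\alpha_K}$ yields $\zeta^{(2-\alpha_K)/\alpha_J}$ and $s^{(2+\alpha_J-\alpha_K)/\alpha_J}$, while $c\,a^{-\alpha_K}$ turns the hypergeometric argument into $-\tau_K s^{1-\alpha_K/\alpha_J}\zeta^{-\alpha_K/\alpha_J}$. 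Keeping these exponents and the single remaining sign straight---as opposed to the comparatively routine serving-tier computation---is where the calculation is most error-prone, so I would cross-check by verifying that for $\alpha_K=\alpha_J$ and $\zeta=1$ the cross-tier term collapses to the same functional form as the serving-tier term.
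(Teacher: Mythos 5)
Your proposal is correct and follows essentially the same route as the paper: Corollary 1 is obtained by specializing the Appendix C derivation of Theorem 1 to $\eta=0$, where the power-control factors $X_{K_i}^{\alpha_K\eta}$, $X_{J_q}^{\alpha_J\eta}$ become unity, the inner expectations over the interferers' own-link distances drop out, and the two PGFL radial integrals (with exclusion radii $X_K$ and $\zeta^{1/\alpha_J}X_K^{\alpha_K/\alpha_J}$ per Remark 2) evaluate in closed form to the ${}_2\mathrm{F}_1$ terms of \eqref{cor1}. Your exponent bookkeeping for the cross-tier term, including $\zeta^{(2-\alpha_K)/\alpha_J}$, $s^{(2+\alpha_J-\alpha_K)/\alpha_J}$, and the argument $-\tau_K s^{1-\alpha_K/\alpha_J}\zeta^{-\alpha_K/\alpha_J}$, matches the stated result exactly.
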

The coverage probability can be found by evaluating just a single integral.
%
%
\begin{cor}
The $\mathcal{C}_K$ with full channel inversion $\left(\eta=1\right)$ is given by \eqref{thm1} while the $\mathcal{L}_I\left(s\right)$ simplifies to \eqref{cor2}, available at the top of this page,
where $s=\tau_K$ while the rest of the parameters remain the same.
\end{cor}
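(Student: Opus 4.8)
The plan is to obtain Corollary 2 by direct specialization of Theorem 1 to $\eta=1$; no new machinery is required beyond substituting the power-control fraction into the general Laplace transform \eqref{lap_thm1}. First I would recall that with full channel inversion the per-user transmit power is $P_U=P_0 X_K^{\alpha_K}$, so that the effective exponent governing the received power is $X_K^{\alpha_K\left(1-\eta\right)}=X_K^0=1$. Consequently the auxiliary variable defined in Theorem 1, $s=\tau_K X_{K}^{\alpha_K\left(1-\eta\right)}$, collapses to the constant $s=\tau_K$, independent of the serving distance $X_K$. This is the only structural change that $\eta=1$ induces on the outer integral in \eqref{thm1}.

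Second, I would track how $\eta=1$ (i.e. $1-\eta=0$) affects the two integrands inside $\mathcal{L}_I\left(s\right)$ in \eqref{lap_thm1}. The prefactor exponent of the serving-tier term becomes $X_{K_i}^{2-\alpha_K\left(1-\eta\right)}=X_{K_i}^{2}$, and that of the interfering-tier term becomes $X_{J_q}^{2\alpha_J/\alpha_K-\alpha_J\left(1-\eta\right)}=X_{J_q}^{2\alpha_J/\alpha_K}$. More importantly, the arguments of the Gauss hypergeometric functions lose their dependence on the integration variables: $-s X_{K_i}^{-\alpha_K\left(1-\eta\right)}\to -s$ and $-s\zeta X_{J_i}^{-\alpha_J\left(1-\eta\right)}\to -s\zeta$, since $X_{K_i}^{0}=X_{J_i}^{0}=1$. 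Thus each ${}_2\mathrm{F}_1[\cdot]$ factor becomes a constant with respect to its integration variable, which is precisely what renders the expression tractable (the hypergeometric term could even be pulled outside the integral).

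Substituting these simplifications back into \eqref{lap_thm1} yields exactly \eqref{cor2}, with $s=\tau_K$ as claimed; the coverage probability $\mathcal{C}_K$ itself is still evaluated through \eqref{thm1}, now using this simplified $\mathcal{L}_I\left(s\right)$ together with its derivatives in the Fa\`a di Bruno expansion. The only point requiring care—rather than a genuine obstacle—is the bookkeeping of exponents, so that the two distinct path-loss exponents $\alpha_K$ and $\alpha_J$ and the biasing/beamforming ratio $\zeta=\frac{N_J B_J}{N_K B_K}$ are propagated correctly. Since $\eta=1$ removes all $X_{K_i}$- and $X_{J_i}$-dependence from the hypergeometric arguments, no re-derivation of the interference Laplace functional is needed, and the corollary follows purely by substitution.
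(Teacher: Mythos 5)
Your proposal is correct and follows exactly the route the paper intends: Corollary~2 is stated without a separate proof precisely because it is the direct specialization of Theorem~1 at $\eta=1$, where $s=\tau_K X_K^{\alpha_K(1-\eta)}$ collapses to $\tau_K$ and the factors $X_{K_i}^{-\alpha_K(1-\eta)}$, $X_{J_i}^{-\alpha_J(1-\eta)}$ in \eqref{lap_thm1} reduce to unity, yielding \eqref{cor2}. Your exponent bookkeeping (the $X_{K_i}^{2}$ and $X_{J_q}^{2\alpha_J/\alpha_K}$ prefactors, the constant hypergeometric arguments $-s$ and $-s\zeta$) matches the paper's expression exactly, so nothing further is needed.
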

\begin{cor}
For $B_K N_K = B_J N_J$ and $\alpha_K=\alpha_J=\alpha$ the $\mathcal{C}_K$ is given by
\setcounter{equation}{15}
\begin{equation}
\mathcal{C}_K\hspace{-0.2em}\left(\hspace{-0.2em}\tau_K\hspace{-0.2em}\right) \hspace{-0.2em}= \hspace{-0.2em}\frac{2\pi \lambda_K}{\mathcal{A}_K}\hspace{-0.4em}\int_{0}^\infty \hspace{-0.6em}{X_K\hspace{-0.2em}\exp\left\{\hspace{-0.2em}-\pi \lambda X_K^2 \hspace{-0.2em}\right\}\hspace{-0.3em} \sum_{n=0}^{N_K-1}\hspace{-0.3em} \frac{s^n \hspace{-0.1em}\left(\hspace{-0.1em}-1\hspace{-0.1em}\right)^n}{n!}  \mathcal{L}_I^n\hspace{-0.2em}\left(\hspace{-0.2em}s\hspace{-0.2em}\right)\hspace{-0.2em} \mathrm{d_{X_K}}}\hspace{-0.1em},
\label{cor3}
\end{equation}
where $\lambda=\lambda_K+\lambda_J$ and $\mathcal{L}_I\left(s\right)$ is
\begin{multline}
\mathcal{L}_I\left(s\right)=\exp\left(\frac{-2\pi s \lambda} {\alpha-2} \int_0^\infty X_i^{2-\alpha\left(1-\eta\right)} \right. \times \\ \left. {}_2\mathrm{F}{}_1\left[1,1-\frac{2}{\alpha},2-\frac{2}{\alpha}; -sX_i^{-\alpha\left(1-\eta\right)}\right]f_{X_i}\left(X_i\right)\mathrm{d_{X_i}}\right).
\end{multline}
\end{cor}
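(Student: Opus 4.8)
The plan is to obtain Corollary 3 as a direct specialization of Theorem 1 under the two hypotheses $B_K N_K = B_J N_J$ and $\alpha_K=\alpha_J=\alpha$, tracking how every tier-dependent quantity collapses. First I would observe that $B_K N_K = B_J N_J$ forces the weighting constant of Theorem 1, $\zeta=\frac{N_J B_J}{N_K B_K}$, to equal unity, so that $\zeta^{2/\alpha_J}=1$ and $\zeta^{1-2/\alpha_K}=1$, while $\alpha_K=\alpha_J=\alpha$ makes the exponent $2(\alpha_K/\alpha_J)$ equal to $2$. Substituting these into the outer expression \eqref{thm1}, the argument of the exponential becomes $-\pi(\lambda_K X_K^2+\lambda_J X_K^2)=-\pi(\lambda_K+\lambda_J)X_K^2=-\pi\lambda X_K^2$, which is exactly the exponential appearing in \eqref{cor3}; the prefactor $\frac{2\pi\lambda_K}{\mathcal{A}_K}$, the definition $s=\tau_K X_K^{\alpha_K(1-\eta)}$, and the finite sum over the $n$th derivatives are untouched, so the outer shell of \eqref{cor3} is immediate once the Laplace transform has been reduced.

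The core of the argument is the simplification of the interference Laplace transform \eqref{lap_thm1}. With $\zeta=1$ and $\alpha_K=\alpha_J=\alpha$ the global prefactor $\frac{2\pi s}{\alpha_K-2}$ becomes $\frac{2\pi s}{\alpha-2}$; the $K$-tier integrand exponent $2-\alpha_K(1-\eta)$ and the $J$-tier integrand exponent $2\alpha_J/\alpha_K-\alpha_J(1-\eta)$ both reduce to the common value $2-\alpha(1-\eta)$; the three hypergeometric parameters reduce to $1,\,1-\tfrac{2}{\alpha},\,2-\tfrac{2}{\alpha}$; and the two hypergeometric arguments $-s X_{K_i}^{-\alpha_K(1-\eta)}$ and $-s\zeta X_{J_i}^{-\alpha_J(1-\eta)}$ collapse to the single form $-s X_i^{-\alpha(1-\eta)}$. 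Hence, apart from the density weight and the name of the dummy distance variable, the two integrals inside the bracket carry an identical integrand.

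The one step that demands justification rather than bookkeeping is that the two distance laws $f_{X_{K_i}}$ and $f_{X_{J_q}}$ coincide under these hypotheses. I would argue this by noting that when $B_K N_K=B_J N_J$ and the path-loss exponents are equal the biased-received-power association rule of Section II reduces to attaching to the nearest base station of the superposed process $\Phi_K\cup\Phi_J$, so the weighted Voronoi tessellation degenerates into the ordinary Voronoi tessellation of a single homogeneous PPP of density $\lambda=\lambda_K+\lambda_J$. In particular the interference boundary of Remark \ref{remark2} collapses to $X_J>X_K$ and the support constraints of Remark \ref{remark1} both become $0\le X_{\bullet}\le D_{\bullet}$, so an interfering UE of either tier obeys the same conditional distance-to-server distribution $f_{X_i}$. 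Factoring this common integral out of the bracket of \eqref{lap_thm1} then leaves the lumped density $\lambda_K+\lambda_J=\lambda$, producing precisely the single-integral Laplace transform of Corollary 3. The principal obstacle I anticipate is not algebraic but lies in stating this merging argument carefully enough that the replacement of the two distinct distance laws by a single $f_{X_i}$ is rigorous rather than merely asserted; everything else is routine substitution.
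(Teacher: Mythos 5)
Your proposal is correct and follows essentially the same route as the paper, which states Corollary 3 without an explicit proof precisely because it is the direct specialization of Theorem 1 that you carry out: with $B_K N_K = B_J N_J$ and $\alpha_K=\alpha_J=\alpha$ one gets $\zeta=1$, the association exponential collapses to $e^{-\pi\lambda X_K^2}$, and the two integrals in \eqref{lap_thm1} acquire identical integrands so the densities lump into $\lambda=\lambda_K+\lambda_J$. Your extra step of justifying that $f_{X_{K_i}}$ and $f_{X_{J_q}}$ coincide (since the weighted Voronoi tessellation degenerates to nearest-BS association on the superposed PPP, making the constraints of Remarks \ref{remark1} and \ref{remark2} symmetric across tiers) is exactly the point the paper leaves implicit, and it is the correct justification for that merging.
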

The coverage probability behaves as if the interference is from a single tier network with density $\lambda=\lambda_K+\lambda_J$.
\begin{cor}
For $N_K=N_J$, $B_K=B_J$, $\alpha_K=\alpha_J=\alpha$, $\tau_K=\tau_J=\tau$ and $\lambda_K=\lambda_J=\lambda$ then the coverage probability is given by
\begin{multline}
\mathcal{C}=\mathcal{C}_K=\mathcal{C}_J=\frac{2\pi \lambda}{\mathcal{A}}\int_{0}^\infty X_K\exp\left\{-2 \pi \lambda X_K^2 \right\} \times \\ \sum_{n=0}^{N_K-1} \frac{s^n \left(-1\right)^n}{n!}  \mathcal{L}_I^n\left(s\right) \mathrm{d_{X_K}},
\label{cor4}
\end{multline}
where $\mathcal{A} = \mathcal{A}_K=\mathcal{A}_J$ and $\mathcal{L}_I\left(s\right)$ is 
\begin{multline}
\mathcal{L}_I\left(s\right)=\exp\left(\frac{-4\pi s \lambda} {\alpha-2} \int_0^\infty X_i^{2-\alpha\left(1-\eta\right)}  \times \right. \\ \left. {}_2\mathrm{F}{}_1\left[1,1-\frac{2}{\alpha},2-\frac{2}{\alpha}; -sX_i^{-\alpha\left(1-\eta\right)}\right]f_{X_i}\left(X_i\right)\mathrm{d_{X_i}}\right).
\end{multline}
\end{cor}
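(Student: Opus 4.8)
The plan is to obtain \eqref{cor4} as the fully symmetric specialization of Corollary 3, which I may assume. First I would verify that the hypotheses of Corollary 3 hold: since $N_K=N_J$ and $B_K=B_J$ we have $B_K N_K = B_J N_J$ and the combined bias/beamforming ratio $\zeta=N_J B_J/(N_K B_K)=1$, while $\alpha_K=\alpha_J=\alpha$ is assumed outright. Hence the $K$-tier coverage is given by the expression \eqref{cor3}, in which the density appearing both in the kernel $\exp\{-\pi\lambda X_K^2\}$ and in the Laplace-transform exponent $-2\pi s\lambda/(\alpha-2)$ is the \emph{combined} density $\lambda_K+\lambda_J$.

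Next I would impose the remaining hypothesis $\lambda_K=\lambda_J=\lambda$. The combined density becomes $\lambda_K+\lambda_J=2\lambda$, so the kernel exponent turns into $-2\pi\lambda X_K^2$ and the Laplace coefficient into $-4\pi s\lambda/(\alpha-2)$, while the prefactor $2\pi\lambda_K/\mathcal{A}_K$ becomes $2\pi\lambda/\mathcal{A}$. Because $\zeta=1$, the interfering-tier integral in $\mathcal{L}_I(s)$ collapses onto the serving-tier integral, leaving the single common form written with the interferer-distance density $f_{X_i}$. These substitutions reproduce verbatim the two displayed expressions for $\mathcal{C}_K$ and $\mathcal{L}_I(s)$ in the statement.

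It then remains to identify $\mathcal{C}$ with $\mathcal{C}_K$ and $\mathcal{C}_J$. Since every tier parameter is now identical across the two tiers, the tiers are statistically interchangeable, so $\mathcal{C}_K=\mathcal{C}_J$; moreover \eqref{assoc_prob_simp} with $\Upsilon=1$ gives $\mathcal{A}_K=\mathcal{A}_J=1/2$, so that $\mathcal{A}_K+\mathcal{A}_J=1$. Substituting into the total-probability decomposition $\mathcal{C}=\mathcal{C}_F\mathcal{A}_F+\mathcal{C}_M\mathcal{A}_M$ yields $\mathcal{C}=\mathcal{C}_K(\mathcal{A}_K+\mathcal{A}_J)=\mathcal{C}_K=\mathcal{C}_J$, completing the argument.

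I expect no genuine obstacle, since this is a pure specialization rather than a new computation. The only point requiring care is the density relabelling: one must track the factor of two that distinguishes the per-tier density $\lambda$ used in \eqref{cor4} from the summed density $\lambda_K+\lambda_J$ that plays the role of $\lambda$ in Corollary 3, and confirm that the $\zeta=1$ merging of the two interference integrals is legitimate, which follows directly from how $\mathcal{L}_I(s)$ is assembled in Theorem 1 and Corollary 3.
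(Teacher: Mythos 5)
Your proposal is correct and matches the paper's (implicit) argument: the paper states Corollary 4 without a separate proof, treating it exactly as you do --- a direct specialization in which $B_KN_K=B_JN_J$ and $\alpha_K=\alpha_J$ put you in the setting of Corollary 3 with combined density $\lambda_K+\lambda_J$, and the extra symmetry $\lambda_K=\lambda_J$, $\tau_K=\tau_J$ forces $\mathcal{C}_K=\mathcal{C}_J$ and hence $\mathcal{C}=\mathcal{C}_K$ via $\mathcal{A}_K+\mathcal{A}_J=1$. Your care over the factor of two (per-tier $\lambda$ in Corollary 4 versus summed density in Corollary 3) and over the legitimacy of merging the two interference integrals when $\zeta=1$ is exactly the bookkeeping the paper's statement presupposes.
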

The network coverage probability $\mathcal{C}$ becomes equal to the tier coverage probability $\mathcal{C}_K$, $\mathcal{C}_J$.
\begin{cor}
For $\eta=0$, $B_K N_K = B_J N_J$, $\alpha_K=\alpha_J=\alpha$ the $\mathcal{C}_K$ is given by \eqref{cor3} while the $\mathcal{L}_I\left(s\right)$ simplifies to
\begin{equation}
\mathcal{L}_I\left(s\right)\hspace{-0.2em}=\hspace{-0.2em}\exp\hspace{-0.2em}\left(\hspace{-0.2em}\frac{-2\pi \tau_K s^{2/\alpha} \lambda}{\alpha-2}  {}_2\mathrm{F}{}_1\left[1,1-\frac{2}{\alpha},2-\frac{2}{\alpha}; -\tau_K \right]\hspace{-0.2em}\right)\hspace{-0.2em},
\label{cor5}
\end{equation} 
where  $s= X_K^{\alpha}$ and  $\lambda=\lambda_K+\lambda_J$.
\end{cor}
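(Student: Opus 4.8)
The plan is to obtain \eqref{cor5} as the common specialization of Corollary~1 and Corollary~3, whose hypotheses taken together are exactly those of Corollary~5. Since Corollary~1 already supplies the closed-form Laplace transform for the no--power-control case $\eta=0$, and Corollary~3 already handles the collapse of the two tiers under $B_K N_K=B_J N_J$ and $\alpha_K=\alpha_J=\alpha$, the only genuinely new step is to verify that the two interference contributions in \eqref{cor1} coincide once the extra symmetry is imposed. I would first record that $B_K N_K=B_J N_J$ is equivalent to $\zeta=\frac{N_J B_J}{N_K B_K}=1$, then set $\alpha_K=\alpha_J=\alpha$, keeping $s=X_K^{\alpha}$ exactly as in Corollary~1.

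The key algebraic step is to simplify the $J$-tier term of \eqref{cor1} under $\zeta=1$ and $\alpha_K=\alpha_J=\alpha$, which I would check factor by factor: the prefactor $\zeta^{(2-\alpha_K)/\alpha_J}\to 1$; the power of $s$ becomes $s^{(2+\alpha_J-\alpha_K)/\alpha_J}=s^{2/\alpha}$, matching the $K$-tier power; and the hypergeometric argument $\frac{-\tau_K s^{\,1-\alpha_K/\alpha_J}}{\zeta^{\alpha_K/\alpha_J}}\to -\tau_K$, again matching the $K$-tier argument. Hence both terms reduce to $s^{2/\alpha}\,{}_2\mathrm{F}{}_1\!\left[1,1-\tfrac{2}{\alpha},2-\tfrac{2}{\alpha};-\tau_K\right]$, weighted by $\lambda_K$ and $\lambda_J$ respectively, so their sum carries the combined density $\lambda=\lambda_K+\lambda_J$. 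Substituting back into the exponential of \eqref{cor1} yields precisely the expression in \eqref{cor5}.

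For the outer envelope (the $X_K$-integral and the finite sum over $n$), I would simply invoke Corollary~3, whose hypotheses are the subset of Corollary~5's obtained by dropping $\eta=0$; its coverage formula \eqref{cor3} then applies verbatim. Concretely, under $\alpha_K=\alpha_J=\alpha$ and $\zeta=1$ the exponent of the serving-distance density in \eqref{thm1} collapses, since $\lambda_K X_K^2+\lambda_J\zeta^{2/\alpha_J}X_K^{2\alpha_K/\alpha_J}=(\lambda_K+\lambda_J)X_K^2=\lambda X_K^2$, which is exactly the weight appearing in \eqref{cor3}. Because $\eta=0$ enters only through $\mathcal{L}_I$, combining the collapsed envelope of Corollary~3 with the closed-form $\mathcal{L}_I$ derived above delivers the stated result.

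The only point requiring care is the term-merging of the second paragraph: one must confirm that the exponent on $s$ and the argument of ${}_2\mathrm{F}{}_1$ in the $J$-tier term genuinely reduce to those of the $K$-tier term rather than merely resembling them; this is purely algebraic and presents no real obstacle. Conceptually, the merge reflects that equal effective antenna/bias gains ($\zeta=1$) turn the weighted Voronoi association into an ordinary Voronoi tessellation, so the interfering UEs of both tiers behave as a single homogeneous PPP of density $\lambda$, while $\eta=0$ removes any residual dependence of the interference on the interferers' serving-BS distances\,---\,which is precisely what lets the inner integral already evaluated in Corollary~1 collapse to a single ${}_2\mathrm{F}{}_1$ at $-\tau_K$.
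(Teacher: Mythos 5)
Your proposal is correct and takes essentially the same route as the paper, which states this corollary without a separate proof precisely because it is the immediate specialization you carry out: the envelope of \eqref{thm1} collapses to \eqref{cor3} since $\lambda_K X_K^2+\lambda_J\zeta^{2/\alpha_J}X_K^{2\alpha_K/\alpha_J}=(\lambda_K+\lambda_J)X_K^2$ under $\zeta=1$ and $\alpha_K=\alpha_J=\alpha$, and the two terms of \eqref{cor1} merge into a single one with density $\lambda=\lambda_K+\lambda_J$. Your factor-by-factor verification that the $J$-tier prefactor, the power of $s$, and the hypergeometric argument all reduce to their $K$-tier counterparts is exactly the algebra underlying \eqref{cor5}, so nothing is missing.
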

The coverage probability is in the form of single integral and the interference behaves as if it originates from a single tier network.
\begin{cor}
For $\eta=0$, $N_K = 1$, $\alpha_K=\alpha_J=\alpha$ the $\mathcal{C}_K$ is 
\begin{equation}
\mathcal{C}_K\left(\tau_K\right) \hspace{-0.2em}= \hspace{-0.2em} \frac{\lambda_k}{\mathcal{A}_K \hspace{-0.2em} \left[\lambda_K+\lambda_J\zeta^{-2/\alpha}+\frac{2 \tau_K}{\alpha-2}\mathrm{G}\left(\alpha,\tau_K,\zeta,\lambda_K, \lambda_J\right)\right]}\hspace{-0.1em},
\label{cor6}
\end{equation}
where $\mathrm{G}\hspace{-0.1em}\left(\alpha,\tau_K,\zeta,\lambda_K, \lambda_J\right)\hspace{-0.1em}=\hspace{-0.1em}\lambda_K {}_2\mathrm{F}{}_1\hspace{-0.2em}\left[1,\hspace{-0.1em}1\hspace{-0.1em}-\hspace{-0.1em}\frac{2}{\alpha},\hspace{-0.1em}2\hspace{-0.1em}-\hspace{-0.1em}\frac{2}{\alpha};\hspace{-0.1em} -\tau_K \hspace{-0.2em}\right] \hspace{-0.2em} +\\ \lambda_J \zeta^{2/\alpha-1}{}_2\mathrm{F}{}_1\left[1,1-\frac{2}{\alpha},2-\frac{2}{\alpha};- \frac{\tau_K}{\zeta}\right]$, and $\zeta =  \frac{B_K}{N_J B_J}$.
\end{cor}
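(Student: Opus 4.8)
The plan is to obtain this corollary as a direct specialization of Theorem~1, exploiting the fact that the single-antenna assumption $N_K=1$ removes the only genuinely hard ingredient of the general result, namely the higher-order derivatives of $\mathcal{L}_I$. First I would set $N_K=1$ in \eqref{thm1}: the summation $\sum_{n=0}^{N_K-1}$ then retains only the term $n=0$, for which $s^0(-1)^0/0!=1$ and $\mathcal{L}_I^0(s)=\mathcal{L}_I(s)$. Hence no application of Fa\`a di Bruno's formula is needed, and the coverage probability reduces to a single integral of the distance-distribution weight against $\mathcal{L}_I(s)$ itself.

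Next I would insert the $\eta=0$ form of the Laplace transform from Corollary~1 in \eqref{cor1}, where $s=X_K^{\alpha_K}$. Imposing $\alpha_K=\alpha_J=\alpha$ collapses the fractional exponents: $s^{2/\alpha_K}=X_K^2$, the prefactor $s^{(2+\alpha_J-\alpha_K)/\alpha_J}$ likewise becomes $X_K^2$, and the second hypergeometric argument $-\tau_K s^{1-\alpha_K/\alpha_J}/\zeta^{\alpha_K/\alpha_J}$ reduces to the $X_K$-independent constant $-\tau_K/\zeta$. The decisive consequence is that both ${}_2\mathrm{F}{}_1$ factors lose all dependence on $X_K$, so the exponent of $\mathcal{L}_I$ takes the form $-\tfrac{2\pi\tau_K X_K^2}{\alpha-2}\,\mathrm{G}(\alpha,\tau_K,\zeta,\lambda_K,\lambda_J)$, with $\mathrm{G}$ exactly the combination of hypergeometric terms stated in the corollary.

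With this reduction the integrand in \eqref{thm1} is simply $X_K\exp\{-\pi c\,X_K^2\}$, where the constant $c=\lambda_K+\lambda_J\zeta^{-2/\alpha}+\tfrac{2\tau_K}{\alpha-2}\mathrm{G}$ gathers the distance-distribution terms together with the Laplace contribution. I would then evaluate the elementary Gaussian-type integral $\int_0^\infty X_K e^{-\pi c X_K^2}\,\mathrm{d}X_K=\tfrac{1}{2\pi c}$ and multiply by the prefactor $2\pi\lambda_K/\mathcal{A}_K$, obtaining $\lambda_K/(\mathcal{A}_K c)$, which is precisely \eqref{cor6}.

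I do not anticipate a serious obstacle, since every step is a controlled specialization of results already proved. The only point demanding genuine care is the bookkeeping in the $\alpha_K=\alpha_J$ reduction: one must verify that all residual $X_K$-dependence inside $\mathcal{L}_I$ factors out cleanly as $X_K^2$, so that the remaining integral is truly Gaussian and admits a closed form. A related subtlety is keeping the convention for $\zeta$ consistent throughout, as the reciprocal relationship between the $\zeta$ employed here and the quantity $N_JB_J/(N_KB_K)$ used in the earlier lemmas is what produces the exponents $\zeta^{-2/\alpha}$ and $\zeta^{2/\alpha-1}$ appearing in $c$ and $\mathrm{G}$.
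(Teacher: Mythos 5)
Your overall route is exactly the intended one (the paper gives no separate proof of this corollary: it is a direct specialization of Theorem~1 and Corollary~1), and the mechanics are right: with $N_K=1$ the sum in \eqref{thm1} collapses to the single term $\mathcal{L}_I(s)$, with $\eta=0$ and $\alpha_K=\alpha_J=\alpha$ both hypergeometric factors in \eqref{cor1} lose their $X_K$-dependence so that the exponent of $\mathcal{L}_I$ is proportional to $X_K^2$, and the elementary integral $\int_0^\infty X_K e^{-\pi c X_K^2}\,\mathrm{d}X_K=\tfrac{1}{2\pi c}$ then gives $\lambda_K/(\mathcal{A}_K\, c)$.

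However, the one step you yourself single out as "demanding genuine care" --- the $\zeta$ bookkeeping --- is precisely where your argument fails, and it would not survive being written out. Carrying the computation through in the convention of Theorem~1 and Corollary~1, i.e.\ $\zeta_0=\frac{N_JB_J}{N_KB_K}$, the Gaussian exponent is
\begin{equation}
c=\lambda_K+\lambda_J\zeta_0^{2/\alpha}+\frac{2\tau_K}{\alpha-2}\left(\lambda_K\,{}_2\mathrm{F}_1\!\left[1,1-\tfrac{2}{\alpha},2-\tfrac{2}{\alpha};-\tau_K\right]+\lambda_J\zeta_0^{2/\alpha-1}\,{}_2\mathrm{F}_1\!\left[1,1-\tfrac{2}{\alpha},2-\tfrac{2}{\alpha};-\tfrac{\tau_K}{\zeta_0}\right]\right). \nonumber
\end{equation}
Re-expressing this in the corollary's reciprocal variable $\zeta=1/\zeta_0=\frac{B_K}{N_JB_J}$ (using $N_K=1$) turns $\lambda_J\zeta_0^{2/\alpha}$ into $\lambda_J\zeta^{-2/\alpha}$, as printed, but it simultaneously turns the interfering-tier term of $\mathrm{G}$ into $\lambda_J\zeta^{1-2/\alpha}\,{}_2\mathrm{F}_1[\,\cdot\,;-\tau_K\zeta]$, \emph{not} the printed $\lambda_J\zeta^{2/\alpha-1}\,{}_2\mathrm{F}_1[\,\cdot\,;-\tau_K/\zeta]$. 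So your closing claim that the single reciprocal relationship "produces the exponents $\zeta^{-2/\alpha}$ and $\zeta^{2/\alpha-1}$ appearing in $c$ and $\mathrm{G}$" is false: no one definition of $\zeta$ reproduces both displayed pieces. What the careful derivation actually reveals is that the corollary as printed mixes two conventions (the denominator term is consistent with $\zeta=\frac{B_K}{N_JB_J}$, while $\mathrm{G}$ is consistent with $\zeta=\frac{N_JB_J}{B_K}$), i.e.\ the statement carries a typo. A sound write-up should fix a single convention --- e.g.\ $\zeta=\frac{N_JB_J}{B_K}$, under which the denominator carries $\lambda_J\zeta^{2/\alpha}$ and $\mathrm{G}$ is exactly as printed --- and note the discrepancy with the stated formula, rather than assert, as you do, that \eqref{cor6} falls out verbatim.
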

The coverage probability reduces to closed form.
\begin{cor}
For $\eta=0$, $N_K = N_J = 1$ $B_K=B_J=1$, $\alpha_K=\alpha_J=\alpha$ the the $\mathcal{C}_K$ can further be simplified to
\begin{equation}
\mathcal{C}_K\left(\tau_K\right) = \frac{1}{1+\frac{2\tau_K}{\alpha-2}{}_2\mathrm{F}{}_1\left[1,1-\frac{2}{\alpha},2-\frac{2}{\alpha}; -\tau_K \right]}
\label{cor7}.
\end{equation}
\end{cor}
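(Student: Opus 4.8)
The plan is to derive \eqref{cor7} as a direct specialization of the corollary that produces \eqref{cor6}, since that result already carries out the Laplace-transform evaluation and the radial integral for the single-antenna case $\eta=0$, $N_K=1$, $\alpha_K=\alpha_J=\alpha$. First I would impose the two remaining hypotheses of the present statement, $N_J=1$ and $B_K=B_J=1$. Under these the weight $\zeta=\frac{B_K}{N_J B_J}$ appearing in \eqref{cor6} collapses to $\zeta=1$, so that every factor $\zeta^{-2/\alpha}$ and $\zeta^{2/\alpha-1}$ becomes unity and the two Gauss hypergeometric terms in $\mathrm{G}$ acquire the identical argument $-\tau_K$.

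Next I would simplify the bracketed denominator of \eqref{cor6}. With $\zeta=1$ one has $\lambda_J\zeta^{-2/\alpha}=\lambda_J$ and $\mathrm{G}(\alpha,\tau_K,\zeta,\lambda_K,\lambda_J)=(\lambda_K+\lambda_J)\,{}_2\mathrm{F}{}_1\!\left[1,1-\tfrac{2}{\alpha},2-\tfrac{2}{\alpha};-\tau_K\right]$, so the bracket factors as $(\lambda_K+\lambda_J)\bigl(1+\tfrac{2\tau_K}{\alpha-2}\,{}_2\mathrm{F}{}_1[1,1-\tfrac{2}{\alpha},2-\tfrac{2}{\alpha};-\tau_K]\bigr)$. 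The key structural fact is that the common density factor $(\lambda_K+\lambda_J)$ pulls out cleanly, which is exactly what will make the final answer independent of the densities.

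Finally I would insert the association probability. Specializing \eqref{assoc_prob_simp} with $\Upsilon=\frac{B_J N_J}{B_K N_K}=1$ gives $\mathcal{A}_K=\frac{\lambda_K}{\lambda_K+\lambda_J}$, whence $\mathcal{A}_K(\lambda_K+\lambda_J)=\lambda_K$. The factor $\lambda_K$ in the numerator of \eqref{cor6} then cancels against this product, eliminating all dependence on $\lambda_K$ and $\lambda_J$ and leaving precisely \eqref{cor7}. I do not expect any genuine analytic obstacle here: the only hard work (the higher-order derivative of $\mathcal{L}_I$ and the hypergeometric integral) is inherited from Theorem 1 and \eqref{cor6}, and the present step is purely algebraic, the one point requiring care being the clean extraction of $(\lambda_K+\lambda_J)$ so that the cancellation with $\mathcal{A}_K$ is exact. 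As an independent check I would re-derive the same expression directly from \eqref{cor3}: with $N_K=1$ the Fa\`a di Bruno sum reduces to its $n=0$ term $\mathcal{L}_I(s)$, and substituting the single-tier Laplace transform of \eqref{cor5} with $s=X_K^\alpha$ turns the radial integral into the elementary Gaussian $\int_0^\infty X_K e^{-c X_K^2}\,\mathrm{d}X_K=\tfrac{1}{2c}$ with $c=\pi(\lambda_K+\lambda_J)\bigl(1+\tfrac{2\tau_K}{\alpha-2}\,{}_2\mathrm{F}{}_1[1,1-\tfrac{2}{\alpha},2-\tfrac{2}{\alpha};-\tau_K]\bigr)$, reproducing \eqref{cor7} after the same cancellation of $\mathcal{A}_K$.
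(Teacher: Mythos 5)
Your proposal is correct and matches the paper's (implicit) route: the paper states this corollary as a further specialization of the closed form in \eqref{cor6}, and your substitution $N_J=1$, $B_K=B_J=1$ giving $\zeta=1$, the factorization of the bracket as $(\lambda_K+\lambda_J)\bigl(1+\tfrac{2\tau_K}{\alpha-2}\,{}_2\mathrm{F}{}_1[1,1-\tfrac{2}{\alpha},2-\tfrac{2}{\alpha};-\tau_K]\bigr)$, and the cancellation with $\mathcal{A}_K=\lambda_K/(\lambda_K+\lambda_J)$ is exactly the intended algebra. Your secondary check via \eqref{cor3} and \eqref{cor5}, where the $N_K=1$ case collapses the Fa\`a di Bruno sum to $\mathcal{L}_I(s)$ and the radial integral evaluates in closed form, is a valid independent confirmation and consistent with the paper's chain of corollaries.
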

The coverage probability becomes density invariant. 
\subsection{Rate Coverage Probability}
In this subsection, we find the rate coverage probability of the network, which is the probability that a randomly chosen user can achieve a target rate or the average fraction of users that achieve the target rate. The rate coverage probability of the network can be written as
\begin{equation}
\mathcal{R} = \mathcal{A}_F\mathcal{R}_F + \mathcal{A}_M\mathcal{R}_M,
\end{equation}
where $\mathcal{R}_F$ and  $\mathcal{R}_M$ are the rate coverage probability, and $ \mathcal{A}_F$ and $ \mathcal{A}_M$ are the association probability of the femto- and macro-tier respectively. The rate coverage $\mathcal{R}_K$ of the $K$th tier when the rate threshold is $\rho_K$ can be written as 
\begin{equation}
\mathcal{R}_K  \triangleq \mathbb{P}\left[\frac{W}{\Omega_K}\log_2\left(1+\mathrm{SIR}_K\right)>\rho_K\right],
\label{rate_cov_basic}
\end{equation}
where $W$ is the frequency resources and $\Omega_K$ is the load on a $K$th-tier BS.  The rate distribution captures the effect of both $\mathrm{SIR}_K$ and load $\Omega_K$, which in turn depends on the corresponding association area. The distribution of the association area is complex and not known. However, by using the association area approximation in \cite{rate_distr}, the probability mass function of the load is given by
\begin{multline}
\mathbb{P}\left(\Omega_K=n\right) = \frac{3.5^{3.5}}{\left(n-1\right)!}\frac{\Gamma\left(n+3.5\right)}{\Gamma\left(3.5\right)}\left(\frac{\lambda_U \mathcal{A}_K}{\lambda_K}\right)^{n-1} \times \\ \left(3.5+\frac{\lambda_U \mathcal{A}_K}{\lambda_K}\right)^{-\left(n+3.5\right)}, n\geq 1,
\label{pmf_load}
\end{multline}
where $\Gamma\left(t\right)=\int_0^\infty x^{t-1}\exp\left(-x\right)\mathrm{dx}$ is a gamma function. 

We state the rate coverage probability $\mathcal{R}_K$ in the following Theorem.
\begin{thm}
The $\mathcal{R}_K$ when the rate threshold is $\rho_K$ for the system model under consideration is given by
\begin{multline}
\mathcal{R}_K \left(\rho_K\right)= \sum_{n\geq1}\frac{3.5^{3.5}}{\left(n-1\right)!}\frac{\Gamma\left(n+3.5\right)}{\Gamma\left(3.5\right)}\left(\frac{\lambda_U \mathcal{A}_K}{\lambda_K}\right)^{n-1}\times \\ \left(3.5+\frac{\lambda_U \mathcal{A}_K}{\lambda_K}\right)^{-\left(n+3.5\right)}\mathcal{C}_K\left(2^{\rho_K n/W}-1\right)
\label{macro_rate_cov},
\end{multline}
where $\mathcal{C}_K$ is given by \eqref{thm1}.
\end{thm}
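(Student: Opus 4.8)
The plan is to apply the law of total probability by conditioning on the integer-valued load $\Omega_K$ carried by the tagged $K$th-tier BS, and then to convert the per-user rate-coverage event into an SIR-coverage event that Theorem 1 already evaluates. Concretely, I would first write
$$\mathcal{R}_K(\rho_K) = \sum_{n\geq 1}\mathbb{P}\left[\frac{W}{n}\log_2\left(1+\mathrm{SIR}_K\right) > \rho_K \;\middle|\; \Omega_K = n\right]\mathbb{P}\left(\Omega_K = n\right),$$
invoking the standard decoupling approximation underlying the model of \cite{rate_distr} that the instantaneous SIR of the typical link is (approximately) independent of the load $\Omega_K$, so that each conditional probability may be evaluated with the unconditional SIR distribution of the $K$th tier. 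Substituting the load PMF from \eqref{pmf_load} immediately supplies the polynomial-times-Gamma prefactor appearing in the statement.

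Second, for each fixed $n$ I would algebraically invert the rate condition: since $W$, $n$, and $\rho_K$ are positive and $\log_2(\cdot)$ is strictly increasing, the event $\frac{W}{n}\log_2(1+\mathrm{SIR}_K) > \rho_K$ is equivalent to $\mathrm{SIR}_K > 2^{\rho_K n/W}-1$. Hence the $n$th conditional probability is exactly the $K$th-tier SIR coverage probability evaluated at the threshold $\tau_K = 2^{\rho_K n/W}-1$, namely $\mathcal{C}_K\left(2^{\rho_K n/W}-1\right)$ as furnished by Theorem 1 and \eqref{thm1}. Assembling the two pieces term by term over $n$ then yields \eqref{macro_rate_cov} directly.

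The main obstacle is not the algebra but justifying the independence used in the conditioning step: the load and the SIR both depend on the geometry of the tagged cell (a larger association area tends to correlate with a larger serving distance $X_K$, and hence with a different SIR distribution). I would handle this exactly as in \cite{rate_distr}, adopting the decoupling approximation under which \eqref{pmf_load} itself is derived and treating the serving-distance density $f_{X_K}$ entering $\mathcal{C}_K$ as unaffected by conditioning on $\Omega_K$. With that approximation in force, the interchange of expectation over $X_K$ (inside $\mathcal{C}_K$) and the summation over the load is legitimate, and the series converges because the load PMF sums to one while $\mathcal{C}_K\le 1$, completing the proof.
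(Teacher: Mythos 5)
Your proof is correct and follows essentially the same route as the paper's: invert the rate condition $\frac{W}{\Omega_K}\log_2\left(1+\mathrm{SIR}_K\right)>\rho_K$ into the SIR-threshold event $\mathrm{SIR}_K > 2^{\rho_K n/W}-1$, average over the load using the PMF \eqref{pmf_load}, and identify each term with $\mathcal{C}_K\left(2^{\rho_K n/W}-1\right)$ from Theorem 1. The only difference is cosmetic (the paper inverts the rate condition before expanding the expectation over $\Omega_K$, you condition first), and your explicit flagging of the SIR--load independence (decoupling) approximation is a point the paper leaves implicit.
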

\begin{proof}
The rate coverage probability of the $K$th tier for threshold $\rho_K$ can be written as
\begin{align}
\mathcal{R}_K\left(\rho_K\right) &= \mathbb{P}\left[\frac{W}{\Omega_K}\log_2\left(1+\mathrm{SIR}_K\right)>\rho_K\right] \nonumber \\ 
&= \mathbb{P}\left[\mathrm{SIR}_K > 2^{\rho_K \Omega_K /W}-1 \right].
\end{align}
By the definition of the SIR coverage probability the above expression becomes
\begin{align}
\mathcal{R}_K\left(\rho_K\right)&=\mathbb{E}_{\Omega_K}\left[\mathcal{C}_K\left(2^{\rho_K \Omega_K /W}-1\right)\right] \nonumber \\ 
&= \sum_{n \geq 1}\mathbb{P}\left(\Omega_K = n \right) \mathcal{C}_K\left(2^{\rho_K n/W}-1 \right).
\end{align}
By putting \eqref{pmf_load} in the above expression, we obtain \eqref{macro_rate_cov}.
\end{proof}
The rate coverage probability expression in \eqref{macro_rate_cov} can be further simplified by using the mean load approximation used in \cite{rate_distr}. The mean load is given by
\begin{equation}
\bar{\Omega}_K = \mathbb{E}\left[\Omega_K\right]=1+\frac{1.28\lambda_U \mathcal{A}_K}{\lambda_K}
\label{average_load},
\end{equation}
where $K \in  \left\{M,F\right\}$. By using the mean load $\bar{\Omega}_K$ the summation over $n$ is removed from \eqref{macro_rate_cov}. 
\section{Results and Discussion}
\begin{figure} 
	\centering
		\includegraphics[scale=0.54]{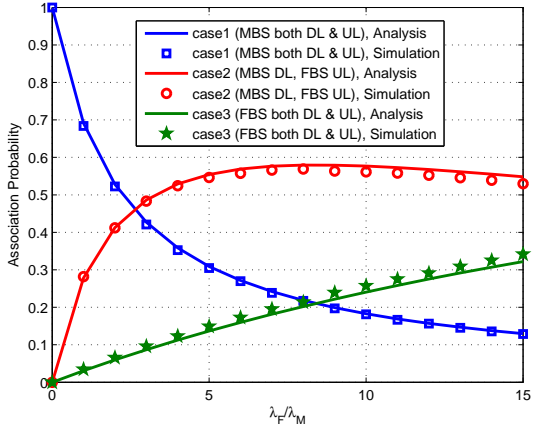}
		\caption{UL Association probabilities vs. $\lambda_F/\lambda_M$, $\left(\alpha=4, N_M=5, N_F=1, B=1\right)$. }
	\label{associatioin1}
	\end{figure}
\begin{figure} 
	\centering
		\includegraphics[scale=0.54]{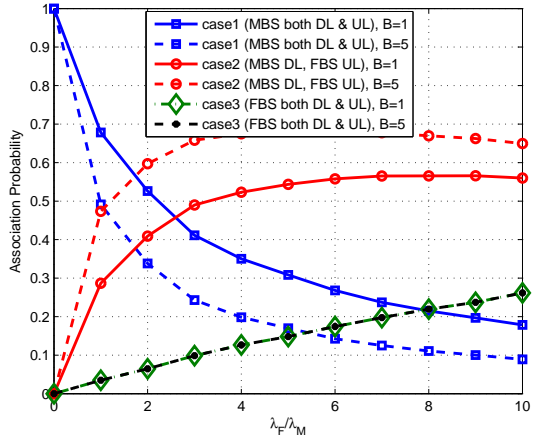}
		\caption{Effect of biasing on the UL association probabilities, $\left(\alpha=4, N_M=5, N_F=1, B=5\right)$. }
	\label{biasing}
	\end{figure}
First, we discuss the accuracy of our analysis and system model. MBSs, FBSs and UEs are deployed according to the system model, and we fix $P_M=43$ dBm, $P_F=20$ dBm, $P_0=-100$ dBm/Hz, and $W=10$ MHz. All the densities $\lambda_M, \lambda_F$ and $\lambda_U$ are per square kilometers $/\text{Km}^2$. We consider the same $\mathrm{SIR}$ thresholds  $\left(\tau=\tau_M=\tau_F\right)$, rate thresholds $\left(\rho=\rho_M=\rho_F\right)$ and path-loss exponents $\left(\alpha=\alpha_M=\alpha_F\right)$ for both tiers. 

Fig. \ref{associatioin1} shows the association probabilities of UEs to different cases (mentioned in Section II) versus ratio of $\lambda_F$ and $\lambda_M$, $\left(\lambda_F/\lambda_M\right)$, for the given parameters. The solid lines show analytical results, derived using \eqref{prob_case1}, \eqref{prob_case2}, and \eqref{prob_case3} while marked points are obtained using Monte Carlo simulations. It can be noticed that as the density of the FBS, $\lambda_F$, increases, the number of UEs in $case2$ and $case3$ also increases, whereas the number of UEs in $case1$ decreases. It can  further be noticed that initially the association probability of $case2$ increases very rapidly and reaches a maximum value, $\left(\lambda_F/\lambda_M=7\right)$ , and then starts decreasing because a larger number of UEs become attached to FBSs both in the DL and UL. The figure provides an estimate of the load in different tiers for design engineers. We can observe that at $\lambda_F/\lambda_M=5$, 30$\%$ of the UEs is attached to macro-tier ($case1$) while 70$\%$ of UEs is attached to femto-tier ($case2 + case3$), but if we increase $N_M=25$ and keep the rest of the parameters the same then 50$\%$ of the UEs will be attached to macro-tier and 50$\%$ to femto-tier (using \eqref{assoc_prob_gen}). This shows that even using DUDe and higher density for the femto-tier, we still need to balance the load between the tiers. Therefore, we use biasing to balance the load and the next figure shows the effect of biasing on different UEs' type.

\begin{figure*}
		\begin{subfigure}{0.45\textwidth}
			\centering
			\includegraphics[width=1\linewidth]{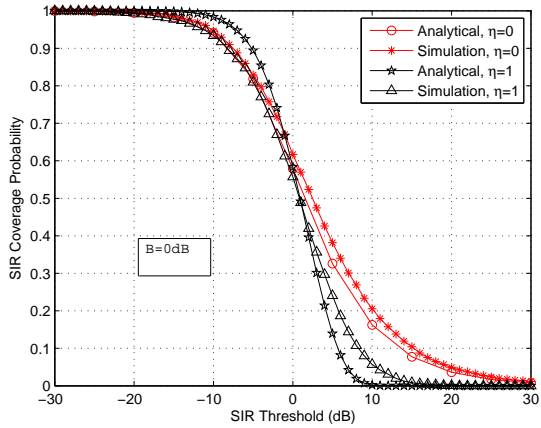}
			\caption{$\lambda_M=3, \lambda_F=10, N_M=4, N_F=2, \alpha=3$}
			\label{sir_ulpc}
		\end{subfigure}\hfill
	\begin{subfigure}{0.45\textwidth}
	  \centering
		 \includegraphics[width=1\linewidth]{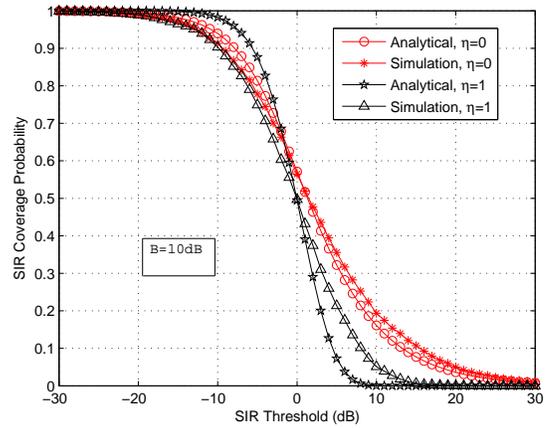}
		 \caption{$\lambda_M=3, \lambda_F=10, N_M=4, N_F=2, \alpha=3$}
		 \label{sir_noulpc}
	\end{subfigure}\hfill
	\begin{subfigure}{0.45\textwidth}
			\centering
			\includegraphics[width=1\linewidth]{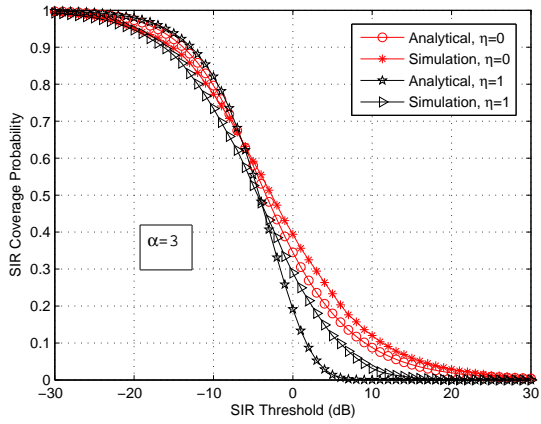}
			\caption{$\lambda_M=1, \lambda_F=4, N_M=N_F=1, B=10\text{dB}$}
			\label{sir_ulpc}
		\end{subfigure} \hfill
	\begin{subfigure}{0.45\textwidth}
	  \centering
		 \includegraphics[width=1\linewidth]{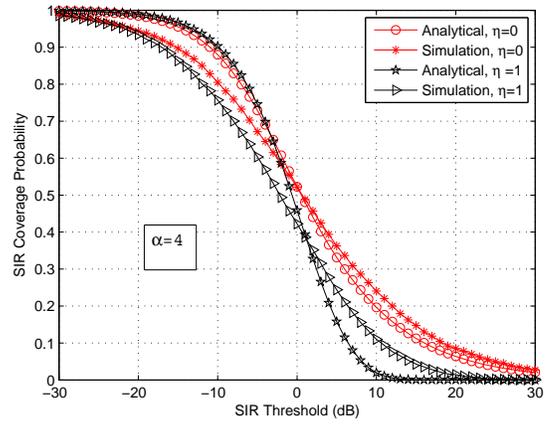}
		 \caption{$\lambda_M=1, \lambda_F=4, N_M=N_F=1, B=10\text{dB}$}
		 \label{sir_noulpc}
	\end{subfigure}\hfill
	\caption{SIR coverage probability simulations vs analytical}
\label{coverage_prob}
\vspace{-1.7em}
\end{figure*}

Fig. \ref{biasing} depicts the effect of biasing on association probabilities. It can  easily be noticed that by using $B=5$ the association probability of $case2$ increases while the association probability of $case1$ decreases. When $B>1$ it offloads the boundary UEs of the macro-tier and these UEs become attached to femto-tier. Similarly, when $B<1$ the boundary UEs of the femto-tier are offloaded to the macro-tier, whereas $B=1$ means no biasing. By changing $B$ we can balance the load among two tiers for optimal performance. 

Fig. \ref{coverage_prob} compares the SIR coverage probability obtained through simulations and analysis for various network parameters.
It can be noticed that the analysis and simulations curves are close to each other, which shows that the independent homogeneous PPPs approximation of the interfering UEs is reasonably accurate. The gap between the simulation and the numerical curve is due to the homogeneous PPP approximation of the interfering UEs. There is some correlation among the interfering UEs as discussed in Section IV. However, it is quite challenging to model this correlation. Therefore, in most of the UL analysis this correlation is ignored \cite{katerina}, \cite{UL_tier1}, \cite{hisham} and \cite{marco}. In \cite{Geff_DUDE} and \cite{non_homo_letter} the interfering UEs are approximated as non-homogeneous PPP in a SISO network model. However, due to multi-antenna BSs in our system model, we need to find the higher order derivative of the Laplace transform of the interference, and approximating the interfering UEs as non-homogeneous PPP makes the analysis even more involved.

\begin{figure*}
	\begin{subfigure}{0.45\textwidth}
		\centering
			\includegraphics[width=1\linewidth]{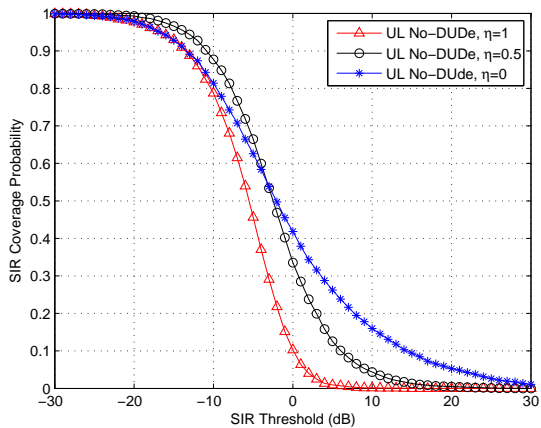}
			\caption{No-DUDe}
			\label{PC_NO_DUDe}
		\end{subfigure}\hfill
	\begin{subfigure}{.45\textwidth}
	  \centering
		\includegraphics[width=1\linewidth]{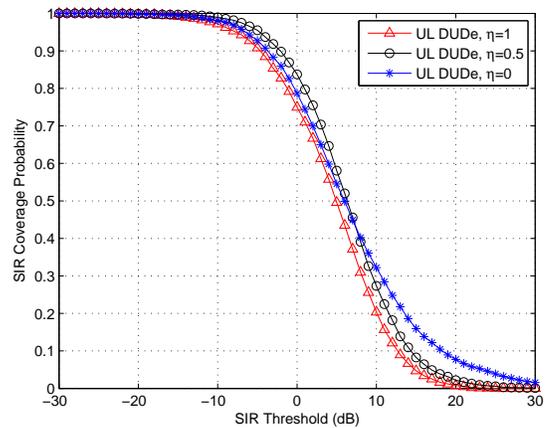}
		 \caption{DUDe}
		 \label{PC_DUDe}
	\end{subfigure}
	\caption{Effect of Power Control fraction $\eta$ on the SIR coverage Probability, $\left(\lambda_M=2, \lambda_F=12, \alpha=3, N_M=12, N_F=4, B=1 \right)$.}
\label{sir_eta}
\vspace{-1.7em}
\end{figure*}
Fig. \ref{sir_eta} shows the effect of $\eta$ on $\mathrm{SIR}$ coverage probability when the cell association is based on maximum downlink received power and when it is based on DUDe. It can be observed that power control affects the cell-centered (corresponds to large SIR threshold) and cell-edged (corresponds to small SIR threshold) UEs differently, i.e., the centered UEs coverage decreases with power control, whereas the cell-edged UEs coverage increases with the middle value of $\eta=0.5$ and with full channel inversion $\left(\eta=1\right)$ it decreases. With $\eta=1$ the interference power become significant and hence decreases the overall coverage, therefore, $\eta$ should be optimized accordingly. Furthermore, comparing Fig. \ref{PC_NO_DUDe} and Fig. \ref{PC_DUDe} reveals that the effect of power control is more prominent when the association scheme is No-DUDe. This is due to the large cell size of the MBSs in the No-DUDe association as compared to the cell size of the MBSs in the DUDe association. 
\begin{figure*} 
			\begin{subfigure}{.45\textwidth}
			\centering
			\includegraphics[width=1\linewidth]{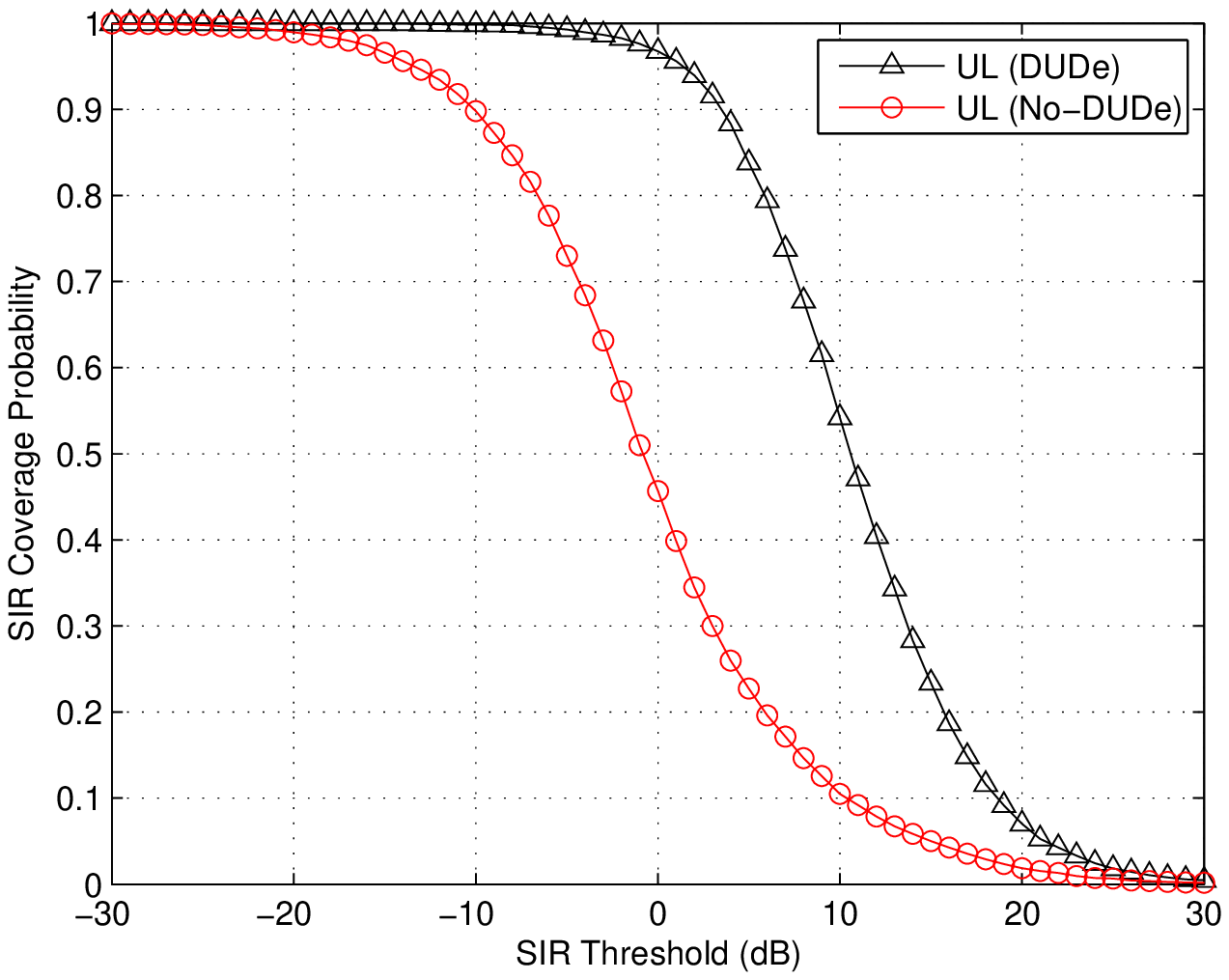}
			\caption{$N_M=12, N_F=12$}
			\label{NM_12NF_12E}
		\end{subfigure}\hfill
	\begin{subfigure}{.45\textwidth}
		\centering
		\includegraphics[width=1\linewidth]{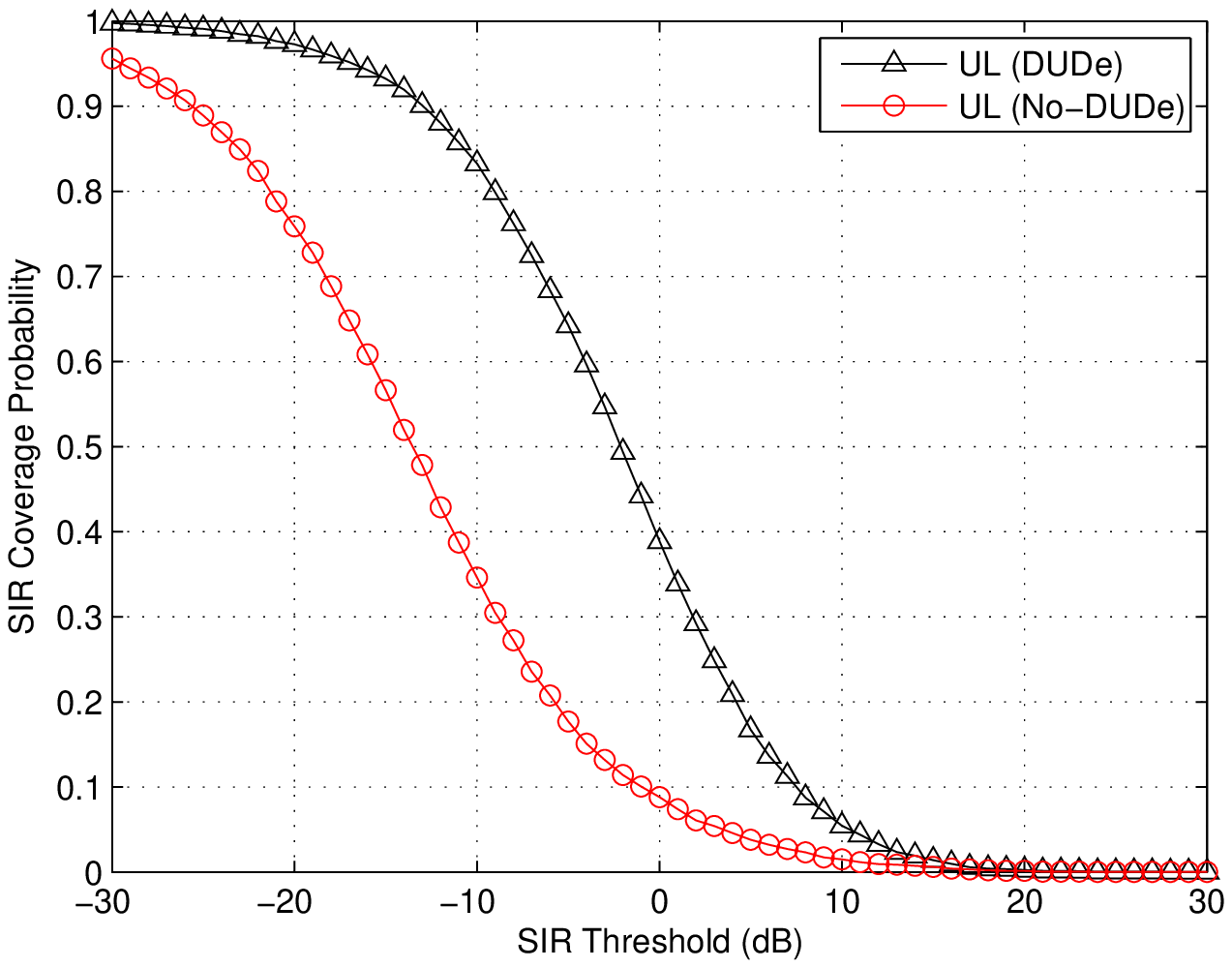}
		 \caption{$N_M=1, N_F=1$}
		 \label{NM_1NF_1E}
	\end{subfigure}\hfill
	\begin{subfigure}{.45\textwidth}
		\centering
			\includegraphics[width=1\linewidth]{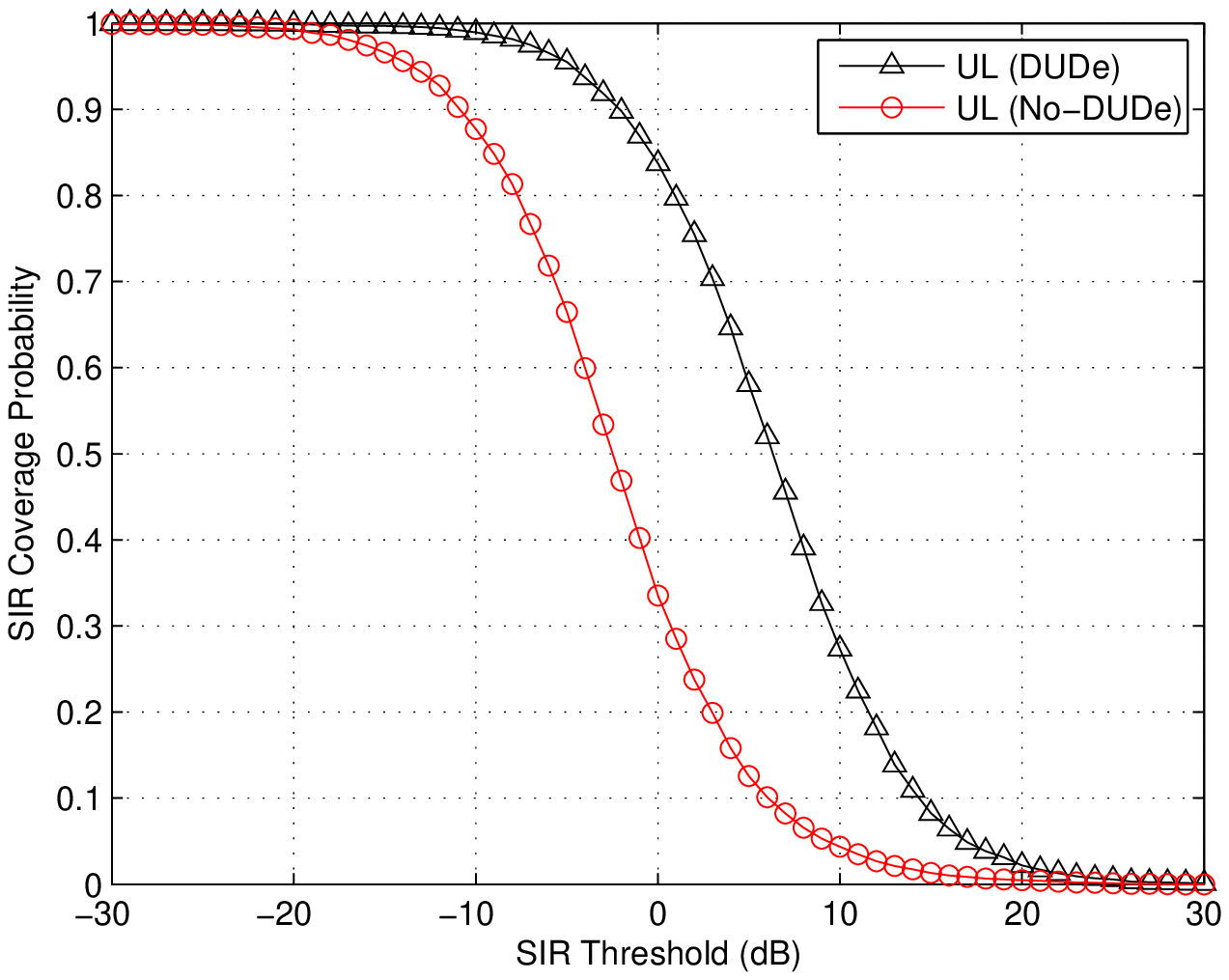}
			\caption{$N_M=12, N_F=4$}
			\label{NM_12NF_4E}
		\end{subfigure}\hfill
	\begin{subfigure}{.45\textwidth}
		\centering
		\includegraphics[width=1\linewidth]{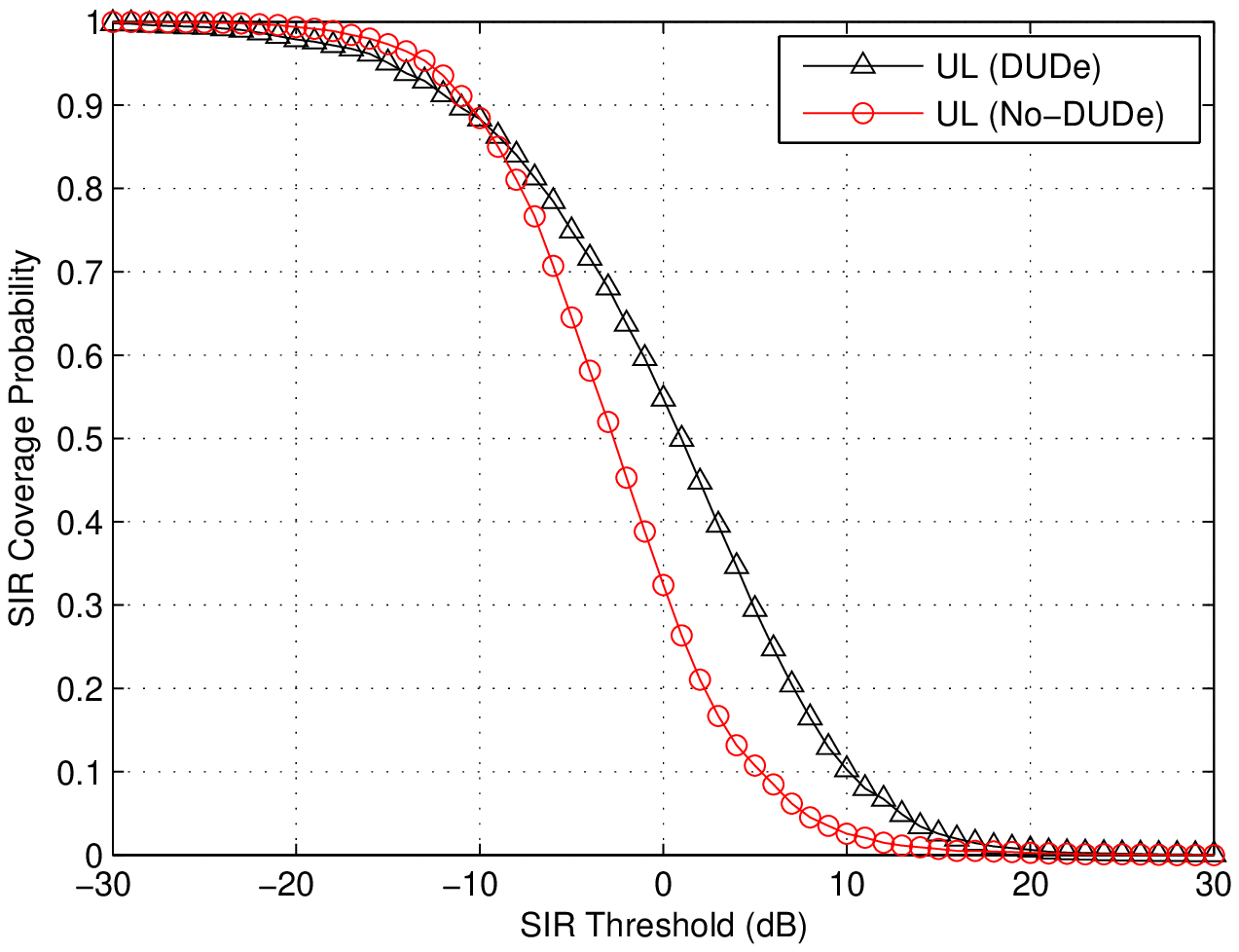}
		 \caption{$N_M=12, N_F=1$}
		 \label{NM_12NF_1E}
	\end{subfigure}\hfill
	\caption{Beamforming gain effect on the DUDe gain in term of SIR coverage probability with power control, $\left(\eta=0.5, \lambda_M=2, \lambda_F=12, \alpha=3, B=1\right)$. }
\label{DUDe_gain_pc}
\vspace{-1.7em}
\end{figure*}
\begin{figure*} 
			\begin{subfigure}{.45\textwidth}
			\centering
			\includegraphics[width=1\linewidth]{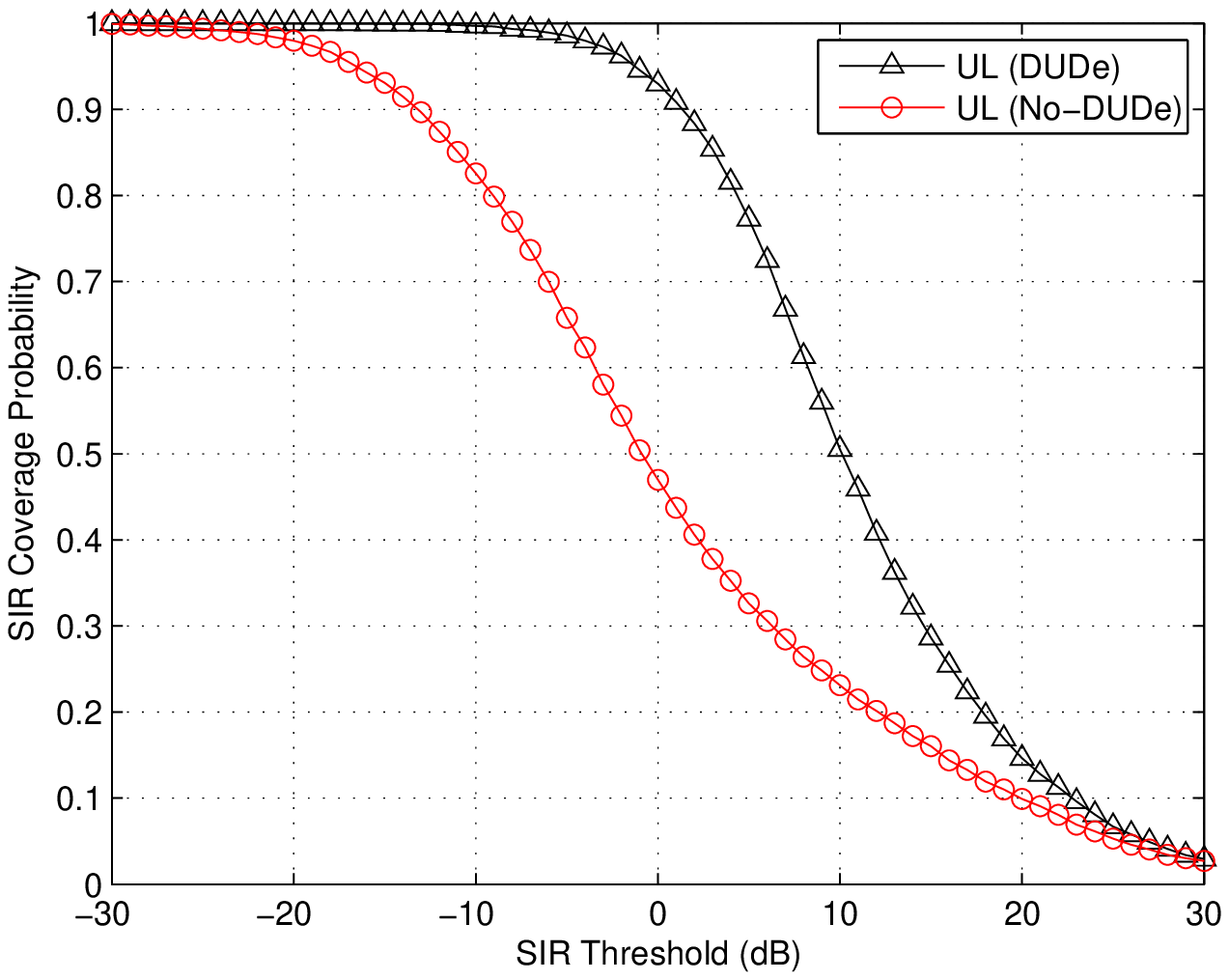}
			\caption{$N_M=12, N_F=12$}
			\label{NM_12_NF_12}
		\end{subfigure}\hfill
	\begin{subfigure}{.45\textwidth}
		\centering
		\includegraphics[width=1\linewidth]{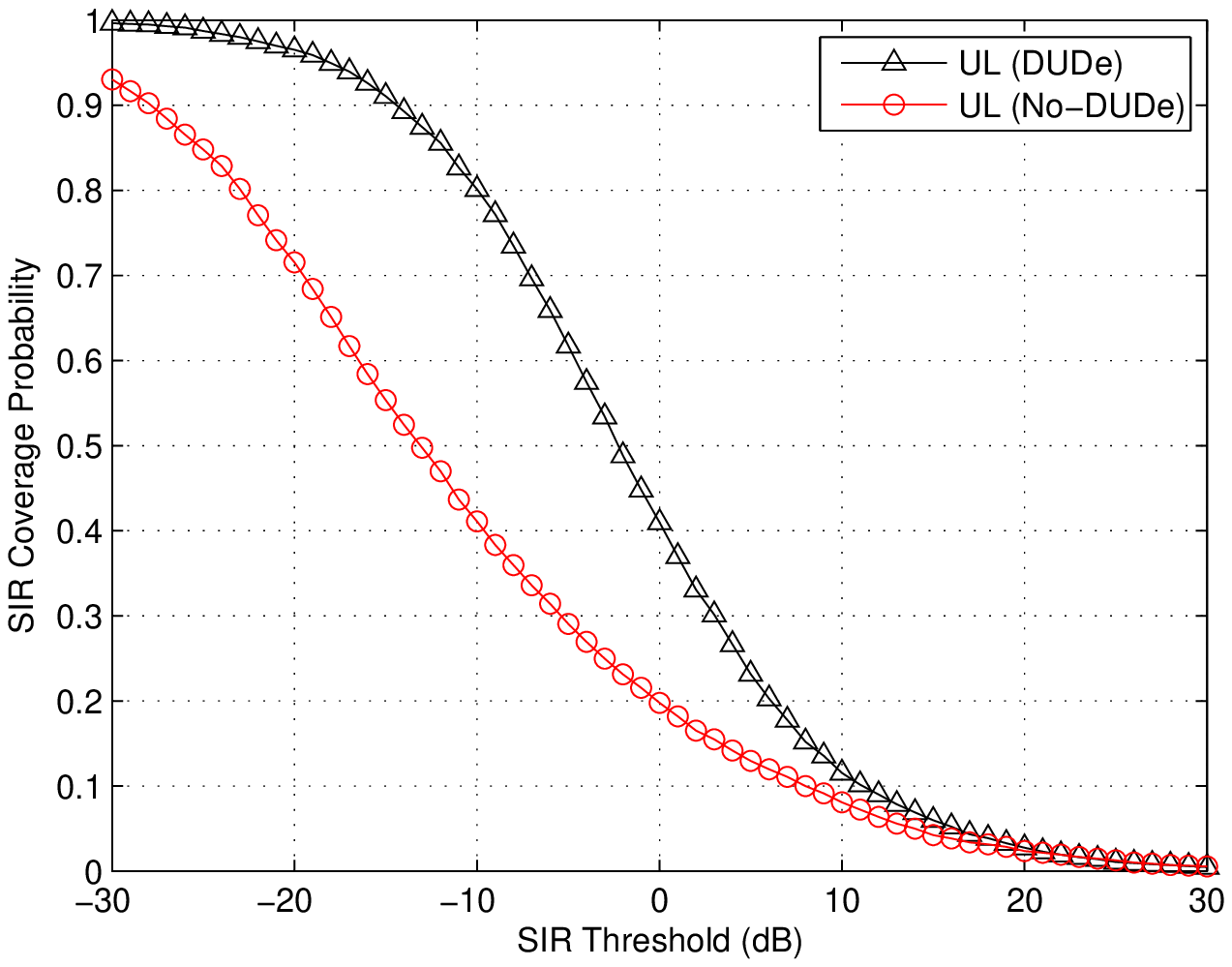}
		 \caption{$N_M=1, N_F=1$}
		 \label{NM_1_NF_1}
	\end{subfigure}\hfill
	\begin{subfigure}{.45\textwidth}
		\centering
			\includegraphics[width=1\linewidth]{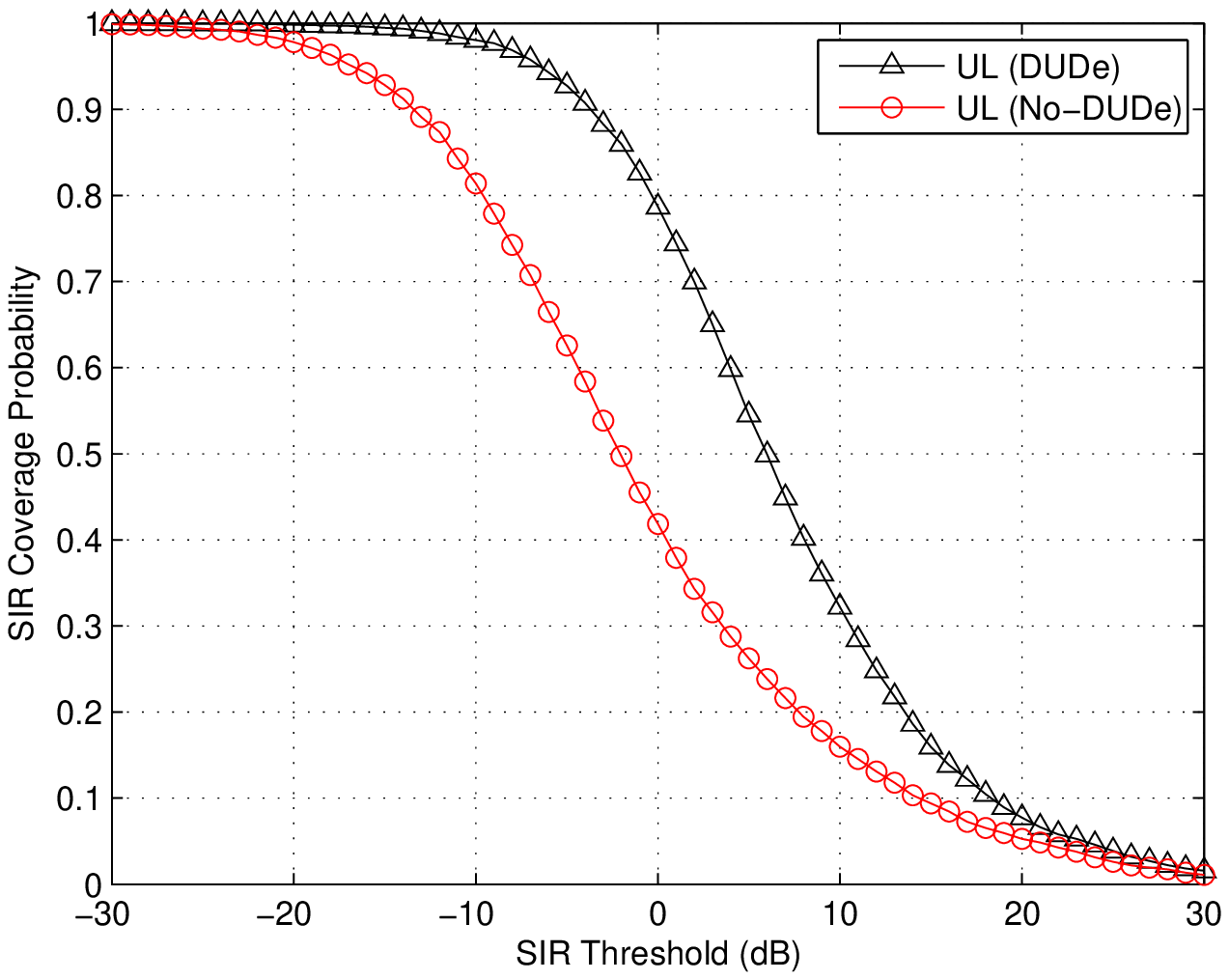}
			\caption{$N_M=12, N_F=4$}
			\label{NM_12_NF_4}
		\end{subfigure}\hfill
	\begin{subfigure}{.45\textwidth}
		\centering
		\includegraphics[width=1\linewidth]{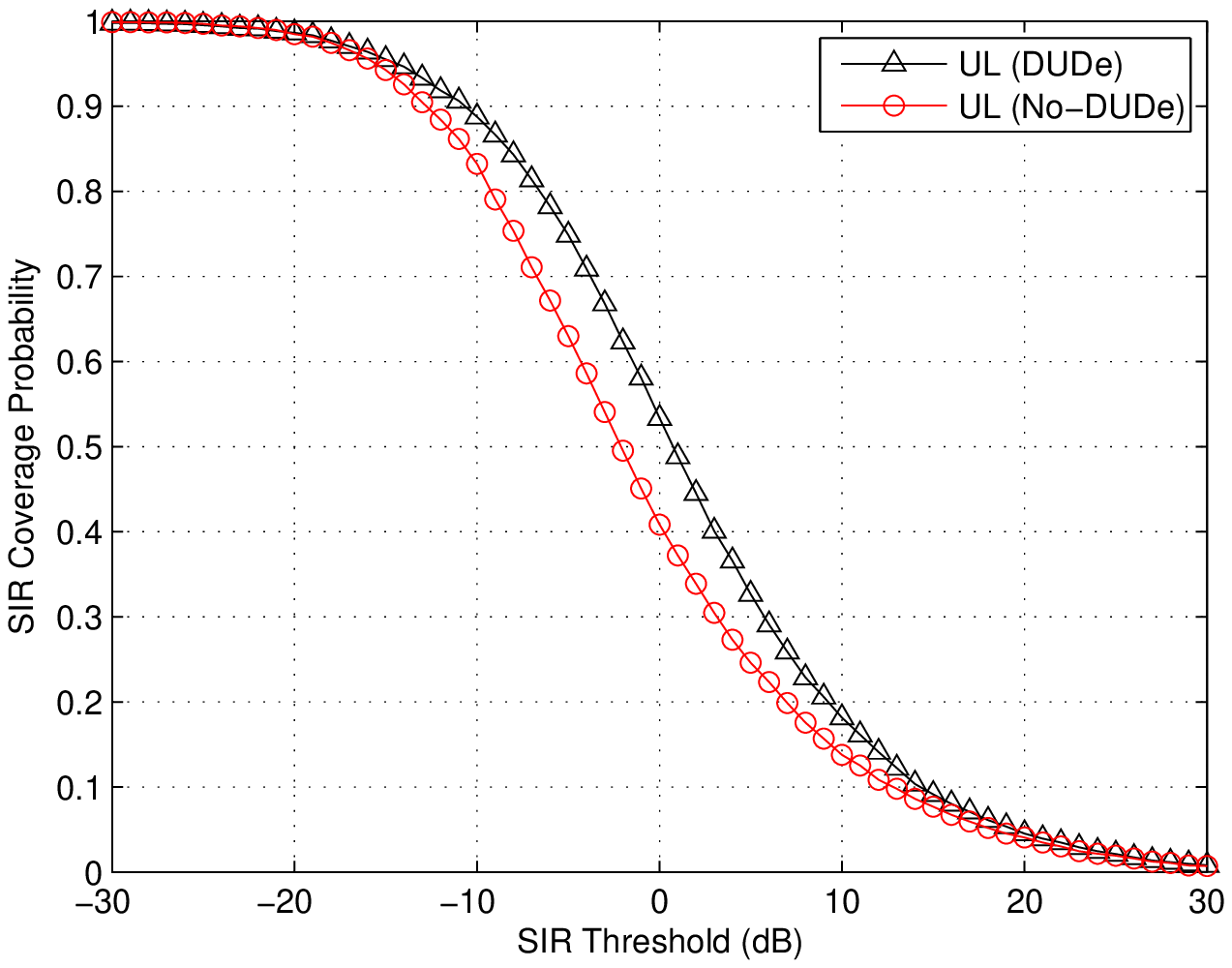}
		 \caption{$N_M=12, N_F=1$}
		 \label{NM_12_NF_1}
	\end{subfigure}\hfill
	\caption{Beamforming gain effect on the DUDe gain in term of SIR coverage probability without power control, $\left(\eta=0, \lambda_M=2, \lambda_F=12, \alpha=3, B=1 \right)$. }
\label{DUDe_gain_nopc}
\vspace{-1.7em}
\end{figure*}

Fig. \ref{DUDe_gain_pc} shows how the gain provided by the DUDe association over No-DUDe association in term of SIR coverage probability changes with the beamforming gain of both tiers. It is important to mention that the UL coverage probability of the network when the association is based on maximum DL received power averaged over fading can be derived by similar tools and methods used in this paper. It is clear from the figure that the gain of DUDe association over No-DUDe is maximum when both tiers have the same beamforming gain and decreases otherwise. When $N_M$ is large compared to $N_F$, the beamforming gain provided by a MBS increases, which enlarges the association region of a MBS. As a result of which UEs closer to the FBSs become associated with MBSs. These boundary UEs, which are connected to macro-tier, create strong interference at nearby FBSs when they transmit to their serving MBSs. Whereas, when both tiers have the same beamforming gain then the coverage region of both tiers are the same and the interference created by the boundary UEs is not that strong. Thus, the DUDe gain over No-DUDe is high when both tier have the same beamforming. 
In other words, we can say that as the difference in beamforming gain of both tiers increases, the gain provided by the DUDe over No-DUDe decreases. Fig. \ref{DUDe_gain_nopc} shows the same effect when UL power control is not utilized. 

Fig. \ref{ratecov_4_diff_NM} shows the effect of the number of MBS's antennas and biasing on rate coverage probability. For no biasing case $B=1$, increasing $N_M$ from $1$ to $20$ decreases the rate coverage.  To explain this effect, we know that the rate coverage depends on the load on a BS \eqref{rate_cov_basic}. When $N_M$ is high, the coverage region of macro-tier increases and most of the UEs become attached to MBSs due to which the macro-tier is overloaded. Thus the overall rate coverage probability drops. Further, we can see from the figure that when $N_M=1$, then no-biasing gives us the maximum rate coverage, which is in accordance with the result of \cite{Geff_DUDE}. However, for higher $N_M$ we see that biasing improves the rate coverage. From the network design perspective, we see that increasing $N_M$ can degrade the rate coverage, therefore, to benefit from a large number of MBSs' antennas we need a suitable biasing towards femto-tier.

Fig. \ref{effect_alpha_lambdaf} illustrates the effect of FBSs' density and path-loss exponent $\alpha$ on the rate coverage probability for the association scheme of DUDe and No-DUDe. It can be observed that by changing $\alpha$ from $3$ to $4$ increases the rate coverage probability for both DUDe and No-DUDe, which comes from the decrease in the interference power. It can be further observed that an increase in $\lambda_F$ increases the rate coverage for the DUDe case. This improvement in the rate coverage comes from the inherent property of the DUDe to better handle interference. On the other hand, for No-DUDe association scheme, increasing $\lambda_F$ slightly improves the rate coverage for  centered UEs (large rate threshold) while decreases the rate coverage of cell-edged UEs (small rate threshold). When $\lambda_F$ increases then the load on BS decreases due to which  the rate coverage improves for the cell-centered users. However, with the increase in $\lambda_F$, the cell size of a BS decreases and by using channel inversion the cell-edged UEs transmit power also reduces, thus the coverage of cell-edge UEs reduces.

\begin{figure} 
	\centering
		\includegraphics[scale=0.54]{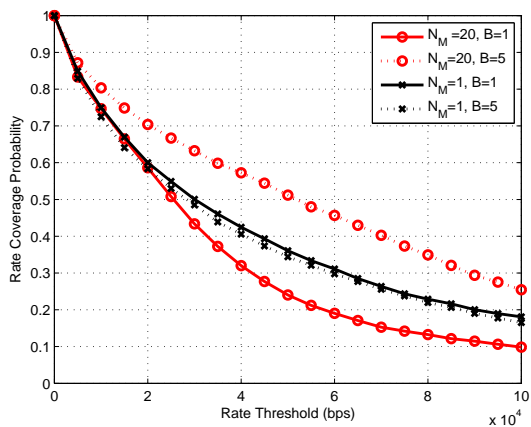}
		\caption{Effect of number of MBS antennas and biasing on rate coverage, $\left(\lambda_M=3, \lambda_F=18, \lambda_u=3000, \alpha=3,\eta=1\right)$}
	\label{ratecov_4_diff_NM}
	\vspace{-1.7em}
\end{figure}
\subsection{Optimal bias and optimal power control fraction}
Fig. \ref{optimal_B_sir} shows the effect of biasing on $\mathrm{SIR}$ coverage probability for $\eta=0$ and $\eta=1$. For $\eta=0$ the optimal coverage probability is given by no biasing i.e., $B=\frac{B_F}{B_M}=1$ or $B=0$dB as shown by Fig. \ref{no_pc}. The SIR is independent of the load and depends on the density of BSs, path-loss, beamforming gain of the BSs, and the SIR threshold $\tau$, and when $\eta=0$ then all UEs transmit with the same power. By using biasing we force a UE to associate to a BS to which the UE connection is not strong and thus the SIR coverage probability reduces.
However, from Fig. \ref{with_pc} we see that when $\eta=1$ the optimal SIR is given by  $B=5$dB. With power control the transmit power of a UE is proportional to its distance from the BS and the transmit power of the UEs at the cell-edged is greater than the cell-centered UEs. Further, when the beamforming gain $N_M$ of the macro-tier is greater than the femto-tier then cell-edge UEs of macro cells transmit with large power and generate high interference. Therefore, offloading these cell-edged UEs to femto-tier improves the SIR coverage.    
\begin{figure*} 
			\begin{subfigure}{.45\textwidth}
			\centering
			\includegraphics[width=1\linewidth]{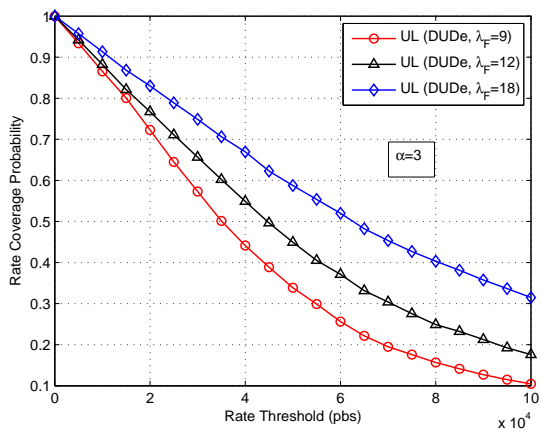}
			\caption{$\alpha=3$}
			\label{alpha3_DUDe}
		\end{subfigure}\hfill
	\begin{subfigure}{.45\textwidth}
		\centering
		\includegraphics[width=1\linewidth]{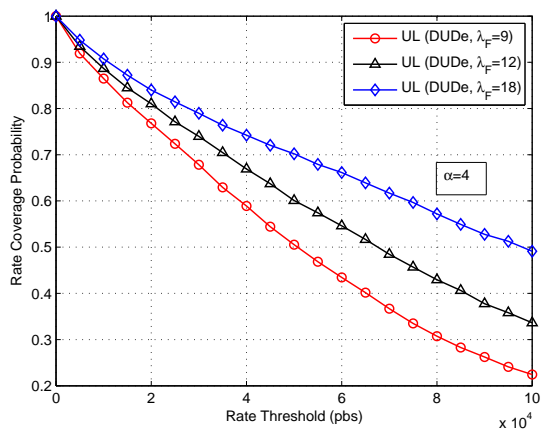}
		 \caption{$\alpha=4$}
		 \label{alpha4_DUDe}
	\end{subfigure}\hfill
	\begin{subfigure}{.45\textwidth}
		\centering
			\includegraphics[width=1\linewidth]{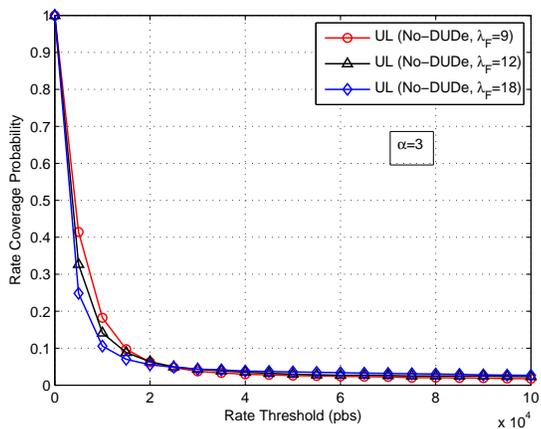}
			\caption{$\alpha=3$}
			\label{alpha3_No_DUDe}
		\end{subfigure}\hfill
	\begin{subfigure}{.45\textwidth}
		\centering
		\includegraphics[width=1\linewidth]{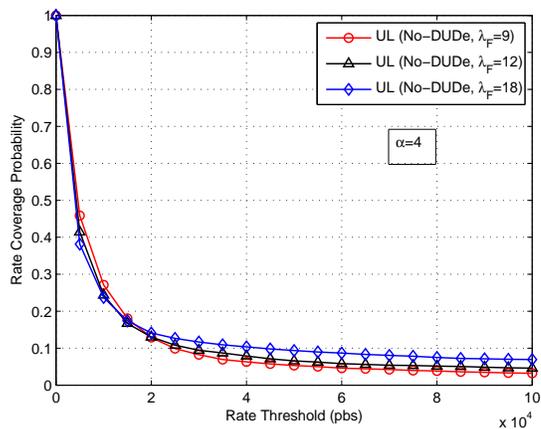}
		 \caption{$\alpha=4$}
		 \label{alpha4_No_DUDe}
	\end{subfigure}\hfill
	\caption{Effect of $\lambda_F$ and $\alpha$ on the rate coverage for DUDe and No-DUDe association $\left(\eta=1, \lambda_M=3, N_M=6, N_F=2, B=5, \lambda_U=3000 \right)$. }
\label{effect_alpha_lambdaf}
\vspace{-1.7em}
\end{figure*}
\begin{figure*} 
			\begin{subfigure}{.45\textwidth}
			\centering
			\includegraphics[width=1\linewidth]{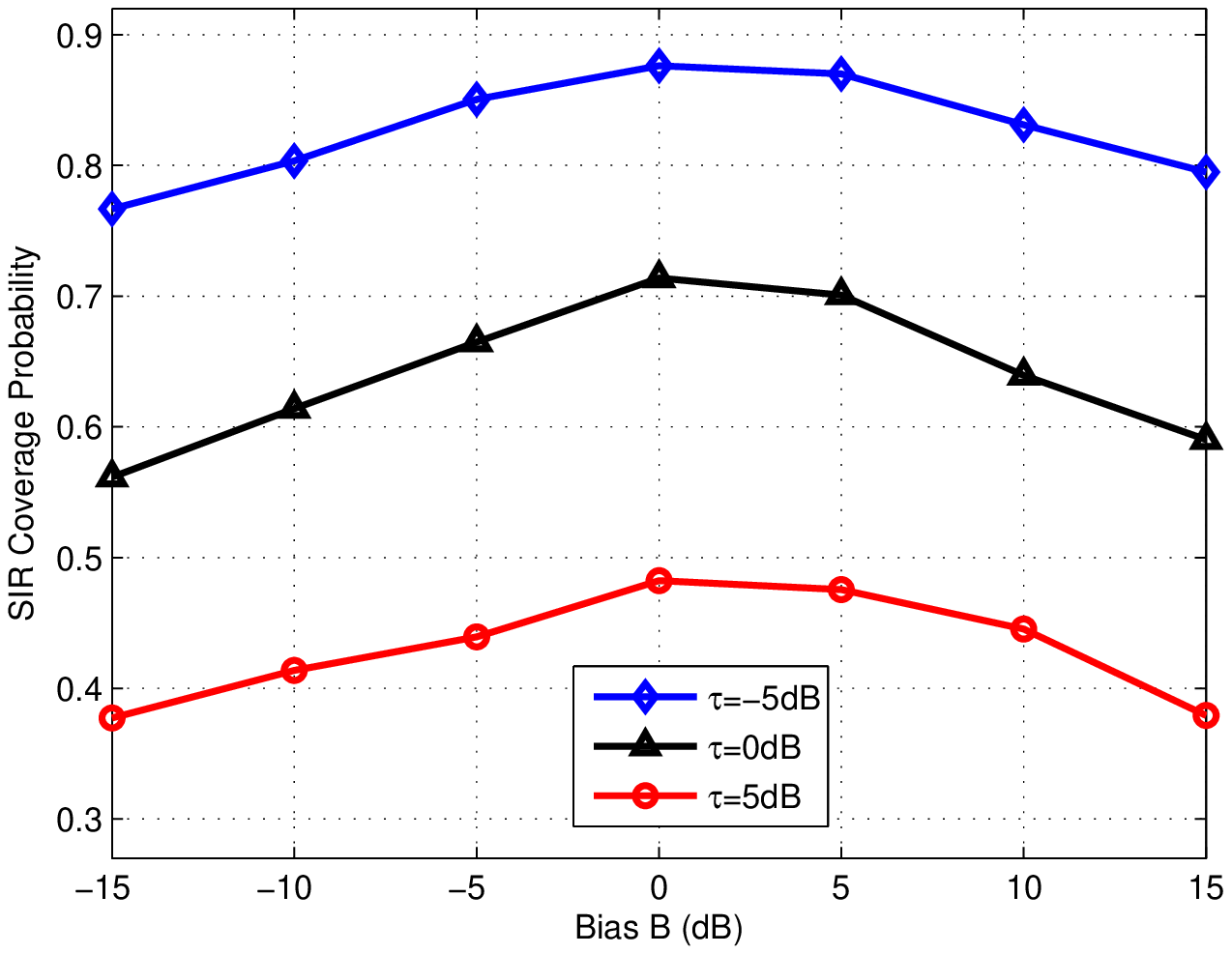}
			\caption{$\eta=0$}
			\label{no_pc}
		\end{subfigure}\hfill
	\begin{subfigure}{.45\textwidth}
		\centering
		\includegraphics[width=1\linewidth]{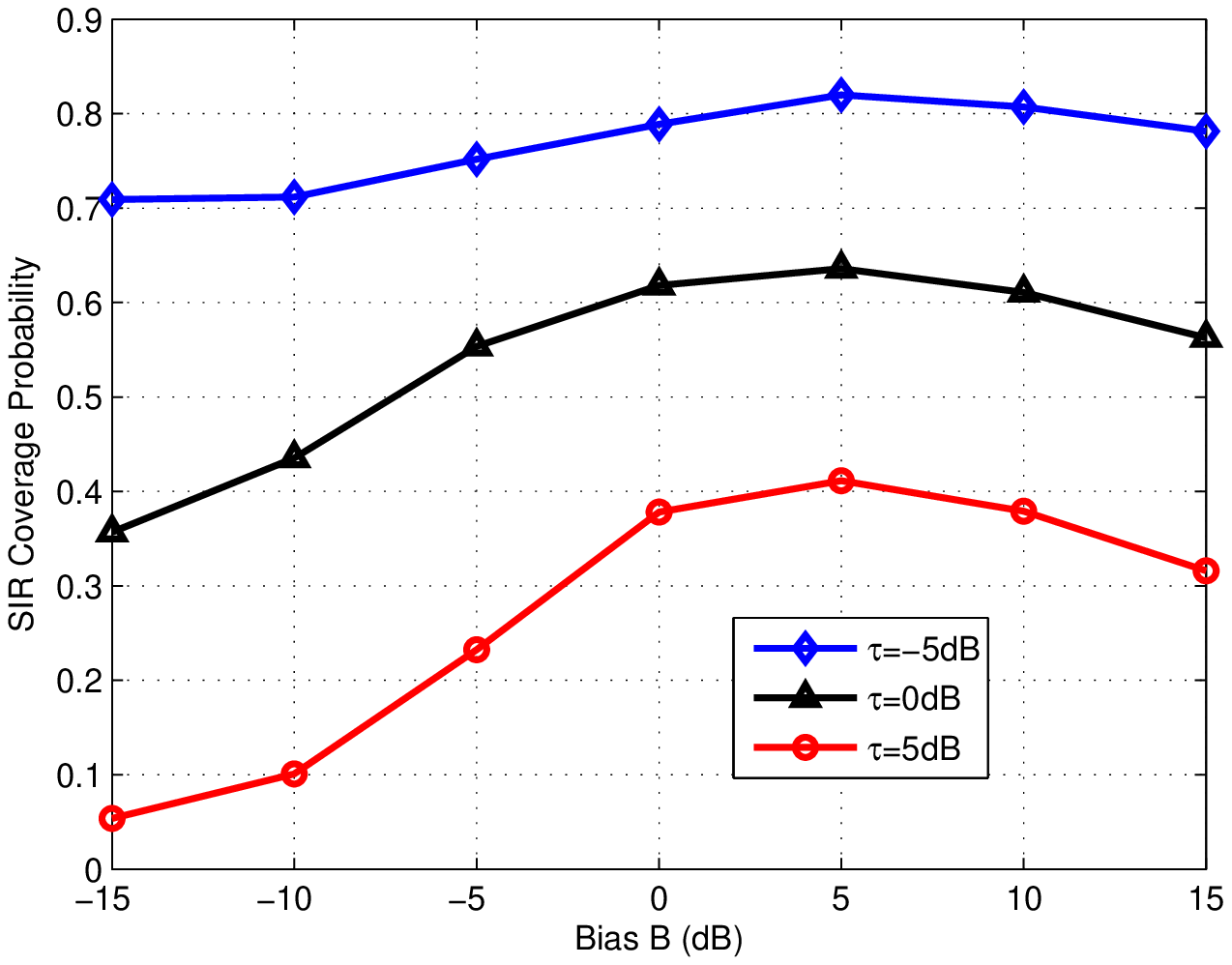}
		 \caption{$\eta=1$}
		 \label{with_pc}
	\end{subfigure}\hfill
\caption{Optimal bias for SIR coverage $\left(N_M=20, N_F=2, \lambda_M=2, \lambda_F=10, \lambda_U=3000, \alpha=3\right)$}
		\label{optimal_B_sir}
		\vspace{-1.7em}
\end{figure*}

\begin{figure*} 
			\begin{subfigure}{.45\textwidth}
			\centering
			\includegraphics[width=1\linewidth]{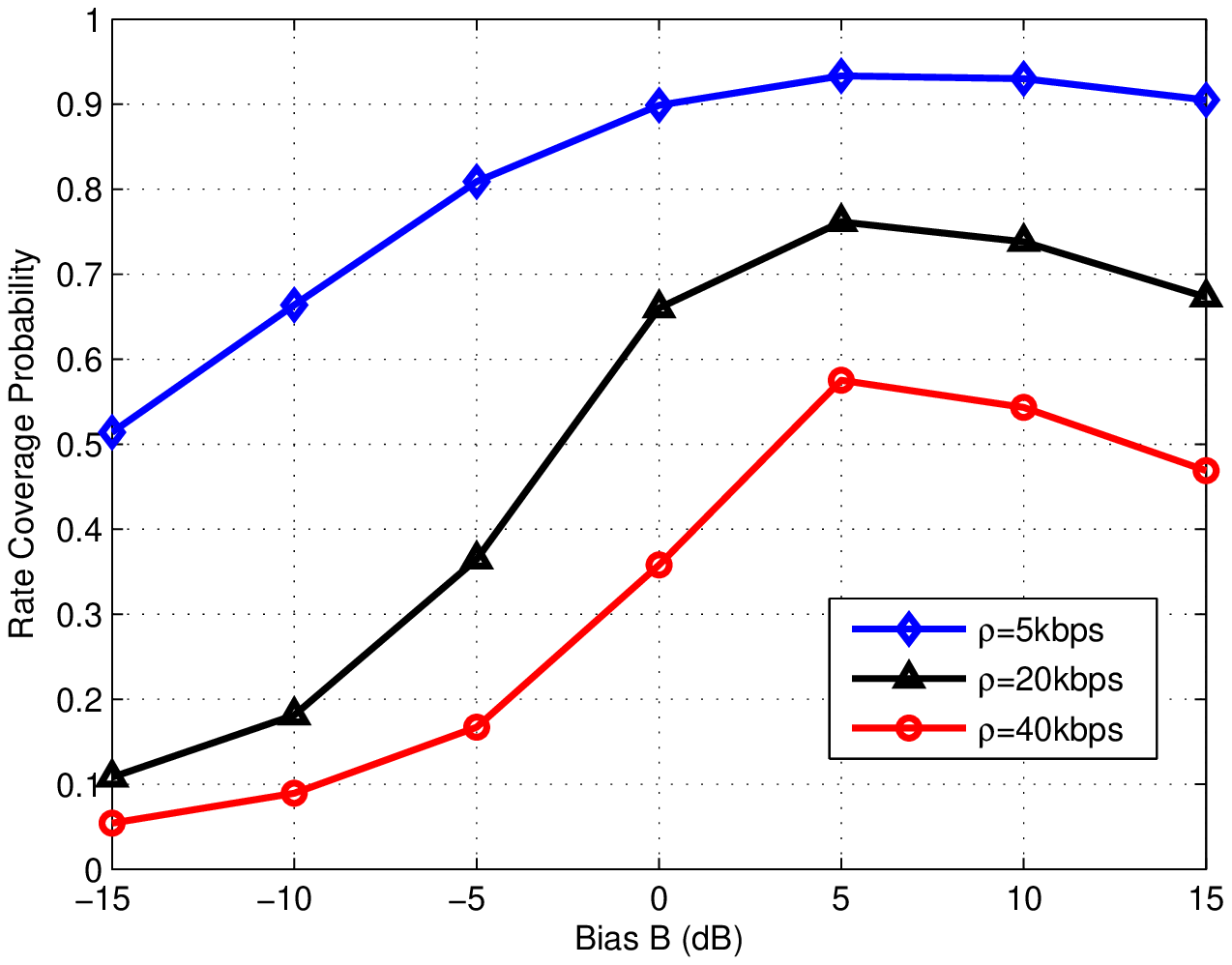}
			\caption{}
			\label{low_thr}
		\end{subfigure}\hfill
	\begin{subfigure}{.45\textwidth}
		\centering
		\includegraphics[width=1\linewidth]{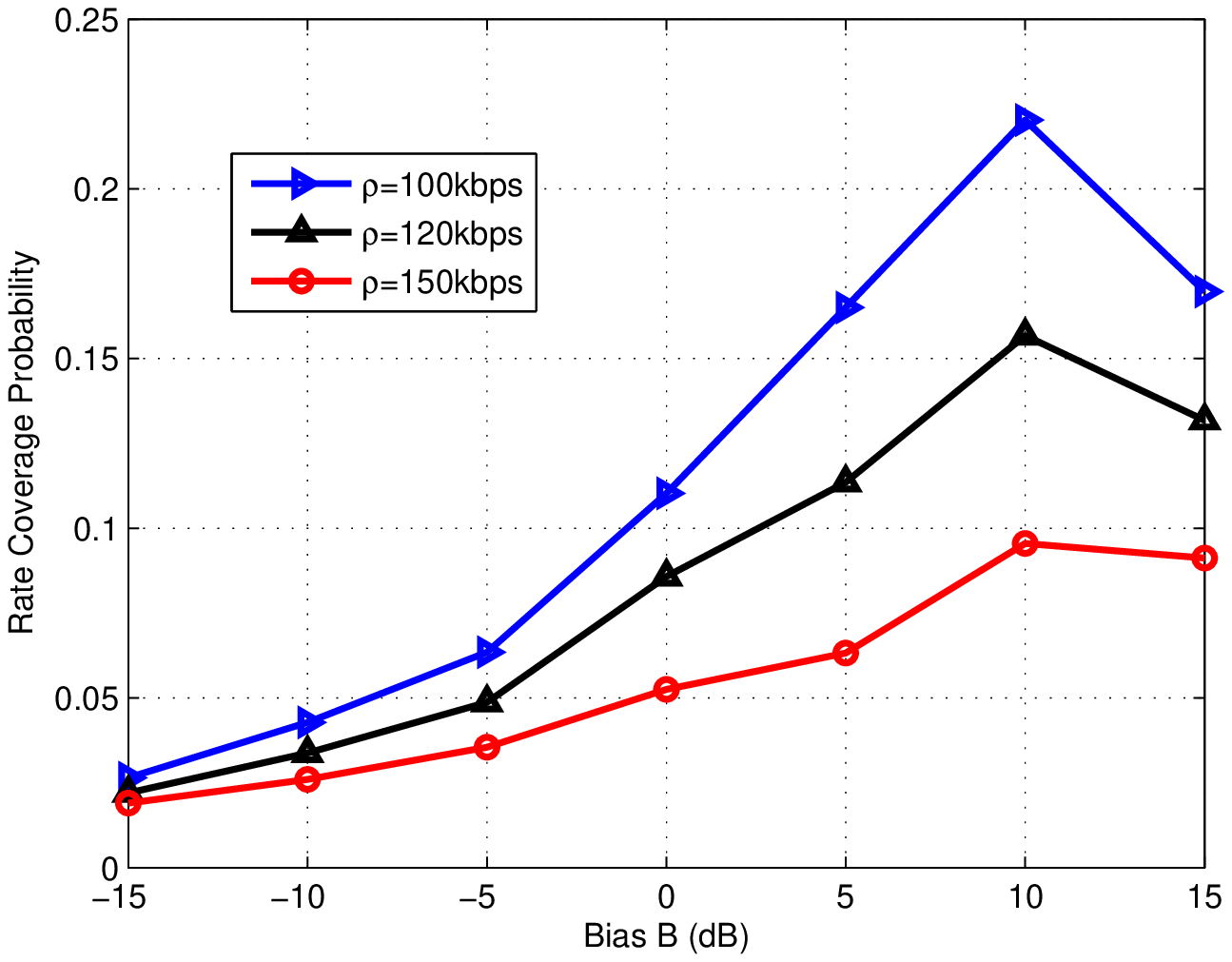}
		 \caption{}
		 \label{high_thr}
	\end{subfigure}\hfill
\caption{Optimal bias for rate coverage $\left(\eta=1, N_M=20, N_F=2, \lambda_M=2, \lambda_F=10, \lambda_U=3000, \alpha=3\right)$}
		\label{optimal_B_rate}
		\vspace{-1.7em}
\end{figure*}
The rate depends on the load and using appropriate value of biasing can maximize the rate coverage. To find the closed form expression for the optimal bias is too challenging in our system model. However, the optimal value can be found by a linear search. Fig. \ref{optimal_B_rate} shows the rate coverage against biasing for different rate threshold $\rho$. It is clear from the figure that the maximum rate coverage is given by offloading UEs towards femto-tier. However, this optimal bias value changes with $\rho$. When $\rho$ is small (corresponds to cell-edged UEs) then we need a small value of $B$ whereas for large $\rho$ (corresponds to cell-centered UEs) then we need more aggressive biasing as shown in Fig. \ref{low_thr} and Fig. \ref{high_thr}, respectively. One can observe  that for $B<0$ dB the rate coverage is very low. When the beamforming gain of the macro-tier is high, the coverage region is also large as compared to femto-tier and biasing towards macro-tier further increases the coverage region of MBSs  (see Fig. \ref{biasing}). Due to this enlargement of the coverage region, a large number of UEs becomes attached to the macro-tier and it becomes overloaded, which drops the rate coverage probability. In \cite{Geff_DUDE} it is shown that for SISO network the UL rate coverage is maximized when the association is based on minimum path-loss. However, for MIMO setup this is not the case. Comparing the UL offloading with the DL one can see that in the DL we need more aggressive offloading of UEs to the small cell, because there is a high disparity in both the transmit powers and beamforming gains of macro and femto BSs. Whereas, in the uplink the load imbalance is only due to the difference in the beamforming gain of the macro and femto BSs.

Fig. \ref{optimal_E_rate} shows the rate coverage against $\eta$. The power control fraction $\eta$ affects the cell-edged and cell-centered UEs differently. For cell-edged UEs the optimal rate coverage is given by the median value of $\eta$ is shown in Fig. \ref{low_thrE}. Whereas for cell-centered UEs the optimal rate coverage is given by without uplink power control $\eta=0$ as shown in Fig. \ref{high_thrE}. Therefore, based on the target rate threshold, the appropriate value of $\eta$ can be chosen to optimize the rate coverage. 
\begin{figure*} 
			\begin{subfigure}{.45\textwidth}
			\centering
			\includegraphics[width=1\linewidth]{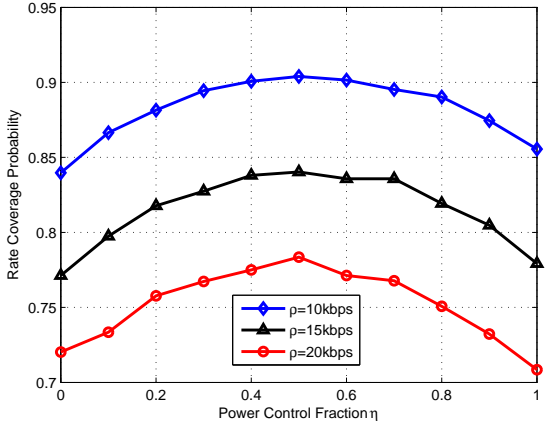}
			\caption{}
			\label{low_thrE}
		\end{subfigure}\hfill
	\begin{subfigure}{.45\textwidth}
		\centering
		\includegraphics[width=1\linewidth]{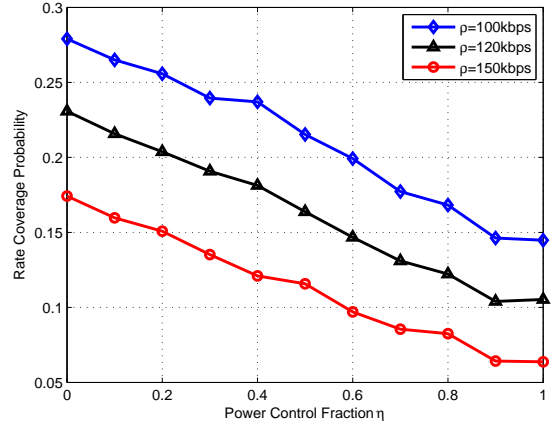}
		 \caption{}
		 \label{high_thrE}
	\end{subfigure}\hfill
\caption{Optimal $\eta$ for rate coverage $\left(N_M=4, N_F=2, \lambda_M=2, \lambda_F=10, \lambda_U=3000, \alpha=3\right)$}
		\label{optimal_E_rate}
		\vspace{-1.7em}
\end{figure*}
\section{Conclusion}
Using tools from stochastic geometry, the UL performance of a two-tier random network is studied, where the cell association is based on DL and UL decoupling. Multiple antennas are considered at BSs, and single antennas are considered at UEs. The position of the MBSs, FBSs, and UEs are modeled using a 2-D PPP. Maximal ratio combining has been used at the MBS and tractable analytical expressions have been derived for the rate and SIR coverage probability. It has been shown that the gain (in term of SIR coverage probability) of the decoupled DL and UL association over the coupled DL and UL association is maximum when both tiers have the same number of antennas (same beamforming gain). It has also been observed that in order to leverage the benefits of multiple antennas in DUDe network, offloading of UEs to small cell is required. 
A future extension might consider to study the performance of both DL and UL for MIMO network, and to find the optimal offloading strategy, which jointly optimizes both the DL and the UL performance. To investigate the potential gain offered by using multiple antennas BSs and using interference cancellation would also be an interesting research direction. 
\appendices
\section{}
\textit{Proof of Lemma 1:} The association criterion when a typical UE connects to a MBS both in the UL and DL is given by
		\begin{multline}
	\mathbb{P}\left[\left\{P_{M}\mathbb{E}\left\{\left\|{\bf{h}}_M\right\|^2\right\}X_M^{-\alpha_M}>P_{F}\mathbb{E}\left\{\left\|{\bf{h}}_F\right\|^2\right\}X_F^{-\alpha_F}\right\} \right. \bigcap  \\ \left. \left\{B_M  \mathbb{E}\left\{\left\|{\bf{h}}_M\right\|^2\right\}X_M^{-\alpha_M}> B_F \mathbb{E}\left\{\left\|{\bf{h}}_F\right\|^2\right\}X_F^{-\alpha_F}\right\} \right] \nonumber
	\label{case1}
	\end{multline}
where the expectation is over the channel fading. The $\mathbb{E}\left\{\left\|{\bf{h}}_M\right\|^2\right\}=N_M$, $\mathbb{E}\left\{\left|{h}_F\right|^2\right\}=N_F$, where $N_M$, and $N_F$ are the array gains and represent the number of antennas at a MBS and FBS respectively \cite{bruno_book}. $B_F$ and $B_M$ are bias factors toward femto-tier and  macro-tier respectively. 
The above equation can be equivalently written as
\begin{multline}
	\mathbb{P}\left[\left\{P_{M}N_M X_M^{-\alpha_M}>P_{F} N_F X_F^{-\alpha_F}\right\}\bigcap \right. \\ \left. \left\{B_M N_M  X_M^{-\alpha_M}> B_F N_F  X_F^{-\alpha_F}\right\}\right]. \nonumber
		\end{multline}
We know that $P_F < P_M$ and when $\frac{B_F}{B_M} \geq \frac{P_F}{P_M}$, it can be easily observed that the common region in the above equation is $N_M X_M^{-\alpha_M}> \frac{B_F}{B_M} N_F X_F ^{-\alpha_F}$, or equivalently $X_F > \left(\frac{B_F N_F}{B_M N_M}\right)^{\left(1/\alpha_F\right)} X_M^{\alpha_M/\alpha_F} $. Similarly, when $\frac{B_F}{B_M} < \frac{P_F}{P_M}$ then the common region is $X_F > \left(\frac{P_F N_F}{P_M N_M}\right)^{\left(1/\alpha_F\right)} X_M^{\alpha_M/\alpha_F} $ and the probability is calculated as
\begin{equation}
	\mathbb{P}\left(case 1\right)\hspace{-0.2em}=\hspace{-0.2em}\mathbb{P}\left(X_F > a \right)\hspace{-0.2em} =\hspace{-0.3em} \int_{0}^{\infty}\hspace{-0.5em}\left(1-F_{X_F}\left(a \right)\right)\hspace{-0.2em}f_{X_M}\left(X_M\right)\hspace{-0.2em}\mathrm{d_{X_M}}\hspace{-0.1em}, \nonumber
	\label{}
	\end{equation}
	where $a=\Upsilon_1^{1/\alpha_F} X_M^{\alpha_M/\alpha_F}$, while for  $\frac{B_F}{B_M} \geq \frac{P_F}{P_M}$, $\Upsilon_1=\frac{B_F N_F}{B_M N_M}$ and for $\frac{B_F}{B_M} < \frac{P_F}{P_M}$, $\Upsilon_1=\frac{P_F N_F}{P_M N_M}$. Using the null probability of 2D PPP, $F_{X_F}\left(X_M\right)=1-e^{-\pi\lambda_F X_M^2}$,  $f_{X_M}\left(X_M\right)=2 \pi \lambda_M X_M e^{-\pi\lambda_M X_M^2}$ and evaluating the integral we obtain  \eqref{prob_case1}.  
%
%
\section{}
\textit{Proof of Lemma 4:} The distance $X_K$ between a typical UE and the tagged BS is a random variable (r.v). The event $X_K>x$ is equivalent to the  event that $X_K > x$ given that a typical user is attached to the $K$th tier (proof follows similar method as in \cite{Han_Geff})
\begin{equation}
\mathbb{P}\left[X_K > x \right]=\mathbb{P}\left[X_K > x | n=K \right] =   \frac{\mathbb{P}\left[X_K > x, n=K\right]}{\mathcal{A}_K}, 
\label{dist_distr}
\end{equation} 
where ${\mathbb{P}\left[n=K\right]} = \mathcal{A}_K$ is the tier association probability given \eqref{assoc_prob_gen}. Let $Pr_K$ and $Pr_J$ be respectively the received power from a typical UE at the nearest $K$th tier and $J$th tier BS then the joint probability $ \mathbb{P}\left[X_K > x, n=K\right]$ is
\begin{multline}
\mathbb{P}\left[X_K > x, n=K\right] = \mathbb{P}\left[X_K > x, Pr_K(X_K)>Pr_j \right]  \\ =  \int_{x}^{\infty}\mathbb{P}\left[ B_K N_K X_K^{-\alpha_K}> B_J N_J X_J^{-\alpha_J}\right]f_{X_K}\left(X_K\right)\mathrm{d_{X_K}} \\
=\int_{x}^{\infty}\mathbb{P}\left[ X_J> \left(\frac{B_J N_J}{B_K N_K}\right)^{1/\alpha_J} X_K^{\alpha_K/\alpha_J}\right]f_{X_K}\left(X_K\right)\mathrm{d_{X_K}}. \nonumber
\end{multline}
From the 2D null probability of PPP we obtain, $\mathbb{P}\left[ X_J> \left(\frac{B_J N_J}{B_K N_K}\right)^{1/\alpha_J} X_K^{\alpha_K/\alpha_J}\right]= \\ \exp\left\{-\pi\lambda_J\left(\frac{B_J N_J}{B_K N_K}\right)^{2/\alpha_J} \left(X_K^{\alpha_K/\alpha_J}\right)^2\right\} $, and $f_{X_K}\left(X_K\right)=2\pi\lambda_K X_K \exp\left\{-\pi \lambda_K X_K^2\right\}$, and plugging in the above equation we get
\begin{multline}
\mathbb{P}\left[X_K > x, n=K\right]= 2\pi\lambda_K \int_x^{\infty}X_K \times \\ 
\exp\hspace{-0.2em}\left\{\hspace{-0.2em}-\pi \hspace{-0.2em}\left(\hspace{-0.2em}\lambda_K X_K^2\hspace{-0.2em}+\hspace{-0.2em}\lambda_J\hspace{-0.2em}\left(\frac{B_J N_J}{B_K N_K}\right)^{2/\alpha_J}\hspace{-0.2em}\right)\left(\hspace{-0.2em}X_K^{\alpha_K/\alpha_J}\hspace{-0.2em}\right)^2\hspace{-0.2em}\right\}\mathrm{d_{X_K}}. 
\label{joint_prob}
\end{multline}
By plugging \eqref{joint_prob} in \eqref{dist_distr} we get
\begin{multline}
\mathbb{P}\left[X_K > x \right] =\frac{2\pi\lambda_K}{\mathcal{A}_K}\int_x^{\infty}X_K \times  \\  \exp\hspace{-0.2em}\left\{\hspace{-0.3em}-\pi \hspace{-0.3em}   \left(\hspace{-0.3em}\lambda_K X_K^2\hspace{-0.2em}+\hspace{-0.2em}\lambda_J\hspace{-0.2em}\left(\hspace{-0.2em}\frac{B_J N_J}{B_K N_K}\hspace{-0.2em}\right)^{2/\alpha_J}\hspace{-0.2em}\right)\hspace{-0.3em}\left(\hspace{-0.2em}X_K^{\alpha_K/\alpha_J}\hspace{-0.2em}\right)^2\hspace{-0.2em}\right\}\hspace{-0.2em}\mathrm{d_{X_K}},
\end{multline}
which is the complementary cumulative distribution function (CCDF) of $X_K$, while it CDF is $F_{X_K}(x)=1-\mathbb{P}\left[X_K > x \right]$, and probability density function (pdf) is $f_{X_K}\left(x\right) = \frac{d}{dx}F_{X_K}\left(x\right)$, we obtain \eqref{dist_distribution}.
\begin{figure*}[!t]
\normalsize
\setcounter{MYtempeqncnt}{\value{equation}}
\setcounter{equation}{34}
\begin{multline}
\mathcal{L}_{I}\left(s\right) = \mathbb{E}_{I}\left[e^{-sI_K}\right]\overset{a}=\mathbb{E}_{g_i,X_{K_i},D_{K_i}}\left[\exp\left(-s\sum\limits_{i\in\Phi_{K}'\backslash u_0}g_iX_{{K}_i}^{{\alpha}_K\eta}D_{K_i}^{-{\alpha}_K}\right)\right]\mathbb{E}_{g_q,X_{J_q},D_{J_q}}\left[\exp\left(-s \sum\limits_{q\in\Phi_{J}'}g_q X_{J_q}^{\alpha_J \eta}D_{J_q}^{-\alpha_K}\right)\right] \\
\overset{b}=\mathbb{E}_{X_{K_i},D_{K_i}}\hspace{-0.25em}\left[\prod_{i \in \Phi_K' \backslash u_0}\hspace{-0.45em}\mathbb{E}_{g_i}\hspace{-0.15em} \left[\exp\left(-s g_i X_{{K}_i}^{{\alpha}_K\eta}D_{K_i}^{-{\alpha}_K}\right)\right]\hspace{-0.2em}\right] \hspace{-0.25em}\mathbb{E}_{X_{J_q},D_{J_q}}\hspace{-0.15em}\left[\prod_{q \in \Phi_J' }\hspace{-0.1em}\mathbb{E}_{g_q} \hspace{-0.15em}\left[\exp\left(-s g_q X_{J_q}^{\alpha_J \eta}D_{J_q}^{-\alpha_K} \right)\right]\hspace{-0.2em}\right]  \\
\overset{c}= \mathbb{E}_{D_{K_i}}\left[\prod_{i \in \Phi_K' \backslash u_0} \mathbb{E}_{X_{K_i}}\left[\frac{1}{1+s X_{{K}_i}^{{\alpha}_K\eta}D_{K_i}^{-{\alpha}_K}}\right]\right] \mathbb{E}_{D_{J_q}}\left[\prod_{q \in \Phi_J' }  \mathbb{E}_{X_{J_q}}\left[\frac{1}{1+sX_{J_q}^{\alpha_J \eta}D_{J_q}^{-\alpha_K}}\right]\right]\\
\overset{d}= \exp\hspace{-0.1em}\left(\hspace{-0.2em}-2\pi\lambda_K{\int_{X_K}^\infty\hspace{-0.3em}{\left(\hspace{-0.1em}1\hspace{-0.2em}-\hspace{-0.2em}\mathbb{E}_{X_{K_i}}\hspace{-0.2em}\left[\frac{1}{1+s X_{{K}_i}^{{\alpha}_K\eta} u^{-\alpha_K}}\hspace{-0.2em}\right]\right)\hspace{-0.1em}u\mathrm{du}}}\hspace{-0.2em}\right)  
\exp\hspace{-0.2em}\left(\hspace{-0.2em}-2\pi\lambda_J\hspace{-0.2em} {\int_{\left(\frac{N_J B_J X_K^{\alpha_K}}{N_K B_K}\right)^{1/\alpha_J}}^\infty\hspace{-0.2em}{\left(\hspace{-0.2em}1\hspace{-0.2em}-\hspace{-0.2em}\mathbb{E}_{X_{J_q}}\hspace{-0.2em}\left[\hspace{-0.2em}\frac{1}{1+s X_{J_q}^{\alpha_J \eta} v^{-\alpha_K}}\hspace{-0.2em}\right]\hspace{-0.2em}\right)\hspace{-0.2em}v\mathrm{dv}}}\hspace{-0.2em}\right) \\
\overset{e}=\exp\left(-2\pi\lambda_K {\int_{X_K}^\infty{\left(\int_0^u\frac{1}{1+s^{-1} X_{{K}_i}^{{-\alpha}_K\eta} u^{\alpha_K}}f_{X_{K_i}}\left(X_{K_i}\right)\mathrm{d_{X_{K_i}}}\right)u\mathrm{du}}}\right) \times \\ 
\exp\left(-2\pi\lambda_J {\int_{\left(\frac{N_J B_J X_K^{\alpha_K}}{N_K B_K}\right)^{1/\alpha_J}}^\infty{\left(\int_0^{\left(\frac{N_JB_Jv^{\alpha_K}}{N_KB_K}\right)^{1/\alpha_J}}\frac{1}{1+s^{-1} X_{J_q}^{-\alpha_J \eta} v^{\alpha_K}}f_{X_{J_q}}\left(X_{J_q}\right)\mathrm{d_{X_{J_q}}}\right)v\mathrm{dv}}}\right)  \\
\overset{f}=\exp\left(-\pi\lambda_K {\int_0^{X_K}s^{2/\alpha_K} X_{K_i}^{2\eta}{\left(\int_{s^{-2/\alpha_K}X_{K_i}^{2\left(1-\eta\right)}}^\infty   \frac{1}{1+Z_K^{{\alpha_K}/2}} \mathrm{d_{Z_K}}\right)f_{X_{K_i}}\left(X_{K_i}\right)\mathrm{d_{X_{K_i}}}}}\right) \times \\ 
\exp\left(-\pi\lambda_J  \int_{0}^{\zeta^{\frac{\alpha_J+\alpha_K}{\alpha_J^2}} X_K^{\alpha_K^2/\alpha_J^2}} s^{2/\alpha_K} X_{J_q}^{2\alpha_J\eta/\alpha_K}{\left(\int_{\zeta^{-2/\alpha_K}s^{-2/\alpha_K}X_{J_q}^{2\alpha_J\left(1-\eta\right)/\alpha_K}}^\infty\frac{1}{1+Z_J^{{\alpha_K}/2}}\mathrm{d_{Z_J}}\right)f_{X_{J_q}}\left(X_{J_q}\right)\mathrm{d_{X_{J_q}}}}\right)   \\
\overset{g}=\exp\left(\frac{-2\pi\lambda_K s}{\alpha_K-2} {\int_0^{X_K} X_{K_i}^{2-\alpha_K\left(1-\eta\right)}{}_2 \mathrm{F}{}_1\left[1,1-\frac{2}{\alpha_K},2-\frac{2}{\alpha_K}; -s X_{K_i}^{-\alpha_K\left(1-\eta\right)}\right]f_{X_{K_i}}\left(X_{K_i}\right)\mathrm{d_{X_{K_i}}}}\right) \times \\ 
\exp\left(\frac{-2\pi\lambda_J \zeta^{1-2/\alpha_K} s}{\alpha_K-2} \int_{0}^{\zeta^{\frac{\alpha_J+\alpha_K}{\alpha_J^2}} X_K^{\alpha_K^2/\alpha_J^2}} X_{J_q}^{2\alpha_J/\alpha_K-\alpha_J\left(1-\eta\right)}{}_2 \mathrm{F}{}_1\left[1,1-\frac{2}{\alpha_K},2-\frac{2}{\alpha_K}; -s \zeta X_{J_i}^{-\alpha_J\left(1-\eta\right)} \right] f_{X_{J_q}}\left(X_{J_q}\right)\mathrm{d_{X_{J_q}}}\right) \\
\label{laplace_SIR}
\end{multline}
\setcounter{equation}{\value{MYtempeqncnt}}
\hrulefill
\vspace*{4pt}
\end{figure*}
\section{}
\textit{Proof of Theorem 1:} We consider multiple antenna BSs and use MRC combining, therefore, the signal channel  follows $\mathrm{Gamma}\left(N_K,1\right)$ , whereas the interfering channel still follows exponential distribution \cite{geff_adhoc}. Let $X_K$ be the distance between a typical UE and its serving $K$th tier BS then the coverage probability $\mathcal{C}_K$ for a given threshold can be written as
\begin{multline}
 \mathcal{C}_K \left(\tau_K\right)\triangleq \mathbb{E}_{X_K}\left[\mathbb{P}\left[\mathrm{SIR}_{X_K}>\tau|X_K\right]\right] \\
=\int_{0}^{\infty}{\mathbb{P}\left[\mathrm{SIR}_{X_K}>\tau_K|X_K\right]f_{X_K}\left(X_K\right)\mathrm{d_{X_K}}} \\ 
=\frac{2\pi \lambda_K}{\mathcal{A}_K}\int_{0}^\infty \mathbb{P}\left[\mathrm{SIR}_{X_K} > \tau_K | X_K \right] X_K \times \\ \exp\hspace{-0.2em}\left\{\hspace{-0.2em}-\pi \hspace{-0.2em} \left(\hspace{-0.2em}\lambda_K X_K^2+\lambda_J\left(\frac{B_J N_J}{B_K N_K}\hspace{-0.2em}\right)^{2/\alpha_J}\hspace{-0.2em}X_K^{2\left(\alpha_K/\alpha_J\right)}\hspace{-0.2em}\right)\right\} \hspace{-0.2em}\mathrm{d_{X_K}},
\label{covm1}
\end{multline}  
where the last expression follows by plugging $f_{X_K}\left(.\right)$ from \eqref{dist_distribution}. 
For interference limited network the $\mathbb{P}\left[\mathrm{SIR}_{X_K} > \tau_K | X_K \right]$ can be written as
\begin{multline}
\mathbb{P}\left[\mathrm{SIR}_{X_K} > \tau_K | X_K \right] \overset{1}= \\
\mathbb{P}\hspace{-0.3em}\left[\hspace{-0.3em}\frac{\left\|{\bf{h}}_{{K}_0}\right\|^2X_{{K}}^{{\alpha_K\left(\eta-1\right)}}}{\sum\limits_{i\in\Phi_{K}'\backslash u_0}\hspace{-0.2em}\left|\frac{{\bf{h}}_{{K}_0}^H {\bf{h}}_{{K}_i}}{\left\|{\bf{h}}_{{K}_0}\right\|}\right|^2\hspace{-0.4em} X_{{K}_i}^{{\alpha}_K\eta}\hspace{-0.1em}D_{K_i}^{-{\alpha}_K}\hspace{-0.3em}  +\hspace{-0.4em}\sum\limits_{q\in\Phi_{J}'}\hspace{-0.2em}\left|\hspace{-0.2em}\frac{{\bf{h}}_{{K}_0}^H\hspace{-0.3em} {\bf{h}}_{{J}_q}}{\left\|{\bf{h}}_{{K}_0}\right\|}\hspace{-0.2em}\right|^2\hspace{-0.3em}X_{J_q}^{\alpha_J \eta}D_{J_q}^{-\alpha_K} }\hspace{-0.3em} \right] \\ 
= \mathbb{P}\left[ \left\|{\bf{h}}_{{K}_0}\right\|^2  >   s I| X_K \right] \overset{2}= \mathbb{E}_I \left[ \sum_{n=0}^{N_K-1}  s^n I^n e^{-sI}\right]
 \\ \overset{3}=\sum_{n=0}^{N_K-1} \frac{s^n \left(-1\right)^n}{n!}  \frac{\mathrm{d}^n}{\mathrm{d}s^n}\mathcal{L}_I\left(s\right)
\label{mcov_laplace}
\end{multline}
where  $s=\tau_K X_{{K}}^{{\alpha_K\left(1-\eta\right)}}$,  $I=\sum\limits_{i\in\Phi_{K}'\backslash u_0}g_iX_{{K}_i}^{{\alpha}_K\eta}D_{K_i}^{-{\alpha}_K}  +\sum\limits_{q\in\Phi_{J}'}g_q X_{J_q}^{\alpha_J \eta}D_{J_q}^{-\alpha_K}$, $g_i=\left|\frac{{\bf{h}}_{{K}_0}^H {\bf{h}}_{{K}_i}}{\left\|{\bf{h}}_{{K}_0}\right\|}\right|^2$ , and $g_q= \left|\frac{{\bf{h}}_{{K}_0}^H {\bf{h}}_{{J}_q}}{\left\|{\bf{h}}_{{K}_0}\right\|}\right|^2$. $\left(1\right)$ follows due to the definition of $\mathrm{SIR}$, 
$\left(2\right)$ follows due to $\textbf{h}_{K_0} \sim \mathrm{Gamma}\left(N_K,1\right) $, and $\left(3\right)$ follows due to the Laplace transform identity $\mathcal{L}\left\{I^n e^{-sI}\right\}=\left(-1\right)^n \frac{\mathrm{d}^n}{{\mathrm{d}s}^n}\mathcal{L}_I\left(s\right)$.

Now, we find the Laplace transform $\mathcal{L}_I\left(s\right)$ of the interference, which  can be written as in \eqref{laplace_SIR}, available at the top of this page,
where $\left(a\right)$ follows because the interference is from both the femto-tier and macro-tier's scheduled users, and also they are independent of each other, $\left(b\right)$ is due to the i.i.d assumption of $g_i$ and $g_q$,  and  both $g_i$ and $g_q$ are further independent of point process $\Phi$, $\left(c\right)$ is due to $g_i\sim \exp\left(1\right)$ and $g_q\sim \exp\left(1\right)$, $\left(d\right)$ follows due to the probability generating functional (PGFL) of PPP, which convert an expectation over a point process to an integral $\mathbb{E}\left[\prod_{x\in \Phi}f\left(x\right)\right] = \exp\left(-\lambda \int_{\mathbb{R}^2}\left(1-f\left(x\right)\right)\mathrm{dx}\right)$. It is important to mentioned that in step $\left(d\right)$ the integration limits in both of the integrals are not the same, i.e., the closest  interferer of the serving tier can be at a distance $X_K$ from the typical BS, whereas the closest interferer of the non serving tier should be at a distance $\left(\frac{N_J B_J X_K^{\alpha_K}}{N_K B_K}\right)^{1/\alpha_J}$, as mentioned in Remark \ref{remark2}. In step $\left(e\right)$, we apply the inner expectations, which are required for the power control. Again, it is important to note that the distance distribution of an interfering UE to its serving BS is different from that of the typical UE to the tagged BS and for different tiers the distance distribution of an interfering UE to its serving BS are also different, as mentioned in Remark \ref{remark1}. This difference can be seen by the limits of the inner integral in both exponential.    $\left(f\right)$ follows by changing the integration order, putting $\zeta = \frac{N_J B_J}{N_K B_K}$ and some manipulations while $\left(g\right)$ follows by writing the inner integrals as Gauss hypergeometric functions \cite{book_integral}. We combine the two exponential and plugging it in \eqref{mcov_laplace} and then \eqref{mcov_laplace} into \eqref{covm1}. Thus the proof is completed.

\section*{Acknowledgment}
The authors gratefully acknowledge the excellent feedback provided by the anonymous reviewers.

\end{document}